\newtheorem{definition}{Definition}
\newtheorem{theorem}{Theorem}
\newtheorem{proposition}{Proposition}
\DeclareMathOperator*{\argmax}{arg\,max}
\newcommand{\svtfull}{sparse vector technique\xspace}
\newcommand{\dpfull}{differential privacy\xspace}
\newcommand{\etal}{\textit{et al.}\xspace}
\newcommand{\eg}{\textit{e.g.}\xspace}
\newcommand{\ie}{\textit{i.e.}\xspace}
\title{Unleash the Power of Ellipsis: Accuracy-enhanced Sparse Vector Technique with Exponential Noise}
\author{ 
{Yuhan Liu}\\
	Renmin University of China\\
	\texttt{liuyh2019@ruc.edu.cn} \\
	\And
    {Sheng Wang} \\
	Alibaba Group\\
	\texttt{sh.wang@alibaba-inc.com} \\
 	\And
    {Yixuan Liu} \\
	Remin University of China\\
	\texttt{liuyixuan@ruc.edu.com} \\
 	\And
 {Feifei Li} \\
	Alibaba Group\\
	\texttt{lifeifei@alibaba-inc.com} \\
 	\And
 {Hong Chen} \\
	Renmin university of China\\
	\texttt{chong@ruc.edu.cn} \\
}
\begin{document}
\maketitle

\begin{abstract}
	The Sparse Vector Technique~(SVT) is one of the most fundamental tools in differential privacy~(DP).
It works as a backbone for adaptive data analysis by answering a sequence of queries on a given dataset, and gleaning useful information in a privacy-preserving manner.
Unlike the typical private query releases that directly publicize the noisy query results, SVT is less informative---it keeps the noisy query results to itself and only reveals a binary bit for each query, indicating whether the query result surpasses a predefined threshold.
To provide a rigorous DP guarantee for SVT, prior works in the literature adopt a \textit{conservative} privacy analysis by assuming the direct disclosure of noisy query results as in typical private query releases.
This approach, however, hinders SVT from achieving higher query accuracy due to an overestimation of the privacy risks, which further leads to an excessive noise injection using the Laplacian or Gaussian noise for perturbation.
Motivated by this, we provide a new privacy analysis for SVT by considering its less informative nature.
Our analysis results not only broaden the range of applicable noise types for perturbation in SVT, but also identify the exponential noise as optimal among all evaluated noises (which, however, is usually deemed non-applicable in prior works).
The main challenge in applying exponential noise to SVT is mitigating the sub-optimal performance due to the bias introduced by noise distributions.
To address this, we develop a utility-oriented optimal threshold correction method and an appending strategy, which enhances the performance of SVT by increasing the precision and recall, respectively.
The effectiveness of our proposed methods is substantiated both theoretically and empirically, demonstrating significant improvements up to $50\%$ across evaluated metrics.
\end{abstract}


\section{Introduction}
\begin{figure}[tbp!]
    \centering
    \includegraphics[width=0.7\linewidth]{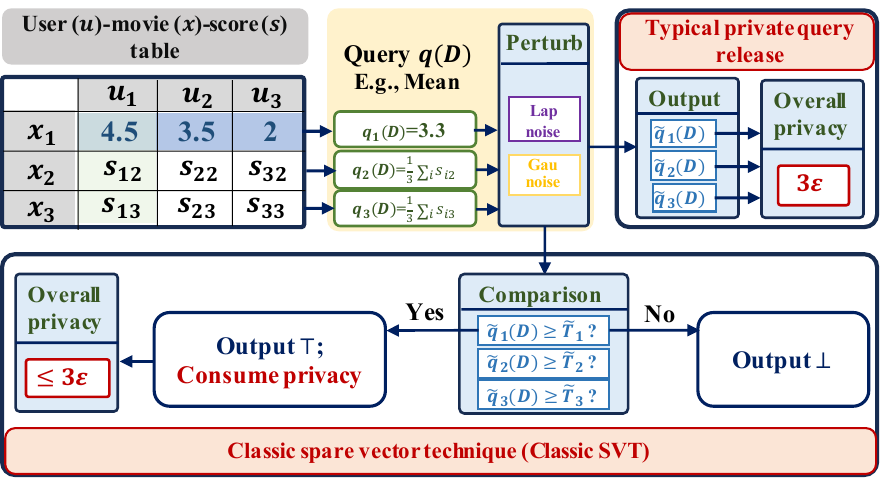}
    \caption{Classic sparse vector technique~(SVT) and typical private query release on publicizing movies with the top-c highest scores. Classic SVT: after perturbing the movie score $q_i(D)$ and the predefined threshold $T_i$ with either Laplacian or Gaussian noise, SVT compares the noisy score $\tilde{q}_i(D)$ and the noisy threshold $\tilde{T}_i$. If $\tilde{q}_i(D)\geq \tilde{T}_i$, $\top$ is output. Otherwise, $\bot$ is output. Typical private query releases: all $\tilde{q}_{i}$ are released and sorted to obtain the noisy top-c movies.}
    \label{fig: comparison_svt_typical}
\end{figure}
The \svtfull~(SVT)~\cite{dwork2009complexity, dwork2014algorithmic} is one of the most fundamental algorithmic tools in DP, serving as a backbone for adaptive data analysis in many applications, such as feature selection~\cite{bassily2018model}, stream data analysis~\cite{hasidim2020adversarially}, and top-$c$ selection~\cite{lyu2017understanding}.
At a high level, SVT answers a sequence of queries on a given dataset and extracts useful information in a privacy-preserving manner.
Unlike typical private query releases that directly reveal the noisy query results, \textit{SVT discloses less information}---it outputs only a binary bit for each query, indicating whether the query result surpasses a predefined threshold~(Cf. Figure~\ref{fig: comparison_svt_typical}).
The major advantage of SVT over the typical private query releases is that only the positive queries~(those whose results are above the predefined threshold) consume privacy, thus potentially allowing an infinite number of queries on a single dataset.

Despite the widespread usage of SVT, it still suffers from a low query accuracy~\cite{zhu2020improving}
due to a conservative privacy analysis, resulting in a sub-optimal noise selection.
Specifically, prior works in the literature na\"ively approximate the privacy budget consumption of positive queries in SVT with that of differentially private query result releases~(Cf. Section~\ref{subsec: noise-distribution}).
However, since SVT discloses less information in each query~(\ie, only a binary bit instead of a noisy real-valued query result), this conservative privacy analysis approach overestimates its privacy risk.
Consequently, SVT is constrained to use noise with large variance as in typical query releases for threshold and query perturbation, such as the Laplacian~\cite{lyu2017understanding} or Gaussian~\cite{zhu2020improving} noise. Such noise types then lead to an excessive noise injection and a sub-optimal query accuracy.

Motivated by this, our first contribution is providing a new privacy analysis (Cf. Theorem~\ref{theo: privacy-svt-lips}), which captures the \textit{less informative nature of SVT} by precisely computing the privacy loss of the private comparison between thresholds and query results. Informally speaking, our analysis demonstrates that any noise whose cumulative distribution function satisfies the Lipschitz condition can be applied to SVT, allowing the use of noise types with smaller variance under the same privacy level.
Based on our analysis, we identify the exponential noise, previously deemed non-applicable to SVT, as the optimal choice among all considered noise types. Exponential noise satisfies Theorem~\ref{theo: privacy-svt-lips} tightly and benefits from a smaller variance compared to the others~(Cf. Figure~\ref{fig: variance}).

However, applying the exponential noise to SVT is challenging due to the bias introduced to the query results.
The na\"ive bias correction method, which subtracts the expected value of random noise from the noisy query results, yields a sub-optimal performance.
This is primarily because SVT requires correcting each noisy query result individually, whereas the na\"ive correction method is designed for correcting the aggregation of noisy query results~\cite{erlingsson2014rappor}.
When random noises are aggregated~(as in the aggregation of noisy query results), their summed value approximates the expected value. In contrast, individual noisy query results are less concentrated around their expectations, resulting in less accurate corrections (Cf. Section~\ref{subsec: correction_motivation}).

To address this challenge, our second contribution involves developing an optimal threshold correction method\footnote{As we further discussed in Section~\ref{subsec: overview}, correcting the threshold is equivalent to correcting the noisy query results in SVT.} and an appending strategy, which effectively mitigate introduced bias and significantly enhance query accuracy.
Instead of correcting the bias of each query result individually, our method focuses on optimizing the precision of the output binary bit vector.
Specifically, we search for an optimal correction term that maximizes the probability of SVT distinguishing every true positive query from the true negative ones~(Cf. Equation~\ref{eq: success-rate}).
Additionally, we introduce an appending strategy to complement the threshold correction method and boost the recall of SVT. Concretely, each query with a noisy negative output is appended to the end of the query queue for another round of querying.
The intuition behind this is that, true positive queries are more likely to be identified as positive queries by SVT compared to the true negative ones after multiple rounds of querying, thus increasing the recall.

In conclusion, our main contributions are summarized as follows:
\begin{enumerate}[leftmargin=2em]
    \item We provide a new privacy analysis that precisely captures the privacy loss of SVT. This analysis not only broadens the noise choices for query perturbation but also identifies exponential noise as the optimal choice among all considered ones.
    \item Building on our privacy analysis, we propose a new SVT variant with exponential noise, where an optimal threshold correction method and an appending strategy are developed to mitigate the introduced bias and boost the query accuracy.
    \item Comprehensive experiments conducted on both synthetic datasets and real-world datasets demonstrate that our proposed methods significantly increase the query accuracy of SVT by $2\%\sim50\%$ across evaluated metrics.
\end{enumerate}

\section{Preliminaries}\label{sec: preliminary}
In this section, we first recall the definition~(Cf. Definition~\ref{def: DP}), mechanisms~(Cf. Equation~\ref{eq: dp_mechanism}), and properties of the \dpfull~(\ie, post-processing~(Cf. Proposition~\ref{prop: post_processing}) and composition~(Cf. Theorem~\ref{theo: composition})).
Then, we proceed to describe the \svtfull algorithm~(Cf. Algorithm~\ref{alg: basic-svt}) and some of its predominant variants.

\subsection{Differential Privacy}\label{subsec: background_dp}
The \dpfull~(DP) was first introduced by Dwork \etal~\cite{dwork2006differential} and recently became a \textit{de facto} standard for privacy protection. Roughly speaking, DP ensures that the likelihood of any specific output from a random algorithm varies little with sensitive neighboring inputs. The formal definition is as follows:
\begin{definition}[Differential Privacy~\cite{dwork2006differential}]\label{def: DP}
A randomized algorithm $\mathcal{M}:\mathbb{D}\rightarrow\mathbb{R}$ satisfies $(\varepsilon,\delta)$-differential privacy if for any two neighboring datasets $D$ and $D^{\prime}$, and any output $o\subseteq\mathbb{O}$
$$\Pr[\mathcal{M}(D)\in o]\leq e^{\varepsilon}\cdot\Pr[\mathcal{M}(D^{\prime})\in o]+\delta.$$
\end{definition}
When $\delta=0$, we say that $\mathcal{M}$ satisfies pure DP~(denoted by $\varepsilon$-DP). Otherwise, it guarantees an approximate DP.

One of the typical methods of achieving DP is through the additive-noise mechanism, which corrupts and releases query results with additive noise randomly drawn from certain distributions~\cite{geng2019optimal}. That is, given a dataset $D$, to guarantee DP, a randomized algorithm $\mathcal{M}$ releases:
\begin{equation}\label{eq: dp_mechanism}
    \mathcal{M}(D) = q(D) + X,
\end{equation}
wherein $q(D)$ is the result of a query carried on $D$ and $X$ is the additive noise drawn from a probability distribution $\mathcal{N}$. The noise scale is calibrated to the sensitivity of $q(D)$ defined in the following:
\begin{definition}[$l_p$-sensitivity]
    For a real-valued query $q:\mathbb{D}\rightarrow \mathbb{R}$, the $l_p$-sensitivity of $q$ is defined as:
    $$\Delta_p=\max\limits_{D,D^{\prime}\in\mathbb{D}}\Vert q(D)-q(D^{\prime})\Vert_p,$$
    where $\Vert \cdot \Vert_{p}$ denotes the $l_p$ norm and $D,D^{\prime}$ is a pair of neighboring datasets that differ by one element.
\end{definition}

The \textit{Laplace mechanism} is one of the most classic additive-noise mechanisms that achieves $\varepsilon$-DP by letting $X$ follow a Laplace distribution centered at $0$ with a noise scale $b=\frac{\Delta_1}{\varepsilon}$, denoted as $X \sim Lap\left(\frac{\Delta_1}{\varepsilon}\right)$, where $\Delta_1$ is the sensitivity of the query. Meanwhile, the \textit{Gaussian mechanism} mechanism typically guarantees $(\varepsilon,\delta)$-DP by drawing the noise $X$ from a normal distribution $N(0,\sigma^{2})$, where $\sigma$ is proportional to $\frac{\Delta_2}{\varepsilon}$.

DP provides a rigorous privacy guarantee mathematically and is straightforward to achieve. Additionally, there are other two properties that contribute to the wide-ranging applications of DP. Firstly, it is immune to \textit{post-processing}, which is formally described as follows:
\begin{proposition}[Post-processing\cite{dwork2014algorithmic}]\label{prop: post_processing}
Let $\mathcal{M}:\mathbb{D}\rightarrow \mathcal{R}$ be a randomized algorithm that is $(\varepsilon,\delta)$ differentially private. Let $f:\mathbb{R}\rightarrow\mathbb{R}$ be an arbitrary randomized mapping. Then $f\circ\mathcal{M}$ is $(\varepsilon,\delta)$-differentially private.
\end{proposition}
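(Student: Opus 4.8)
The plan is to reduce to the case of a deterministic post-processing map and then verify the defining inequality directly. First I would note that any randomized mapping $f$ can be represented as $f(r) = h(r, Z)$, where $h$ is a deterministic (measurable) map and $Z$ is an auxiliary source of randomness that is independent of the internal coins of $\mathcal{M}$. Hence it suffices to establish the claim when $f$ is deterministic: the general case then follows by conditioning on $Z = z$, applying the deterministic result to the map $h(\cdot, z)$ for each fixed $z$, and averaging the resulting inequalities over the distribution of $Z$. Because the right-hand side of the DP inequality is affine in the probabilities, both the multiplicative factor $e^{\varepsilon}$ and the additive term $\delta$ pass through this expectation unchanged.

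For the deterministic case, fix a deterministic $f$ and an arbitrary (measurable) event $S$ in the output space of $f$. Let $T = f^{-1}(S) = \{\, r : f(r) \in S \,\}$, which is an event in the output space $\mathcal{R}$ of $\mathcal{M}$. The key observation is the preimage identity $\Pr[f(\mathcal{M}(D)) \in S] = \Pr[\mathcal{M}(D) \in T]$, and likewise with $D'$ in place of $D$. Applying the $(\varepsilon,\delta)$-DP guarantee of $\mathcal{M}$ to the event $T$ gives
$$\Pr[f(\mathcal{M}(D)) \in S] = \Pr[\mathcal{M}(D) \in T] \le e^{\varepsilon}\,\Pr[\mathcal{M}(D') \in T] + \delta = e^{\varepsilon}\,\Pr[f(\mathcal{M}(D')) \in S] + \delta.$$
Since $S$ was arbitrary, $f \circ \mathcal{M}$ satisfies $(\varepsilon,\delta)$-DP.

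The only point that needs to be handled with any care is the reduction from randomized to deterministic maps: one must justify that the representation $f(r) = h(r, Z)$ exists with $Z$ genuinely independent of $\mathcal{M}(D)$ and $\mathcal{M}(D')$, so that conditioning on $Z = z$ leaves the DP inequality intact for each fixed $z$, and that the subsequent averaging is legitimate. Everything else — in particular measurability of $T$, which follows from measurability of $f$ (resp. $h(\cdot, z)$) — is routine, so I do not anticipate further obstacles.
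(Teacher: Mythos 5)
The paper does not prove this proposition; it states it as a known result and cites Dwork and Roth's textbook. Your argument is correct and is essentially the standard textbook proof: the deterministic case is handled via the preimage identity, and the randomized case is reduced to it by conditioning on the post-processor's auxiliary randomness and averaging. Dwork and Roth phrase the reduction as ``a randomized mapping is a convex combination of deterministic functions, and a convex combination of DP mechanisms is DP,'' which is the same observation as your $f(r)=h(r,Z)$ decomposition; your version is a bit more explicit about the independence of $Z$ from $\mathcal{M}$'s coins, which is the one hypothesis one must make precise. No gaps.
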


Secondly, different differentially private building blocks can be elegantly combined for designing more sophisticated algorithms, as described by the following theorem:
\begin{theorem}[Sequential Composition\cite{dwork2006differential}]\label{theo: composition}
    Let $\mathcal{M}_i:\mathbb{N}^{\lvert \chi\rvert}\rightarrow \mathcal{R}_{i}$ be an $(\varepsilon_i,\delta_i)$-differential privacy algorithm for $i\in [k]$. If $\mathcal{M}_k:\mathbb{N}^{\lvert \chi\rvert}\rightarrow\prod\limits_{i=1}^{k}\mathcal{R}_i$ is defined to be $\mathcal{M}_{[k]}(x)=(\mathcal{M}_1(x),\ldots,\mathcal{M}_k(x))$, then $\mathcal{M}_{[k]}$ is $(\sum\limits_{i=1}^{k}\varepsilon_i,\sum\limits_{i=1}^{k}\delta_i)$-differentially private.
\end{theorem}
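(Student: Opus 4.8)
The plan is to prove the statement by induction on $k$, reducing everything to the two‑mechanism case. Up to reordering coordinates, $(\mathcal{M}_1,\dots,\mathcal{M}_k)$ is the pair $(\mathcal{M}_{[k-1]},\mathcal{M}_k)$ with $\mathcal{M}_{[k-1]}:=(\mathcal{M}_1,\dots,\mathcal{M}_{k-1})$, which is $\bigl(\sum_{i<k}\varepsilon_i,\sum_{i<k}\delta_i\bigr)$-DP by the inductive hypothesis and draws randomness independent of $\mathcal{M}_k$. Hence it suffices to prove the following: if $\mathcal{N}_1$ is $(\varepsilon_1,\delta_1)$-DP, $\mathcal{N}_2$ is $(\varepsilon_2,\delta_2)$-DP, and the two are run with independent coins, then $\mathcal{N}:=(\mathcal{N}_1,\mathcal{N}_2)$ is $(\varepsilon_1+\varepsilon_2,\delta_1+\delta_2)$-DP; the base case $k=1$ is trivial.

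When $\delta_1=\delta_2=0$ this is immediate: by independence the density of $\mathcal{N}(D)$ at $(o_1,o_2)$ is the product of the two marginal densities, each within a multiplicative factor $e^{\varepsilon_i}$ of the corresponding density for a neighbor $D'$, so multiplying and integrating over any output set yields the $e^{\varepsilon_1+\varepsilon_2}$ bound. For the general case the tool I would use is the standard surrogate‑distribution reformulation of approximate DP: a mechanism $\mathcal{A}$ is $(\varepsilon,\delta)$-DP if and only if for every neighboring pair $D\sim D'$ there is a probability distribution $\mu$ on $\mathcal{A}$'s output space that is (i) pointwise dominated, $\mu(S)\le e^{\varepsilon}\Pr[\mathcal{A}(D')\in S]$ for all $S$, and (ii) within total‑variation distance $\delta$ of $\mathcal{A}(D)$ — and symmetrically with $D$ and $D'$ interchanged. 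Morally, $\mu$ is $\mathcal{A}(D)$ with its density capped at $e^{\varepsilon}$ times that of $\mathcal{A}(D')$ and the excised mass (of total weight $\int(p_D-e^{\varepsilon}p_{D'})^{+}\le\delta$) redistributed into the region lying below the cap.

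Granting this reformulation, the two‑mechanism case closes cleanly. Pick surrogates $\mu_1$ for $\mathcal{N}_1$ and $\mu_2$ for $\mathcal{N}_2$ at a fixed neighboring pair $D\sim D'$, and take the product $\mu:=\mu_1\otimes\mu_2$ as the surrogate for $\mathcal{N}$. Since $\mathcal{N}_1$ and $\mathcal{N}_2$ are independent, the density of $\mathcal{N}(D')$ also factors, so the two pointwise‑domination bounds multiply to give $\mu\le e^{\varepsilon_1+\varepsilon_2}\Pr[\mathcal{N}(D')\in\,\cdot\,]$; and total‑variation distance is subadditive under products (couple each coordinate and union‑bound the coupling failures), so $\|\mathcal{N}(D)-\mu\|_{\mathrm{TV}}\le\delta_1+\delta_2$. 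The easy direction of the reformulation applied to $\mu$ then gives $\Pr[\mathcal{N}(D)\in S]\le e^{\varepsilon_1+\varepsilon_2}\Pr[\mathcal{N}(D')\in S]+\delta_1+\delta_2$ for all $S$, and likewise with $D,D'$ swapped; the induction then completes the proof.

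The step I expect to be the main obstacle is the ``$\Rightarrow$'' direction of the surrogate reformulation — manufacturing an honest probability distribution $\mu$ that is simultaneously dominated by $e^{\varepsilon}\Pr[\mathcal{A}(D')\in\,\cdot\,]$ and $\delta$-close to $\mathcal{A}(D)$ in total variation. The delicate points are verifying that, after capping the density, there is enough slack below the cap to reabsorb all of the excised mass while keeping $\mu$ a genuine distribution (here one uses that the total available room equals $(e^{\varepsilon}-1)+\int(p_D-e^{\varepsilon}p_{D'})^{+}$), and that the resulting total‑variation distance is $\delta$ rather than $2\delta$. This is precisely where the crude alternative fails: conditioning on $\mathcal{N}_1$ and applying the two guarantees one after another only yields $(\varepsilon_1+\varepsilon_2,\ \delta_2+e^{\varepsilon_2}\delta_1)$-DP, so the surrogate argument — or an equivalent privacy‑loss‑random‑variable bookkeeping — is what is needed to obtain the stated $\sum_i\delta_i$.
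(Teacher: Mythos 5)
The paper states Theorem~\ref{theo: composition} purely as background, citing Dwork~\emph{et~al.}, and gives no proof of its own; there is therefore no in-paper argument to compare against. On its own merits, your proof is correct and is in fact the standard textbook route (cf.\ Dwork \& Roth, Lemma~3.17 and the composition proof in Appendix~B): reduce by induction to the two-mechanism case, and for approximate DP invoke the surrogate-distribution (equivalently, approximate max-divergence) characterization, take the product of the two surrogates, and combine pointwise density domination with subadditivity of total variation under product measures. You correctly identify both the nontrivial direction of that characterization (that after capping $p_D$ at $e^{\varepsilon}p_{D'}$ the excised mass, of total weight $\int (p_D - e^{\varepsilon}p_{D'})^{+} \le \delta$, can be reabsorbed below the cap because the available room is $(e^{\varepsilon}-1)+\int(p_D-e^{\varepsilon}p_{D'})^{+}$, and that the resulting TV distance is exactly $\int(p_D-e^{\varepsilon}p_{D'})^{+}$ rather than twice that) and the pitfall of the naive conditioning argument, which only yields $(\varepsilon_1+\varepsilon_2,\ \delta_2 + e^{\varepsilon_2}\delta_1)$. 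One cosmetic remark: the ``up to reordering coordinates'' clause is unnecessary, since you are simply writing $\mathcal{M}_{[k]} = (\mathcal{M}_{[k-1]}, \mathcal{M}_k)$ with the coordinates already in order. Also note that the paper's statement contains a typo ($\mathcal{M}_k$ where $\mathcal{M}_{[k]}$ is intended), which your reading silently and correctly repairs.
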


\textbf{Other composition theorems.} 
In addition to sequential composition, more advanced composition theorems with tighter composition results are also utilized in the literature~\cite{dwork2010boosting, kairouz2015composition, meiser2018tight, zhu2020improving, gopi2021numerical}.
Among these, composition under \textit{R\'enyi differential privacy}~(RDP)~\cite{mironov2017renyi} is one of the widely adopted methods in the literature, owing to its concise and tight analysis result. RDP employs R\'enyi divergence as a tool for privacy measurement, and combining the RDP notion with Theorem~\ref{theo: composition} yields a tighter privacy bound.
Zhu~\etal~\cite{zhu2020improving} later demonstrate that SVT can be combined with RDP for improved privacy guarantees.
Although it is straightforward to extend the existing composition theorems applied to SVT to our proposed SVT variants~\cite{dwork2014algorithmic, zhu2020improving}, for ease of analysis, we derive our privacy analysis in this work based on Theorem~\ref{theo: composition}. While we do not dive deep into the theoretical analysis of other composition theorems in this work, an empirical study on the performance of our proposed method under RDP composition is provided in Appendix~\ref{sec: eva_rdp} to offer further insights.

%
\subsection{Sparse Vector Technique}\label{subsec: background_svt}
\begin{algorithm}
    \caption{SVT. Privately indicates if query results are above thresholds.}\label{alg: basic-svt}
    \LinesNumbered
    \KwIn{$Q=\{q_1,q_2,\ldots\}$,$\Delta$,$\varepsilon_1$,$\varepsilon_2$,$c$,$k_{max}$,$T=\{T_1,T_2,\ldots\}$, option \texttt{RESAMPLE}}
    $\rho\sim\mathcal{N}_1(\varepsilon_1,\Delta)$;\ $n_c=0$;$n_a$=0\;\label{line: basic-svt-threshold-noise}
    \label{line:basic-svt-threshold-perturbation}
    \For{$i=1,2,3,\ldots,k_{max}$}{
        $n_a=n_a+1$\;
        $v_i\sim \mathcal{N}_2(\varepsilon_2,\Delta)$\;\label{line: basic-svt-query-noise}
        $\tilde{q}_i(D)=q_i(D)+v_i$\label{line: basic-svt-query-perturb}\tcp*{Query perturbation}
        $\tilde{T}_i= T_i+\rho$\label{line: basic-svt-threshold-perturb}\tcp*{Threshold perturbation}
    \eIf{$\tilde{q}_i(D)\geq\tilde{T}_i$\hfill\tcp{Private comparison}\label{line: basic-svt-compare}}
        {Output $a_i=\top$\label{line: basic-svt-positive-outcome}\;$n_c=n_c+1$\;
        if \texttt{RESAMPLE}, $\rho\sim \mathcal{N}_1(\varepsilon_1,\Delta)$\label{line: basic-svt-resample}\;
        \textbf{Abort} \text{ if $n_c\geq c$ or $n_a\geq k_{max}$\label{line: basic-svt-halt}}\label{line:Abort}\;}{Output $a_i=\bot$\label{line: basic-svt-negative-outcome}}}
\end{algorithm}
In principle, SVT works as follows. 
Given a sequence of queries $q_i(D)$ on a dataset $D$ and their corresponding predefined thresholds $T_i$, SVT first perturbs them with random noise drawn from noise distribution $\mathcal{N}_2$ and $\mathcal{N}_1$~(Line~\ref{line: basic-svt-query-noise}, Line~\ref{line: basic-svt-query-perturb}, Line~\ref{line: basic-svt-threshold-noise}, and Line~\ref{line: basic-svt-threshold-perturb}), respectively.
Then, it compares each noisy query result $\tilde{q}_i(D)$ with the noisy threshold $\tilde{T}_{i}$~(Line~\ref{line: basic-svt-compare}). If $\tilde{q}_i(D)$ is no smaller than $\tilde{q}_i(D)$, $\top$ is output as a positive indicator~(Line~\ref{line: basic-svt-positive-outcome}). Otherwise, $\bot$ is output as a negative indicator~(Line~\ref{line: basic-svt-negative-outcome}).
The algorithm halts either when the number of positive outcomes reaches its maximum value $c$, or when the total number of queries exceeds its maximum value $k_{max}$~(Line~\ref{line: basic-svt-halt}).

Different variants in the literature set this indicator to different values. For instance, Dwork et al. \cite{dwork2009complexity} set \texttt{RESAMPLE} to \texttt{True}, whereas Lyu et al. \cite{lyu2017understanding} argue that the overall privacy cost is significantly reduced by setting it to \texttt{False}, therefore significantly boosting the performance of SVT. Furthermore, Zhu et al. \cite{zhu2020improving} alternate \texttt{True} and \texttt{False} periodically for certain applications.

In this work, we provide a privacy analysis of our proposed method under both \texttt{True} and \texttt{False} settings~(Cf. Theorem~\ref{theo: privacy-guarantee}). As the empirical evaluation results for both settings demonstrate similar trends and \texttt{RESAMPLE=True} yields better performance in general~\cite{lyu2017understanding}, we only present the results of \texttt{RESAMPLE} set to \texttt{True} in Section~\ref{sec: evaluation}.

\subsubsection{Private Comparison}\label{subsec: noise-distribution}
As mentioned above, to provide a rigorous DP guarantee, certain types of noise are required to ensure that for any noisy threshold $\tilde{T}_i$,
\begin{equation}\label{eq: svt-dp}
    \frac{\Pr[\tilde{q}_i(D)\geq \tilde{T}_i]}{\Pr[\tilde{q}_i(D^{\prime})\geq\tilde{T}_i]}\leq e^{\varepsilon}
\end{equation}
holds with high probability.

Previous works~\cite{dwork2009complexity, lyu2017understanding, zhu2020improving} bound the probability ratio on the left of Inequality~\ref{eq: svt-dp} by na\"ively assuming that Inequality~\ref{eq: svt-dp} is equivalent to the following:
\begin{equation}\label{eq: classic-query-dp}
   \frac{\Pr[\tilde{q}_i(D)\in o]}{\Pr[\tilde{q}_i(D^{\prime})\in o]}\leq e^{\varepsilon},
\end{equation}
where $o$ is any possible output in the output space.

To ensure Inequality~\ref{eq: classic-query-dp} hold, \textit{Laplacian}~($Lap\left(\frac{\Delta f}{\varepsilon_{\varepsilon_1/\varepsilon_2}}\right)$)~\cite{dwork2009complexity, lyu2017understanding} and \textit{Gaussian} noise~($N\left(0,\sigma_{1/2}\right)$, where $\sigma_b$ for $b\in\{1,2\}$ is a function of $\varepsilon_b$)~\cite{zhu2020improving} are then adopted for both the query result and predefined threshold perturbation.

\textbf{However, this privacy analysis overestimates the privacy risk in SVT, leading to an overly conservative choice of noise.}
Note that Inequality~\ref{eq: svt-dp} is easier to achieve compared to Inequality~\ref{eq: classic-query-dp}:
while Inequality~\ref{eq: classic-query-dp} requires a rigorous bound on the probability ratio over every possible subset~($o$) in the output space, Inequality~\ref{eq: classic-query-dp} splits the output space into two~(\ie, $\geq \tilde{T}_i$ and $<\tilde{T}_i$) and only requires a bound on the \textit{accumulative} probability over the two parts.
Thus, the high probability ratios~(\ie, the left side term in Inequality~\ref{eq: classic-query-dp}) incurred by some extreme subsets $o$ in Inequality~\ref{eq: classic-query-dp} are averaged down in Inequality~\ref{eq: svt-dp} by those with relatively low probability ratios.
In other words, let $\mathcal{O}_i$ be the output space of $\tilde{q}_i(D)$, which is split into two parts: $\mathcal{R}:=\{z\in\mathcal{O}_i|z\geq T_i\}$ and $\mathcal{L}:=\mathcal{O}_i\backslash\mathcal{R}$.
Then, by further splitting $\mathcal{R}$ into $n$ subsets that $\mathcal{R}=o_1\cup,\ldots,\cup o_n$, we have the following inequality holds:
\begin{equation*}
 \frac{\Pr[\tilde{q}_i(D)\geq \tilde{T}_i]}{\Pr[\tilde{q}_i(D^{\prime})\geq \tilde{T}_i} = 
    \frac{\sum_{j=1}^{n}\Pr\left[\tilde{q}_i(D)\in o_j\right]}{\sum_{j=1}^n \Pr\left[\tilde{q}_i(D^{\prime})\in o_j\right]}\leq
    \max_{j\in[n]}\frac{\Pr\left[\tilde{q}_i(D)\in o_j\right]}{\Pr\left[\tilde{q}_i(D^{\prime})\in o_j\right]}
\end{equation*}

Intuitively, there may be some noise distributions that, although they do not satisfy Inequality~\ref{eq: classic-query-dp}, can still ensure that Inequality~\ref{eq: svt-dp} holds with high probability while introducing less distortion.

\subsubsection{Advantage of SVT Over Typical Private Releases}
One of the most unique properties of SVT, as well as its variants, is that only the positive outcomes incur privacy costs. Specifically, the overall privacy budget $\varepsilon$ is independent of the value of $k_{max}$ depends instead on the noise scale (\ie, the variance of the noise) of each noisy query and the number of positive outcomes $n_c$.

\textbf{Example.}
Consider a scenario where a data analyst wants to identify the top-$c$ most popular movies~(Cf. Figure~\ref{fig: comparison_svt_typical}). Each movie is rated by a group of individuals who have watched it. Directly releasing the score of each movie~(\ie, $q_i(D)$ in Figure~\ref{fig: comparison_svt_typical}) may reveal sensitive information about whether an individual has watched a particular movie, thus necessitating the use of DP to protect individual privacy. 
\textit{Typical private query releases} perturb and release the score of all candidate movies, which incurs a privacy budget proportional to the number of all candidates $n$. In contrast, using \textit{SVT} to approximate the top-$c$ movies by only outputting the indices of first $c$ films whose scores exceed a predefined threshold $T$ results in an overall privacy cost proportional to $c$ rather than $n$. When $c\ll n$, this approach significantly reduces the privacy cost.

Note that Figure~\ref{fig: comparison_svt_typical} uses $\texttt{MEAN}(D)$ as an example of query $q_i(D)$. In practice, $q_i(D)$ can represent any queries with real-valued answers, such as $\texttt{SUM}(D)$, $\texttt{COUNT}(D)$, $\texttt{MAX}(D)$. Furthermore, while we use top-$c$ selection to demonstrate the effectiveness of our proposed method, SVT can be applied to many other scenarios where queries need to be evaluated against specific criteria~(predefined threshold) in a privacy-preserving manner, such as feature selection~\cite{bassily2018model}, steaming data analysis~\cite{hasidim2020adversarially}.

\subsubsection{Utility Metric}\label{subsec: utility-metric}
The utility of SVT is measured by a specialized utility metric, namely $(\alpha,\beta)$-accuracy~\cite{dwork2014algorithmic}.
\begin{definition}[$(\alpha,\beta)$-accuracy]\label{def: svt-acc}
    An algorithm which outputs a stream of answers $a_1,\ldots,\in\{\top,\bot\}^{*}$ in response to a stream of $k$ queries $q_1,\ldots,q_k$ is $(\alpha,\beta)$-accurate with respect to a threshold $T$ if except with probability at most $\beta$, the algorithm does not halt before $q_k$, and for all $a_i=\top$:$$q_i(D)\geq T-\alpha$$
    and for all $a_i=\bot$:$$q_i(D)\leq T+\alpha.$$
\end{definition}
Definition~\ref{def: svt-acc} specifies that given an error tolerance parameter $\alpha$, SVT achieves a success probability of at least $1-\beta$. Particularly, queries with results falling within the interval $[T-\alpha,T+\alpha]$ are allowed to be misclassified. For instance, if a query $q(D)$ falls within $(T,T+\alpha]$, the SVT algorithm is still considered successful even if it incorrectly classifies $q(D)$ as negative by outputting $\bot$. Specifically, it is demonstrated by prior works that the SVT algorithm, when using Laplacian noise~(\ie, $\mathcal{N}_1$ and $\mathcal{N}_2$ are Laplace distributions), is $(\frac{8(\ln k+\ln \frac{2}{\beta})}{\varepsilon},\beta)$-accurate~\cite{dwork2014algorithmic}.

\section{Privacy Analysis Revisit}\label{sec: privacy-revisit}
As discussed in Section~\ref{subsec: noise-distribution}, it has been observed that previous privacy proofs tend to overestimate the privacy risk in SVT by bounding the privacy over all possible subsets in the output space of query results. The underlying assumption is that disclosing a binary bit is as risky as directly revealing the query result itself. However, this may not hold true: intuitively, a binary bit leaks less information than the complete query result and, therefore, should pose a lower risk.

Motivated by this, we revisit the privacy analysis of SVT, taking into account its less informative nature. We provide a new analysis result in Theorem~\ref{theo: privacy-svt-lips}. Informally speaking, our results indicate that for query perturbation, SVT poses a less stringent constraint on eligible noise distributions compared to typical private query releases, thereby accommodating a broader range of noise distributions.
\begin{theorem}[Privacy of SVT]\label{theo: privacy-svt-lips}
    Algorithm~\ref{alg: basic-svt} satisfies differential privacy if for any real numbers $b_1$, $b_2$, there are two positive real numbers $k_1$ and $k_2$ such that the following inequalities hold:
\begin{equation}\label{eq: pure_dp}
    \vert \ln(f_1(x))-\ln(f_1(x+b_1))\vert \leq k_1\vert b_1\vert,
\end{equation}
\begin{equation}\label{eq: lispchitz}
    \vert \ln(1-F_2(x))-\ln(1-F_2(x+b_2))\vert \leq k_2\vert b_2\vert,
\end{equation}
where $f_1(\cdot)$ and $F_2(\cdot)$ are the probability density function of $\mathcal{N}_1$ and the cumulative function of the $\mathcal{N}_2$, respectively.
\end{theorem}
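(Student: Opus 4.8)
The plan is to run the classical SVT coupling argument for Algorithm~\ref{alg: basic-svt}, but in a form that invokes only the two log-Lipschitz hypotheses in place of explicit Laplace computations; I treat \texttt{RESAMPLE}=\texttt{False} as the core case. Since the output stream lives in the discrete set $\{\top,\bot\}^{*}$, it suffices to fix an arbitrary admissible output $a=(a_1,\dots,a_m)$ — meaning $m\le k_{max}$ and at most $c$ coordinates equal to $\top$ — and show $\Pr[\mathcal{M}(D)=a]\le e^{\varepsilon}\Pr[\mathcal{M}(D')=a]$ for some finite $\varepsilon$; summing over $a\in o$ then yields Definition~\ref{def: DP} with $\delta=0$. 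Conditioning on the threshold noise $\rho$, which is fixed throughout when \texttt{RESAMPLE}=\texttt{False}, the per-query noises $v_i$ are independent and $\{\mathcal{M}(D)=a\}$ factorizes over the issued queries $i=1,\dots,m$, so
\[
\Pr[\mathcal{M}(D)=a]=\int_{\mathbb{R}}f_1(\rho)\prod_{i:\,a_i=\bot}F_2\big(T_i+\rho-q_i(D)\big)\prod_{i:\,a_i=\top}\Big(1-F_2\big(T_i+\rho-q_i(D)\big)\Big)\,d\rho,
\]
and likewise for $D'$.

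\textbf{The shift.} In the $D'$ integral I substitute $\rho\mapsto\rho+\Delta$ and compare the integrands pointwise in $\rho$, writing $\delta_i=q_i(D')-q_i(D)$ with $|\delta_i|\le\Delta$. Three elementary bounds suffice: (i) for the threshold density, $f_1(\rho)\le e^{k_1\Delta}f_1(\rho+\Delta)$ by~\eqref{eq: pure_dp} with $|b_1|=\Delta$; (ii) for each negative factor, $F_2\big(T_i+\rho-q_i(D)\big)\le F_2\big(T_i+\rho+\Delta-q_i(D')\big)$ purely because $F_2$ is nondecreasing and the two arguments differ by $\Delta-\delta_i\ge0$, so negative outcomes cost nothing; (iii) for each positive factor, $1-F_2\big(T_i+\rho-q_i(D)\big)\le e^{2k_2\Delta}\big(1-F_2(T_i+\rho+\Delta-q_i(D'))\big)$ by~\eqref{eq: lispchitz} with a displacement of size $|\Delta-\delta_i|\le2\Delta$. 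Multiplying over $i$ and using that at most $c$ coordinates are $\top$, the $D$ integrand is dominated pointwise by $e^{(k_1+2ck_2)\Delta}$ times the substituted $D'$ integrand; integrating gives $\Pr[\mathcal{M}(D)=a]\le e^{(k_1+2ck_2)\Delta}\Pr[\mathcal{M}(D')=a]$, \ie pure $\varepsilon$-DP with $\varepsilon=(k_1+2ck_2)\Delta<\infty$. For \texttt{RESAMPLE}=\texttt{True}, the output probability factors over the maximal runs of queries delimited by consecutive $\top$'s, each run using an independent fresh copy of the threshold noise and containing at most one $\top$; the per-run bound $e^{(k_1+2k_2)\Delta}$ together with at most $c$ runs gives $\varepsilon=c(k_1+2k_2)\Delta$ (equivalently, apply Theorem~\ref{theo: composition}).

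\textbf{Main obstacle.} The delicate point is the asymmetry the shift exploits: we have no control over log-ratios of $F_2$ itself, only over log-ratios of the survival function $1-F_2$ via~\eqref{eq: lispchitz}, so the negative factors must be absorbed for free, which forces the shift direction $\rho\mapsto\rho+\Delta$ and hence the log-Lipschitz density requirement~\eqref{eq: pure_dp} on the threshold noise; the same shift inflates the displacement seen by the positive factors from $\Delta$ to $2\Delta$, which is the origin of the familiar factor-$2c$ blow-up in the query-noise scale. The one subtlety worth checking is that nothing in the argument requires $\ln F_2$ to be finite — it does not, since $F_2$ enters the negative factors only through monotonicity — which is precisely what allows one-sided distributions such as the exponential to be used for $\mathcal{N}_2$; the remaining steps (validity of the change of variables, convergence of the integrals, reassembling the product) are routine.
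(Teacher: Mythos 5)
Your proof follows essentially the same route as the paper's: condition on the threshold noise $\rho$, shift the integration variable by $\Delta$, absorb the negative factors for free via monotonicity of $F_2$ (the paper does this through the explicit worst-case replacement $q_i(D)\ge q_i(D')-\Delta$, which is the same inequality), and pay $e^{2k_2\Delta}$ per positive via the log-Lipschitz bound on the survival function, plus $e^{k_1\Delta}$ for the shifted threshold density. Your remark that $\ln F_2$ never needs to be finite — which is what admits one-sided noise such as the exponential — is a nice explicit observation the paper leaves implicit, but the argument and the resulting $\varepsilon=(k_1+2ck_2)\Delta$ bound coincide with the paper's derivation.
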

We defer the proof of Theorem~\ref{theo: privacy-svt-lips} to Appendix~\ref{sec: proof-of-lips-privacy} and provide only the key takeaways here.

\textbf{Takeaways from Theorem~\ref{theo: privacy-svt-lips}.}
\underline{First}, Theorem~\ref{theo: privacy-svt-lips} \textit{broadens} the range of noise distributions for query perturbation. Specifically, as indicated by Equation~\ref{eq: lispchitz}, distributions such as the exponential and Gumbel distribution are now eligible for query perturbation~(\ie, $\mathcal{N}_2$ in Algorithm~\ref{alg: basic-svt}), whereas these types of noise were previously considered unsuitable for SVT.
\underline{Second}, although Theorem~\ref{theo: privacy-svt-lips} relaxes the constrains on noise distributions for \textit{query result perturbation}, the eligible noise distributions for \textit{threshold perturbation}~(\ie, $\mathcal{N}_1$ in Algorithm~\ref{alg: basic-svt}) remain \textit{unchanged} compared to the previous work. This is because, despite the query results not being directly released, the predefined threshold is public information. To prevent negative queries from consuming privacy, the true threshold value, which is compared with the noisy query result, must also be obscured. Hence, the same noise distributions, such as Laplacian or Gaussian, are used in traditional private query releases.

To clarify, we summarize some of the most commonly used noise distributions for both threshold perturbation and query result perturbation in Table~\ref{tbl: noise-combination}.
\begin{table}[htbp!]
  \centering
  \caption{Feasible noise distributions for SVT}
    \begin{tabular}{ccccc}
    \toprule
      & Laplace & Gaussian & Exponential & Gumbel\\
    \midrule
    Query Result & \checkmark &\checkmark &\checkmark &\checkmark\\
    \midrule
    Threshold &\checkmark& \checkmark & $\times$ & $\times$\\
    \bottomrule
    \end{tabular}%
  \label{tbl: noise-combination}%
\end{table}%
\begin{figure*}[t!]
    \centering
    \includegraphics[width=\linewidth]{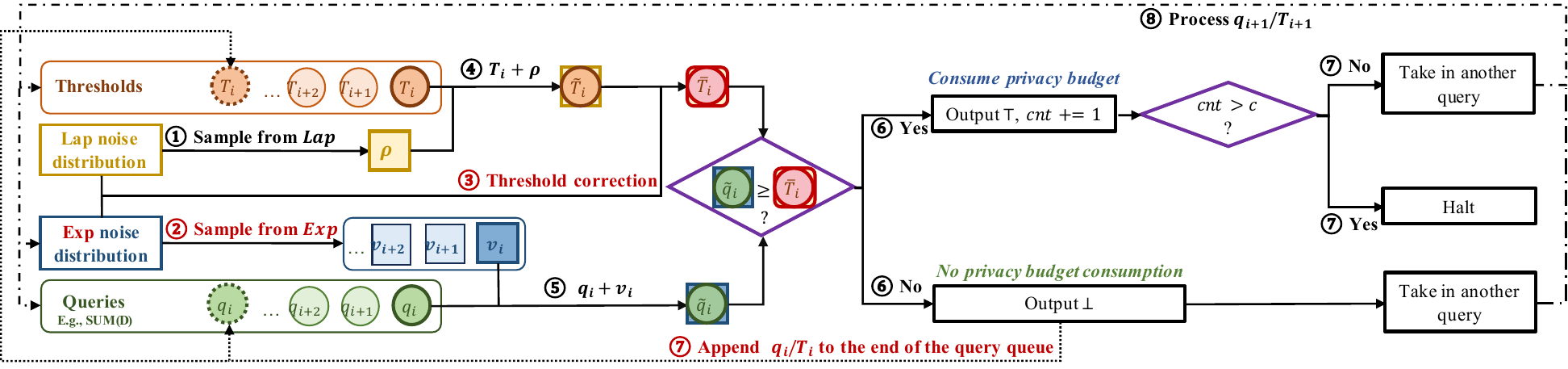}		
    \caption{The overall design of Algorithm \ref{alg: exp-svt}, with our newly proposed methods in this work emphasized with red color. The \texttt{RESAMPLE} is set to \texttt{False}. In step \textcircled{3}, the optimal threshold correction term $r^{op}$ is computed based on the Laplace and exponential distribution and added to the current threshold. The threshold~(step \textcircled{4}) and the query~(step \textcircled{5}) are then perturbed with random noise sampled in step \textcircled{1} and \textcircled{2}, respectively. After comparing the noisy threshold with the noisy query in step \textcircled{6}, the algorithm outputs $\top$ and compares the number of accumulated positive outcomes~($cnt$) with a positive threshold $c$ if $\tilde{q}_i\geq \bar{T}_i$. Otherwise, $\bot$ is output. If \texttt{APPEND} is set to \texttt{True}, the negative query is appended to the end of the query queue for an additional round of querying. If the algorithm does not halt in step \textcircled{7}, it takes in another threshold-query pair and repeats the procedure~(step \textcircled{8}).}
    \label{fig: framework}
\end{figure*}
\section{Construction}
Based on the two crucial takeaways discussed in Section~\ref{sec: privacy-revisit}, we propose an enhanced SVT algorithm with improved query accuracy, referred to as SVT-Exp. This algorithm is summarized in Algorithm~\ref{alg: exp-svt}, with key changes in the algorithm design highlighted by \underline{underlines}. Additionally, the main steps of Algorithm~\ref{alg: exp-svt} are illustrated in Figure~\ref{fig: framework} for better understanding.
\begin{algorithm}
    \caption{SVT with exponential noise, optimal threshold correction, and appending strategy~(SVT-Exp).}\label{alg: exp-svt}
    \LinesNumbered
    \KwIn{$Q=\{q_1,q_2,\ldots\}$,$\Delta$,$\varepsilon_1$,$\varepsilon_2$,$\lambda$,$c$,$k_{max}$,$T=\{T_1,T_2,\ldots\}$,$\alpha$, $\varepsilon$, $m$,$e$,$k$ option \texttt{RESAMPLE}, option \texttt{APPEND}.}
    $\varepsilon=\varepsilon_1+\varepsilon_2$; $n_c=0$; $n_a=0$\;
    $b=\frac{\Delta}{\varepsilon_1}$;$\rho\sim Lap(b)$\;\label{line: exp-svt-threshold-noise}
    \label{line:exp-svt-threshold-perturbation}
        \underline{$r^{op}=\texttt{CorrectionTerm}(b,\lambda,\alpha,m,e,k)$}\tcp*{Threshold correction term computation}\label{line: exp-svt-threshold-correction}
    \For{$i=1,2,3,\ldots$}{
        $n_a=n_a+1$\;
        \underline{$\lambda=\frac{\varepsilon_2}{2c\Delta}$;\ 
        $v_i\sim Exp(\frac{1}{\lambda})$}\tcp*{Query perturbation}\label{line: exp-svt-query-noise}
        $\tilde{q}_i(D)=q_i(D)+v_i$\;
        $\tilde{T}_i=T_i+\rho$\tcp*{Threshold perturbation}
    \eIf{\underline{$\tilde{q}_i(D)\geq \tilde{T}_i+r^{op}$}\hfill\tcp{Private comparison}\label{line: exp-svt-query-perturb}}
        {Output $a_i=\top$\label{line: exp-svt-positive-outcome}\;$n_c=n_c+1$\;
        $S_{\top}=S_{\top}\cup i$\;
        if \texttt{RESAMPLE}, $\rho\sim Lap(b)$\label{line: exp-svt-resample}\;
        \textbf{Abort} \text{ if $n_c\geq c$ \text{or} $n_a\geq k_{max}$}\label{line:Abort}\;}{Output $a_i=\bot$\label{line: exp-svt-negative-outcome}\;
        \If{\underline{\texttt{APPEND}}\label{line: start_appending}}{
        \underline{$Q=Q\cup\{q_i\}$;$T=T\cup\{T_i\}$}\tcp*{Appending qeuries with noisy negative outcomes}\label{line: end_appending}}
        }}
\end{algorithm}
\subsection{Overview}\label{subsec: overview}
In summary, our proposed enhanced SVT algorithm introduces three key improvements:

\textcircled{1} \textcircled{1} \textbf{The algorithm uses exponential noise for query perturbation}~(Line~\ref{line: exp-svt-query-noise}). This choice is based on the fact that the cumulative probability function of exponential noise tightly satisfies Equation~\ref{eq: lispchitz} in Theorem~\ref{theo: privacy-svt-lips}. Meanwhile, it exhibits a much smaller variance compared to other noise distributions considered, resulting in less data distortion and enhanced query accuracy.

\textcircled{2}: \textbf{An optimal threshold correction term is computed by maximizing the $(\alpha,\beta)$-accuracy of the SVT with exponential noise}~(Line \ref{line: exp-svt-threshold-correction}), and added to the noisy thresholds $\tilde{T}_i$~(Line \ref{line: exp-svt-query-perturb}). 
Correcting the threshold is crucial when using exponential noise (or other noise distributions centered at non-zero values) as it introduces bias into the query results\footnote{Note that correcting the threshold is equivalent to correcting the query results. In essence, as SVT compares $\tilde{q}_i-\tilde{T}_i$ with $0$. Therefore, adding the correction term to the predefined threshold $\tilde{T}_i$ is the same as subtracting it from the query result $\tilde{q}_i(D)$.}.
Intuitively, the correction term that maximizes the $(\alpha,\beta)$-accuracy ensures the highest success probability~(\ie, $1-\beta$) for SVT, thereby improving query accuracy.

\textcircled{3}: 
\textbf{An appending strategy has been developed}, where noisy negative queries~(\ie, $q_i(D)$ for $\tilde{q}_i(D)<\tilde{T}_i(D)$) are appended to the query queue for an additional round of querying~(Line~\ref{line: start_appending} to Line~\ref{line: end_appending}). 
Generally speaking, while our optimal threshold correction method effectively increases the precision of SVT, the appending strategy further increases its recall. The rationale behind this strategy is that for a positive query where $q_i(D)\geq T_{i}$, comparing the noisy threshold $\tilde{T}_i$ with $\tilde{q}_i(D)$ multiple times increases the likelihood of $q_i(D)$ identified as a positive query.

The concrete details of our developed methods and strategy are presented in Section~\ref{subsec: exp-svt}, Section~\ref{subsec: threshold-correction}, and Section~\ref{subsec: appending}.
Additionally, the privacy guarantee and the utility guarantee of Algorithm~\ref{alg: exp-svt} are provided in Theorem~\ref{theo: privacy-guarantee} and Theorem~\ref{theo: utility-guarantee}, respectively. The proof of these theorems is deferred to Appendix~\ref{sec: exp-privacy-proof} and Appendix~\ref{sec: exp-utility-proof}.

\begin{theorem}[Privacy guarantee]\label{theo: privacy-guarantee}
    Let $c$ denote the number of positive outcomes output by Algorithm~\ref{alg: exp-svt}. Algorithm~\ref{alg: exp-svt} satisfies $(\varepsilon_1+\varepsilon_2)$-differential privacy when \texttt{RESAMPLE} is set to \texttt{False}, and $(c\varepsilon_1,\varepsilon_2)$-differential privacy when \texttt{RESAMPLE} is set to \texttt{True}, where $\varepsilon_1$ and $\varepsilon_2$ are the privacy budgets for threshold and query perturbation, respectively.
\end{theorem}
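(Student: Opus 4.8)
The plan is to reduce the privacy analysis of Algorithm~\ref{alg: exp-svt} to Theorem~\ref{theo: privacy-svt-lips} together with the composition theorem (Theorem~\ref{theo: composition}), by verifying the two Lipschitz-type conditions for the specific noise choices and then accounting for how many times the threshold noise is reused. First I would observe that the optimal correction term $r^{op}$ and the appending strategy are both post-processing-free modifications in the sense that $r^{op}$ is a data-independent constant (it depends only on $b,\lambda,\alpha,m,e,k$, not on $D$) and appending a negative query simply reissues a query whose privacy cost is already accounted for; hence, shifting the comparison to $\tilde q_i(D)\ge \tilde T_i + r^{op}$ is equivalent to running the original SVT comparison against the public threshold $T_i + r^{op}$, which changes nothing in the privacy accounting. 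So it suffices to analyze the bare mechanism with $\mathcal{N}_1 = \mathrm{Lap}(b)$, $b=\Delta/\varepsilon_1$, and $\mathcal{N}_2 = \mathrm{Exp}(1/\lambda)$ with $\lambda = \varepsilon_2/(2c\Delta)$.

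Next I would instantiate Theorem~\ref{theo: privacy-svt-lips}. For the threshold noise, $f_1$ is the Laplace density, so $|\ln f_1(x) - \ln f_1(x+b_1)| \le |b_1|/b = (\varepsilon_1/\Delta)|b_1|$, giving $k_1 = \varepsilon_1/\Delta$. For the query noise, $1 - F_2(x)$ is the exponential survival function, which is exactly $e^{-\lambda x}$ on $[0,\infty)$ and $1$ on $(-\infty,0)$; the log-ratio $|\ln(1-F_2(x)) - \ln(1-F_2(x+b_2))|$ is therefore at most $\lambda|b_2|$ (this is where exponential noise meets Eq.~\ref{eq: lispchitz} tightly, and where one must be slightly careful about the flat region $x<0$ — the bound still holds since moving into or out of the flat part only decreases the log-gap). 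Thus $k_2 = \lambda = \varepsilon_2/(2c\Delta)$. Feeding these constants through the proof of Theorem~\ref{theo: privacy-svt-lips} (which I may assume), a single "private comparison" round that emits $\top$ costs $k_1\Delta + 2 c\, k_2 \Delta$... more precisely, I expect the accounting from that theorem to yield that each positive outcome consumes $\varepsilon_1$-worth of threshold budget when the noise is resampled, and that the query-noise contribution aggregated over the at most $c$ positive outcomes is $2c \cdot \lambda \Delta = \varepsilon_2$ by the choice $\lambda = \varepsilon_2/(2c\Delta)$; negative outcomes cost nothing because the survival-function bound absorbs them.

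Then I would split into the two \texttt{RESAMPLE} cases. When \texttt{RESAMPLE=False}, the threshold noise $\rho$ is drawn once and the single Laplace draw protects the threshold across the entire run, so the threshold cost is just $\varepsilon_1$; combined with the $\varepsilon_2$ from the query side via Theorem~\ref{theo: composition}, the whole algorithm is $(\varepsilon_1+\varepsilon_2)$-DP. When \texttt{RESAMPLE=True}, a fresh $\rho$ is sampled after each of the (at most) $c$ positive outcomes, so by sequential composition the threshold cost becomes $c\varepsilon_1$, while the query-noise budget is still controlled by the same $\lambda$, yielding $(c\varepsilon_1, \varepsilon_2)$-DP (I would keep the two budgets separate rather than summing, matching the statement). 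Finally I would invoke Proposition~\ref{prop: post_processing} once more to confirm that outputting only the binary stream $a_1,a_2,\ldots$ (and the appended re-queries, which are deterministic functions of the input query list) does not increase the loss.

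The main obstacle I anticipate is the query-noise bookkeeping under the appending strategy: appending makes the effective query sequence data-dependent in length, and one must argue that this does not multiply the $\varepsilon_2$ budget — the key point being that the exponential survival-function bound (Eq.~\ref{eq: lispchitz}) is paid only on the "above threshold" branch, and the total number of positive outcomes is still capped at $c$ regardless of how many negative re-queries are appended, so the $\lambda = \varepsilon_2/(2c\Delta)$ calibration remains valid. Making that argument airtight — that appended negative queries genuinely contribute zero to the privacy loss even though they are chosen adaptively — is the delicate step, and I would handle it by noting that Theorem~\ref{theo: privacy-svt-lips} already bounds the loss of the \emph{entire binary output vector} for an arbitrary (even adaptively chosen) query stream, so the appended queries are subsumed with no extra cost.
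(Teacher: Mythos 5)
Your proposal is correct and follows essentially the same route as the paper's Appendix~B: verify that Laplace threshold noise and exponential query noise meet the two Lipschitz conditions of Theorem~\ref{theo: privacy-svt-lips} with $k_1=\varepsilon_1/\Delta$ and $k_2=\lambda=\varepsilon_2/(2c\Delta)$, feed these into that theorem for the \texttt{RESAMPLE=False} bound, and for \texttt{RESAMPLE=True} decompose the run into at most $c$ subsequences each ending in a positive outcome and apply sequential composition. The few points you raise that the paper glosses over --- the flat region $x<0$ of the exponential survival function, the data-independence of $r^{op}$, and why the adaptively-appended negative re-queries do not inflate the budget --- are all correct observations that strengthen rather than alter the argument.
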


\begin{figure}[tbp!]
    \centering
    \includegraphics[width=0.6\linewidth]{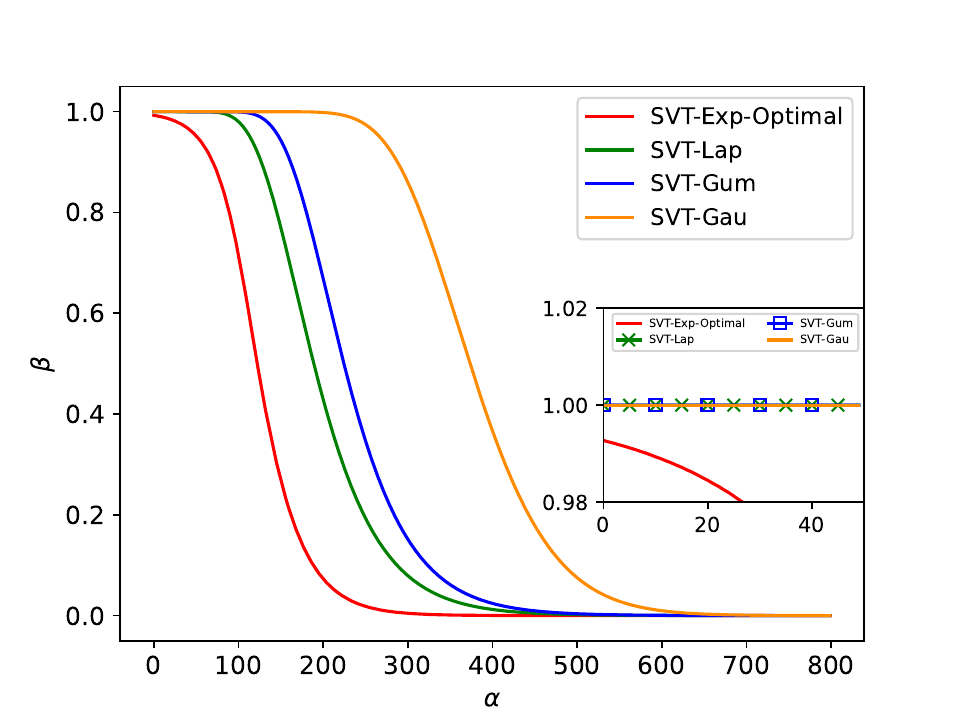}		
    \caption{The $(\alpha,\beta)$-accuracy of SVT algorithms with four different types of noise: the SVT with the exponential noise and the optimal threshold correction; SVT with the Laplacian noise; SVT with the Gumbel noise and the mean threshold correction; and SVT with Gaussian noise. Parameter $k$ in Theorem~\ref{theo: utility-guarantee} is set to $50$ and the overall privacy budget $\varepsilon=1$.}
    \label{fig: accuracy}
\end{figure}

For the utility guarantee, following the convention~\cite{dwork2014algorithmic,lyu2017understanding}, $(\alpha,\beta)$-accuracy is adopted as the utility metric. The parameters in Algorithm~\ref{alg: exp-svt} are set as $\Delta=1$, $\varepsilon_1=\varepsilon_2=\frac{\varepsilon}{2}$ and $c=1$ for ease of computation and comparison. As demonstrated in Theorem~\ref{theo: utility-guarantee}, compared to SVT algorithms where the Laplacian noise is adopted for query perturbation, Algorithm~\ref{alg: exp-svt} demonstrates a clear advantage.
\begin{theorem}[Utility guarantee]\label{theo: utility-guarantee}
    Given any $k$ records such that $\lvert \{i<k:d_i\geq t-\alpha\}\rvert=0$~(\ie, the record above and closest to the threshold is the last one), Algorithm~\ref{alg: exp-svt} is at least $(\frac{4(\ln k+\ln \frac{2}{\beta})}{\varepsilon},\beta)$-accurate.
\end{theorem}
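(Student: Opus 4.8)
The plan is to prove the $(\alpha,\beta)$-accuracy bound by tracing through the classical Dwork–Roth analysis of SVT but replacing the two Laplace tails with an exponential tail (for the query noise $v_i$) and a Laplace tail (for the threshold noise $\rho$), and then bookkeeping the optimal threshold correction term $r^{op}$. First I would fix notation: write $t=T$, $d_i=q_i(D)$, let $\rho\sim Lap(b)$ with $b=\Delta/\varepsilon_1 = 2/\varepsilon$, let $v_i\sim Exp(1/\lambda)$ with $\lambda=\varepsilon_2/(2c\Delta)=\varepsilon/4$ (using $c=1,\Delta=1,\varepsilon_1=\varepsilon_2=\varepsilon/2$), and recall the comparison is $\tilde q_i \ge \tilde T_i + r^{op}$, i.e. $d_i + v_i \ge t + \rho + r^{op}$. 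Define the "bad event" that the algorithm misclassifies some query outside the slack window $[t-\alpha,t+\alpha]$, or that it halts early. Under the hypothesis $|\{i<k: d_i\ge t-\alpha\}|=0$, early halting can only be caused by $q_k$ itself (since $c=1$), so the only failure modes are (a) the threshold noise $\rho$ being too large in magnitude, causing a spurious $\top$ on some $d_i < t-\alpha$, or (b) $\rho$ plus a small $v_k$ causing a spurious $\bot$ on $q_k$ when $d_k \ge t+\alpha$.

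Next I would do the union bound. For the "false positive" direction: condition on $\rho$, and note a query with $d_i \le t-\alpha$ triggers $\top$ only if $v_i \ge \alpha - \rho - r^{op} + (t - d_i)\ge \alpha-\rho - r^{op}$; I would want this to force $v_i$ into an exponential tail. Because $Exp(1/\lambda)$ is one-sided nonnegative, the correction term $r^{op}$ is what shifts the effective threshold so that the required $v_i$ value is large and positive; I would plug in the explicit $r^{op}$ delivered by \texttt{CorrectionTerm} (established in Section~\ref{subsec: threshold-correction}) and bound $\Pr[v_i \ge \alpha - \rho - r^{op}] \le e^{-\lambda(\alpha-\rho-r^{op})}$ on the event $|\rho|\le \alpha/2$ (say). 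For the "false negative" direction on $q_k$: $\Pr[v_k < (t+\alpha) - d_k + \rho + r^{op}] \le \Pr[v_k < \alpha + \rho + r^{op}]$, bounded using the exponential CDF $1-e^{-\lambda x}$, which is small when $\alpha$ is small — wait, that is the wrong direction, so actually the recall-type failure is controlled differently: I would instead absorb it by noting $d_k\ge t+\alpha$ means we need $v_k \ge (\text{something}) \le 0$ after correction, which happens with probability close to $1$; the dominant term is really the threshold-noise tail $\Pr[|\rho| > \alpha/2] = e^{-\alpha/(2b)} = e^{-\varepsilon\alpha/4}$, together with the $k$-fold union over false positives each contributing $e^{-\lambda\alpha/2}$-type terms. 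Setting the total $\le \beta$ and solving for $\alpha$ gives $\alpha = \Theta\big(\tfrac{\ln k + \ln(2/\beta)}{\varepsilon}\big)$, and carefully matching constants (the $\lambda=\varepsilon/4$ scale versus the $b=2/\varepsilon$ scale, and splitting $\beta$ into $\beta/2$ for each failure family, then $\beta/2k$ per query) yields the claimed $\frac{4(\ln k + \ln\frac{2}{\beta})}{\varepsilon}$.

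The main obstacle I anticipate is handling the bias/correction term cleanly. Unlike the symmetric Laplace case where $v_i$ is mean-zero, here $v_i$ has mean $1/\lambda = 4/\varepsilon$, so without $r^{op}$ the false-positive probability would be constant, not exponentially small; the entire argument hinges on showing that $r^{op}$ (roughly of order $1/\lambda$ up to logarithmic factors, as determined by the optimization in Equation~\ref{eq: success-rate}) shifts the comparison enough that the required noise realization $v_i \ge \alpha - \rho - r^{op}$ still lands in the exponential tail for all the $k-1$ negative queries simultaneously, while not so much that $q_k$ gets rejected. Concretely I expect to need a two-sided statement: $r^{op}$ large enough that $\lambda(\alpha - |\rho| - r^{op})$ is still $\gtrsim \ln(k/\beta)$ after the union bound, yet $r^{op}$ small enough (or $\alpha$ large enough) that $\Pr[v_k < \alpha + \rho + r^{op}]$ near $1$ — reconciling these two constraints is exactly where the $(\alpha,\beta)$ optimization from Section~\ref{subsec: threshold-correction} feeds in, and I would invoke that lemma as a black box rather than re-deriving $r^{op}$. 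A secondary technical point is verifying the "no early halt" claim: with $c=1$ and the structural assumption that no record before $q_k$ exceeds $t-\alpha$, I must argue that on the complement of the bad event the algorithm indeed outputs $\bot$ for all $i<k$ and only possibly stops at $k$, which follows from the same false-positive tail bound applied in the "good" regime.
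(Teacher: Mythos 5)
Your proposal takes a genuinely different — and harder — route than the paper's, and it also rests on a misreading of why the bound works. The paper's proof does \emph{not} plug $r^{op}$ into the tail analysis at all. Instead, it first establishes the $(\alpha,\beta)$-accuracy bound for the \emph{uncorrected} algorithm $r=0$, by showing that on the event $\max_{i\in[k]}\mathrm{Exp}_i\bigl(\frac{4}{\varepsilon}\bigr)+\bigl|\mathrm{Lap}\bigl(\frac{2}{\varepsilon}\bigr)\bigr|\le\alpha$ every output is $\alpha$-consistent with the threshold, then union-bounds the complementary tail events by allocating $\alpha/2$ to the exponential maximum and $\alpha/2$ to the Laplace noise. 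Only afterwards does it handle $r^{op}$, and it does so in one line: since $r^{op}=\argmax_r p(r)$, we have $p(r^{op})\ge p(0)$, hence $\beta^{op}=1-p(r^{op})\le 1-p(0)=\beta_0$, so the correction can only improve the bound already proved for $r=0$. No explicit form of $r^{op}$ is ever needed, and there is no two-sided balancing constraint to reconcile.

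This also addresses what you call the ``main obstacle'': the claim that ``without $r^{op}$ the false-positive probability would be constant'' is false under the theorem's hypothesis. That hypothesis forces $d_i<t-\alpha$ for every $i<k$, so a spurious $\top$ requires $v_i>(t-d_i)+\rho>\alpha+\rho$; conditioned on $|\rho|\le\alpha/2$ this is an exponential tail event. Because $\alpha$ is ultimately set to $\Theta\bigl(\frac{\ln k}{\varepsilon}\bigr)$, which exceeds the mean $1/\lambda=\Theta\bigl(\frac{1}{\varepsilon}\bigr)$ of the one-sided exponential noise, the bias is entirely absorbed by the $\alpha$-slack built into the $(\alpha,\beta)$-accuracy definition. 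Your plan to ``invoke the $r^{op}$ lemma as a black box'' to certify a two-sided constraint is therefore unnecessary machinery, and also shaky: Section~\ref{subsec: threshold-correction} supplies an argmax characterization and a numerical procedure, not an analytic two-sided bound on $r^{op}$ of the sort your argument would need. Two smaller issues: the threshold event should be $v_i\ge(t-d_i)+\rho+r^{op}\ge\alpha+\rho+r^{op}$, not $v_i\ge\alpha-\rho-r^{op}+(t-d_i)$ as you wrote (sign errors on both $\rho$ and $r^{op}$); and the ``no early halt'' step is cleaner in the paper's framing, since on the good event $\max_i v_i+|\rho|\le\alpha$ no $i<k$ can output $\top$, which is immediate rather than a separate tail bound.
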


Additionally, note that the assumptions made on parameter settings in Theorem~\ref{theo: utility-guarantee} may not fully align with real-world scenarios. Therefore, we further provide a numerical utility analysis under a more practical setting in Figure~\ref{fig: accuracy}, which shows that the value of $\beta$ of Algorithm~\ref{alg: exp-svt} is consistently below that of other baselines for a fixed $\alpha$, further demonstrating the effectiveness of our method.

\subsection{Query Perturbation with Exponential Noise}\label{subsec: exp-svt}
In this section, we justify the use of exponential noise in Algorithm~\ref{alg: exp-svt} for the following two reasons:

First,
exponential noise guarantees DP for SVT, as its cumulative function,
$$F(x;\lambda)=\begin{cases}1-\exp{(-\lambda x)} & x\geq0\\0&x<0\end{cases},$$
tightly satisfies Equation~\ref{eq: lispchitz} with $k_2=\lambda$ for any $\lambda>0$. However, note that the exponential distribution does not satisfy Equation~\ref{eq: pure_dp}, and therefore cannot be used for threshold perturbation. Since the Laplace distribution is one of the most commonly and frequently adopted noise distributions in SVT, we use it to perturb the predefined threshold, following convention.

\begin{figure}[tbp]
    \centering
    \includegraphics[width=0.6\linewidth]{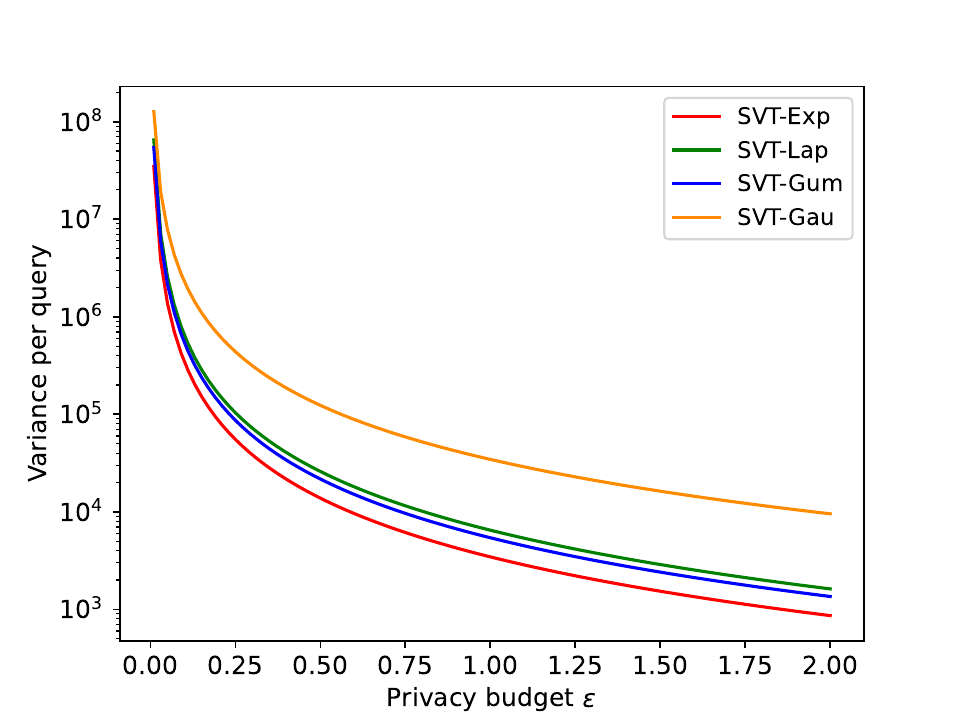}			
    \caption{The variance of SVT with four different types of noise~(\ie, exponential noise, Laplacian noise, Gumbel noise, and Gaussian noise). The privacy budget $\varepsilon$ varies from $0.01$ to $2$. $c$ is set to $50$. The y-axis is log-based. Note that the threshold correction does not affect the query variance.}
    \label{fig: variance}
\end{figure}

Second, exponential noise yields the smallest variance on the private comparison result~(\ie, $\tilde{q}_{i}(D)\geq \tilde{T}_i+r^{op}$) in Algorithm~\ref{alg: exp-svt} among all evaluated noises, yielding the least data distortion among all compared methods.
Specifically, as also stated in \cite{lyu2017understanding}, the variance of the private comparison result $V$ is written as follows:
\begin{equation}\label{eq: variance}
    V=\texttt{Var}\left(\texttt{Lap}\left(\frac{\Delta}{\varepsilon_1}\right)\right)+\texttt{Var}\left(\texttt{Exp}\left(\frac{2c\Delta}{\varepsilon_2}\right)\right).
\end{equation}
A smaller $V$ indicates a smaller data distortion, thus a potentially higher query accuracy.
Given an overall privacy budget $\varepsilon$, by varying $w$ such that $\varepsilon_2=w\varepsilon_1$ and $\varepsilon=\varepsilon_1+\varepsilon_2$, $V$ can be minimized for any fixed pair of noise distributions~\cite{lyu2017understanding}~(Cf. Appendix~\ref{sec: privacy-allocation-proof}).
As shown in Figure~\ref{fig: variance} where the minimal $V$ for different $\mathcal{N}_2$ in Algorithm~\ref{alg: exp-svt} are depicted, 
exponential distribution yields the smallest private comparison variance.
\subsection{Optimal Threshold Correction}\label{subsec: threshold-correction}
In this section, we elaborate on our optimal threshold correction by: (1) providing the motivations necessitating our design~(Cf. Section~\ref{subsec: correction_motivation}); (2) detailing our correction methodology, which incorporates derivation of the success probability of SVT and computation of the optimal correction term~(Cf. Section~\ref{subsec: correction_methodology}); and (3) presenting a numerical computation framework that generalizes our correction method to broader applications~(Cf. Section~\ref{subsec: correction_numerical}).

\subsubsection{Motivations.}\label{subsec: correction_motivation}
Though the exponential noise demonstrates properties that can theoretically boost query accuracy, applying it to SVT is challenging due to the introduced bias.
Specifically, the expectation of the difference between a noisy query and its corresponding noisy threshold is written as follows: $$E\left(\tilde{q}_i(D)-\tilde{T}_i\right)=q_i(D)-T_i+\frac{2c\Delta}{\varepsilon_2}.$$
Due to the introduction of a positive bias $\frac{2c\Delta}{\varepsilon_2}$, the algorithm is likely to mistakenly classify queries $q_{i}(D)$ as positives if they fall within the range $T_i\geq q_{i}(D)\geq T_i-\frac{2c\Delta}{\varepsilon_2}$.
As a result, the query accuracy is compromised for two reasons. First, as mentioned above, the false positive rate increases. Second, as described in Algorithm \ref{alg: exp-svt}, to limit the overall privacy budget consumption, no more than $c$ records are allowed to be identified as positive queries. Since the false positive queries occupy these $c$ spots, the subsequent true positive queries are unable to be output, thereby decreasing the precision.

A na\"ive solution is to directly subtract the bias from each $q_{i}(D)$ or add the bias to each $T_i$~\cite{wang2017locally,ye2019privkv,liu2022collecting,liu2023echo}.
However, this method yields a sub-optimal performance when applied to SVT. The primary reason is that this approach is designed for correcting the aggregation of $n$ noisy query results, whereas our focus is on correcting each noisy query result individually.
According to the law of large numbers, the mean of the aggregation with larger $n$ tends to converge to its expectation. 
That is to say, as $n$ increases, the mean of the aggregated results approaches the noise expectation more closely. Consequently, correcting the aggregate of noisy query results by subtracting the expectation often achieves satisfactory accuracy. However, since our approach targets individual noisy results~(\ie, $n=1$), this method does not provide the same level of performance in SVT as it does with aggregated noisy queries.

\subsubsection{Methodology.}\label{subsec: correction_methodology}
Motivated by this, we have developed a new correction method that focuses on improving the precision of the output binary vector rather than correcting each individual query result.
At a high level, inspired by the definition of $(\alpha,\beta)$-accuracy~(Cf. Definition~\ref{def: svt-acc}), we compute the success probability of the SVT algorithm by multiplying the probability of each query being corrected classified. We then derive an optimal threshold correction term by maximizing this success probability.
In essence, our optimal correction method increases the threshold value to enhance the precision of SVT, which is further elucidated in the analysis presented in Figure~\ref{fig: optimal-correction-term}.

\textbf{Success probability of SVT.} We begin by considering a simple case where $c=1$. Assume there are $k$ true negative queries before the true positive query. Given a fixed error tolerance parameter $\alpha$, the success probability of Algorithm~\ref{alg: exp-svt} with a threshold correction term $r$ is defined as follows:
\begin{equation}\label{eq: optimal_threshold_correction}
\begin{aligned}
        p(r) &= \prod_{i=1}^{k}\Pr\left[\tilde{q}_{i}(D)-\alpha<\tilde{T}_i+r\right]\cdot\Pr\left[\tilde{q}_{j}(D)+\alpha\geq\tilde{T}_j+r\right],
\end{aligned}
\end{equation}
where $\tilde{q}_{i/j}(D)=q_{i/j}(D)+\texttt{Exp}(\frac{1}{\lambda})$, and $\tilde{T}_{i/j}=T_{i/j}+\texttt{Lap}(b)$. Note that here we abuse the notion $\texttt{Exp}\left(\cdot\right)$ and $\texttt{Lap}\left(\cdot\right)$ to denote the random variables drawn from exponential distribution and Laplace distribution, respectively.

\textbf{Takeaways from Equation~\ref{eq: optimal_threshold_correction}.}
\underline{First}, Equation~\ref{eq: optimal_threshold_correction} is directly related to the $(\alpha,\beta)$-accuracy. Specifically, for a fixed $\alpha$, the value of $\beta$ with a threshold correction term $r$ is given by $1-p(r)$. \underline{Second}, by maximizing $p(r)$, we maximize the query accuracy of SVT, which in turn improves empirical performance. \underline{Third}, Equation~\ref{eq: optimal_threshold_correction} can be easily extended to scenarios where $c>1$ by: \textcircled{1} dividing all queries into subroutines, each containing exactly one positive query; and \textcircled{2} multiplying the success probability of each subroutine.

However, Equation~\ref{eq: optimal_threshold_correction} cannot be directly adopted as it is data-dependent, which could lead to privacy leakage.
To address this issue, we adopt a worst-case assumption by setting $q_{i/j}(D)=T_{i/j}$.
Thus, the success probability $p(r)$ can be expressed as:
\begin{equation}\label{eq: success-rate}
\begin{aligned}
    p(r) = \left(\Gamma\left(r+\alpha\right)\right)^{k}\cdot\left(1-\Gamma\left(r-\alpha\right)\right),
\end{aligned}
\end{equation}
where $\Gamma\left(\cdot\right)$ is the cumulative distribution function~(CDF) of the random variable $Z=X-Y$. Here $X$ and $Y$ are random variables drawn from distribution $\texttt{Exp}\left(\frac{1}{\lambda}\right)$ and $\texttt{Lap}\left(b\right)$, respectively. The shape of the probability density function~(PDF) of $Z$ is illustrated in Figure~\ref{fig: pdf-z}, and the explicit expression of $p\left(r\right)$ is detailed in Appendix~\ref{sec: pr_analytical}.

\textbf{Optimal correction term.} After demonstrating that the maximum value of $p(r)$ exists~(Cf. Appendix~\ref{sec: proof-of-maximum-pr}), the optimal threshold correction is obtained by:
\begin{equation}\label{eq: optimal-term}
    r^{op} = \argmax p(r).
\end{equation}
\begin{figure}[tbp!]
    \centering
    \subfigure[The probability distribution of the random variable $Z=X-Y$ with $X\sim\mathrm{Lap}(\frac{1}{\varepsilon_1})$ and $Y\sim\mathrm{Exp}(\frac{1}{\varepsilon_2})$. The parameters $\alpha$ and $k$ are set to $0$ and $10$, respectively.]{
    \includegraphics[width=0.4\linewidth]{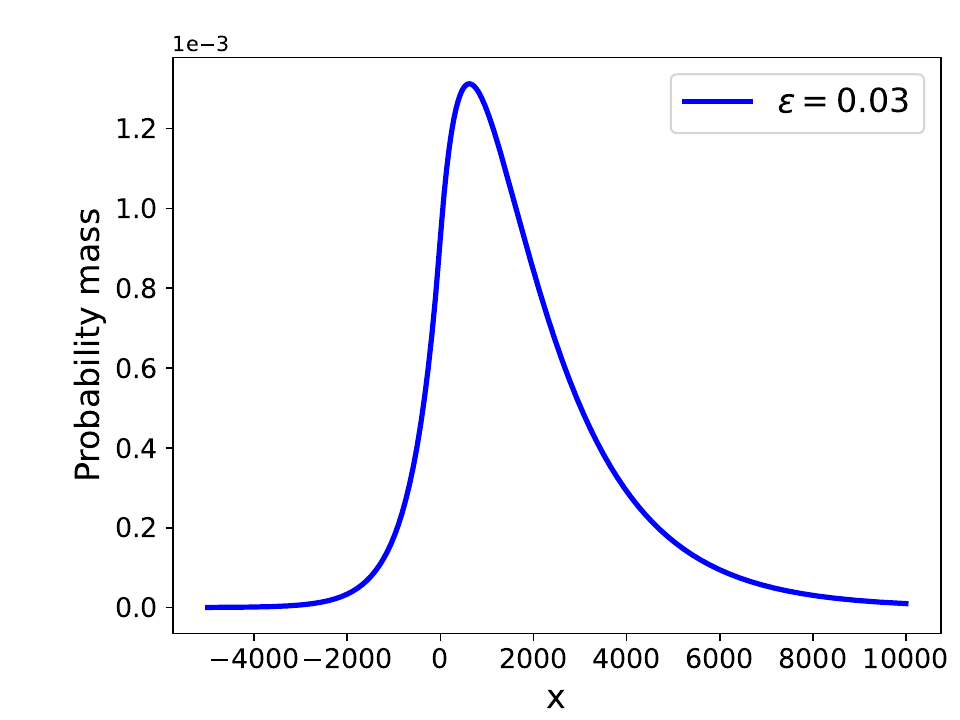}\label{fig: pdf-z}
    }
    \subfigure[$p(r)$ with varying $k$. The parameter $\alpha$ is set to $0$ and the overall privacy budget $\varepsilon$ is set to $0.1$.]{
    \includegraphics[width=0.4\linewidth]{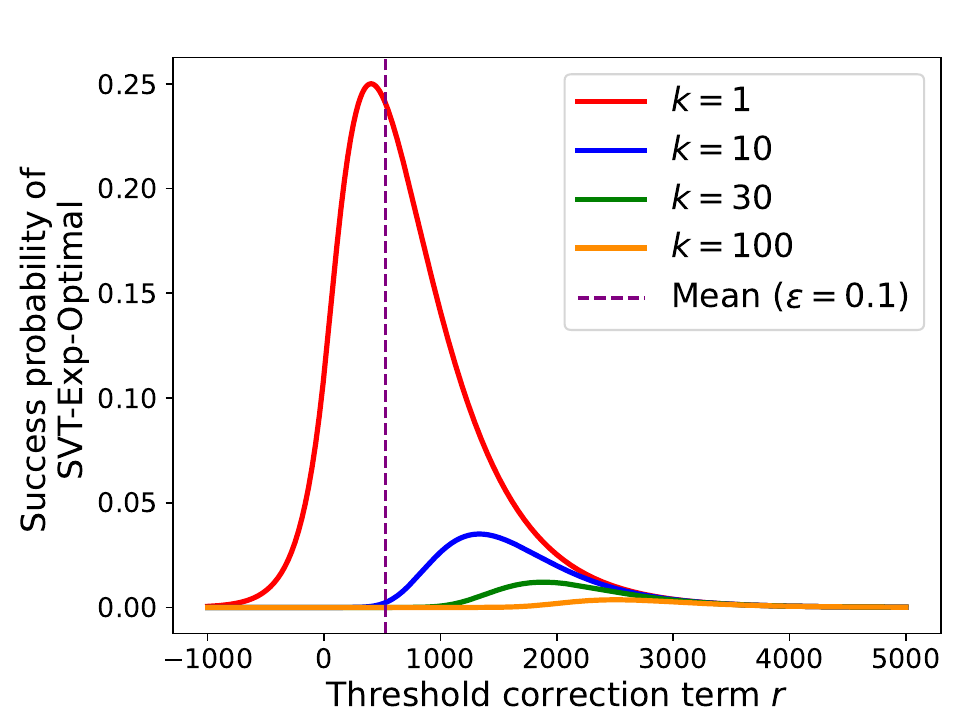}\label{fig: correction-with-k}
    }
    \subfigure[$p(r)$ with varying $\alpha$. The overall privacy budget $\varepsilon$ is set to $0.1$ and the parameter $k$ is set to $10$.]{
    \includegraphics[width=0.4\linewidth]{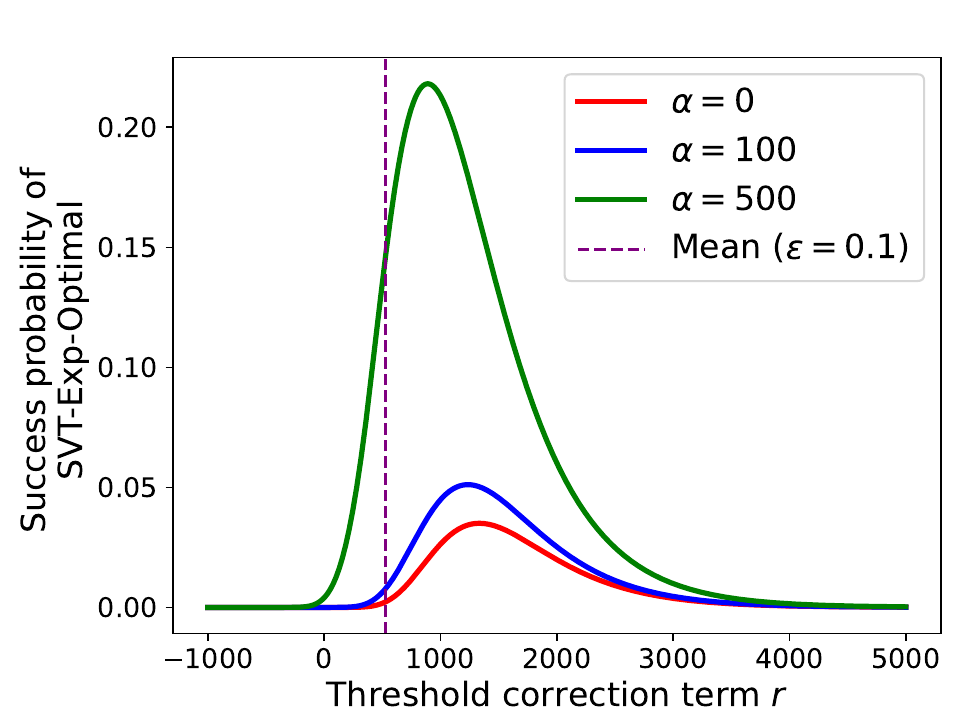}\label{fig: correction-with-alpha}
    }
    \subfigure[$p(r)$ with varying overall privacy budget $\varepsilon$. The parameters $\alpha$ and $k$ are set to $0$ and $10$, respectively.]{
    \includegraphics[width=0.4\linewidth]{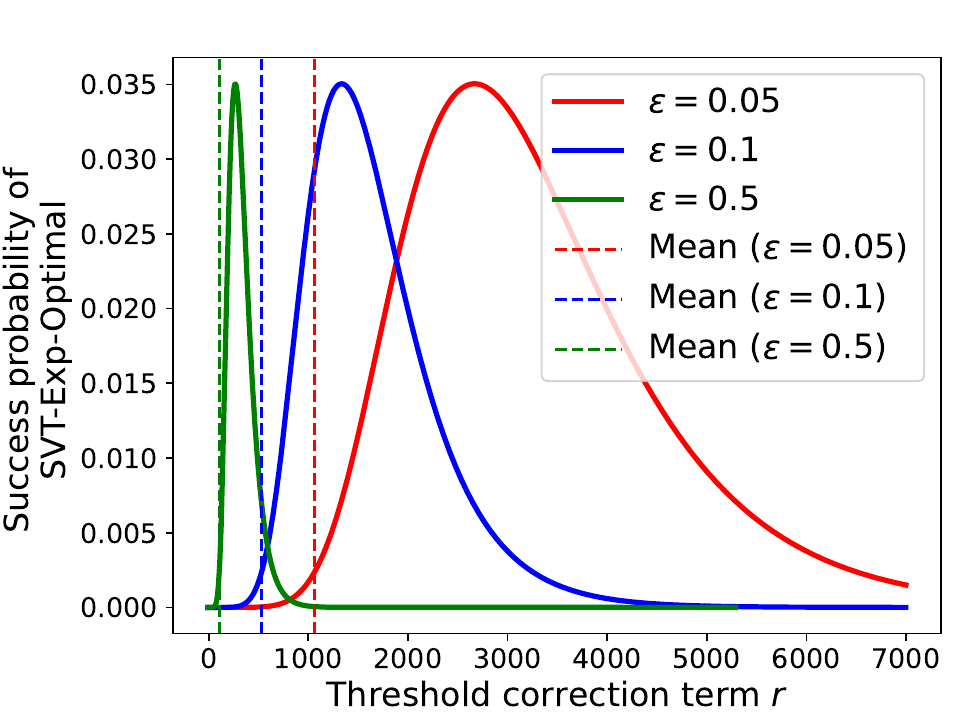}\label{fig: correction-with-eps}
    }
    \caption{The numerical analysis of Equation \ref{eq: success-rate}. Dashed lines are the mean of the exponential noise distribution. Each privacy $\varepsilon$ has a corresponding mean.}
    \label{fig: optimal-correction-term}
\end{figure}
To offer a better understanding, we perform a detailed analysis of Equation~\ref{eq: success-rate} and Equation~\ref{eq: optimal-term}~(Cf. Figure~\ref{fig: optimal-correction-term}), and the key findings are summarized as follows:

\underline{First}, as shown in Figure~\ref{fig: correction-with-k}, an increasing number of negative queries~(k) leads to a larger threshold correction term~($r^{op}$). As $k$ grows, SVT is more likely to halt before the last true positive query, which is referred to as `early halting'. Consequently, a larger correction term is necessary to mitigate early halting, thereby enhancing the precision of the SVT algorithm.
    
\underline{Second}, as shown in Figure~\ref{fig: correction-with-alpha}, an increase in the error tolerance parameter $\alpha$ results in a higher success probability and a smaller threshold correction term~$r^{op}$. The rationale is that a larger $\alpha$ allows more false negative or false positive queries during the query process. Consequently, a smaller threshold correction term is sufficient to achieve a high success probability. Note that $\alpha$ is a hyperparameter chosen by the data analysts. For generality, this work defaults to $\alpha=0$, irrespective of specific data distributions.

\underline{Third}, as shown in Figure~\ref{fig: correction-with-eps}, an increase in the privacy budget results in a smaller threshold correction term. This is because a larger privacy budget leads to lower noise variances, which reduces data distortion and results in noisy query results that are more tightly concentrated around their expectations. Consequently, a smaller threshold correction term is adequate to prevent the early-halting issue and achieve an optimal success probability.

\subsubsection{Numerical Computation Framework.}\label{subsec: correction_numerical}
To improve the scalability of the optimal threshold correction method, we further propose a numerical computation framework, which is summarized in Algorithm~\ref{alg: threshold_correction} and Algorithm~\ref{alg: discretizer}. Instead of deriving the analytical formula for $\Gamma(\cdot)$ in Equation~\ref{eq: success-rate}, we first discretize the noise distributions $\mathcal{N}_1$ and $\mathcal{N}_2$ used in Algorithm~\ref{alg: exp-svt}. We then convolve these discretized distributions using the Fast Fourier Transform (FFT) to obtain the numerical representation of $\Gamma$. The optimal threshold correction term, $\bar{r}^{op}$, is determined by maximizing $\bar{p}(r)$, which is calculated based on this numerical $\Gamma(\cdot)$.
\begin{algorithm}[htbp]
    \caption{CorrectionTerm: Numerical threshold correction.}\label{alg: threshold_correction}
    \LinesNumbered
    \KwIn{$b$, $\lambda$, $\alpha$, $m$, $e$, $k$}
    $B= F_{Exp}^{-1}(1-e)$;\tcp{Boundary}\label{line: boundary}
    $\bar{L}=\texttt{Discretizer}(Lap(b),m,B)$;\tcp*{Discretize Laplace distribution}
    $\bar{E}=\texttt{Discretizer}(Exp(\frac{1}{\lambda}),m,B)$;\tcp*{Discretize exponential distribution}
    $\bar{\Gamma}=\texttt{CONVOLVE}(\bar{L},\bar{E})$;\tcp*{Convolution with FFT}
    Compute $\bar{r}^{op}$ that maximize $\left(\bar{\Gamma}(\bar{r}+\alpha)\right)^{k}\cdot(1-\bar{\Gamma}(\bar{r}-\alpha))$\;
    \Return $\bar{r}^{op}$\;
\end{algorithm}
\begin{algorithm}[htbp]
    \caption{Discretizer}\label{alg: discretizer}
    \LinesNumbered
    \KwIn{\texttt{CDF}, $m$, $B$}
    $u=\frac{B}{m-1}$\;\label{line: chunk}
    \For{i=-m+1 to m-2}{
    $h_i=\texttt{CDF}((i+1)\cdot u)-\texttt{CDF}(i\cdot u)$\;\label{line: prob_mass}
    }
    $h_{-\infty} = \texttt{CDF}((1-m)\cdot u)$\;\label{line: negative_bracket}
    $h_{+\infty}=1-\texttt{CDF}(m-1\cdot u)$\;\label{line: positive_bracket}
    $\bar{D}=\{h_{-\infty}, h_{i} \text{ for } i \text{ in } (-m,m)\cap\mathbb{Z},  h_{+\infty}\}$\;
    \Return $\bar{D}$\;
\end{algorithm}
Hereinafter, we use the combination of Laplace and exponential distributions as an example to illustrate the concrete steps of our framework.
First, to convert the continuous noise distributions into discretized sequences, we define a boundary $B$ within which most of the probability mass~(\eg, $1-e$ as indicated in Line~\ref{line: boundary} of Algorithm~\ref{alg: threshold_correction}) is concentrated.
Next, in Algorithm~\ref{alg: discretizer}, we discretize the event space within $B$ into $u$ chunks, with each chunk having a mesh size $m$ (Line~\ref{line: chunk}). The events in each chunk are aggregated into a new event with a probability mass $h_i$ (Line~\ref{line: prob_mass}). Additionally, the probability mass of events exceeding $B$ is assigned to the positive infinity bracket, while events below $-B$ are placed in the negative infinity bracket (Lines~\ref{line: negative_bracket} and Line~\ref{line: positive_bracket} in Algorithm~\ref{alg: discretizer}).
Subsequently, we use Fast Fourier Transform~(FFT)~\cite{nussbaumer1982fast} to compute the discretized distribution $\bar{\Gamma}$ by convolving the discretized Laplace distribution $\bar{L}$ and the discretized exponential distribution $\bar{E}$, where we define that $\infty+x=\infty$ for any $x\in \mathbb{R}$. 
Finally, the correction term $\bar{r}$ is determined using $\bar{\Gamma}$ based on Equation~\ref{eq: optimal-term}.

The primary advantage of this numerical computation framework is its strong scalability to various types of noise distributions. While we can derive the explicit formula for $\Gamma(\cdot)$ when using Laplace and exponential distributions for query and threshold perturbation, respectively (Cf. Appendix~\ref{sec: pr_analytical}), deriving such formulas becomes complex when combining other distributions, such as Gaussian and exponential. This numerical framework allows privacy practitioners to apply the optimal threshold correction method to any noise distributions.
It is important to note that our numerical framework may introduce additional computation costs due to the discretization and convolution steps. To provide further insights, we analyze the trade-off between the running time of our algorithm and the estimation accuracy of $\bar{p}(r)$ and $\bar{r}^{op}$ in Appendix~\ref{sec: efficiency_numerical}. In summary, our analysis demonstrates that the proposed method can achieve highly accurate estimations with minimal additional computation cost.

\subsection{Appending Strategy}\label{subsec: appending}
As discussed earlier, our threshold correction method enhances the performance of SVT by increasing the value of the threshold, which in turn improves the precision of the algorithm. However, a higher threshold can also result in a reduced recall. This occurs because true positive queries with relatively small results are less likely to exceed the adjusted threshold.

To further enhance recall, we propose an appending strategy. As shown in Figure~\ref{fig: framework} Step~\textcircled{7} and Algorithm~\ref{alg: exp-svt} from Line~\ref{line: start_appending} to Line~\ref{line: end_appending}, each query identified as negative by the algorithm is appended to the end of the query queue for another round of querying. In settings where there is an infinite number of queries~(\eg, streaming data analysis, where new data and queries continuously arrive), these queries may be randomly inserted back into the queue.

The rationale behind this strategy is twofold. First, as stated in Theorem~\ref{theo: privacy-guarantee}, re-querying the outputs identified as negative incurs no additional privacy cost.
Second, conducting multiple rounds of querying increases the likelihood of correctly identifying true positive queries that were misclassified initially. Simultaneously, it also enlarges the probability gap between positive queries correctly identified as positive and negative queries mistakenly classified as positive. Concretely, for each query $q_{i}(D)$, the probability of it identified as positive in each round is $$p_{i}=\Pr[q_{i}(D)+\texttt{Exp}(\frac{1}{\lambda})\geq T_{i}+r+\texttt{Lap(b)}].$$ 
After $t$ rounds of querying, this probability increases to $tp_{i}$. Also note that the larger the difference $q_{i}(D)-T_i$, the higher the $p_{i}$. Therefore, after multiple rounds, the probability of $q_{i}(D)$ being identified as positive is $t\left(p_{i}-p_{j}\right)$ higher than for $q_{j}\left(D\right)$, where $q_{i}(D)-T_{i}\geq q_{j}(D)-T_{j}$.
\section{Evaluation}\label{sec: evaluation}
Before presenting our main results in Section~\ref{subsec: evaluation_results}, we detail the used datasets in Section~\ref{subsec: datasets}, the adopted utility metrics in Section~\ref{subsec: evaluation_utility_metric}, the compared baselines in Section~\ref{subsec: baselines}, and the selected parameters for our experiments in~Section \ref{subsec: parameter_selection}.
\subsection{Datasets and Implementations}\label{subsec: datasets}
This work uses six different datasets: three real-world and three synthetic. Detailed information is provided in Table~\ref{tbl: datasets}.

\begin{table}[!htbp]
\caption{Dataset details. \# $\cdot$ denotes the number of $\cdot$, and the `Threshold' is the predefined threshold we use in SVT for each dataset.}\label{tbl: datasets}
    \centering
    \scalebox{0.88}{
    \begin{tabular}{lcccc}
    \toprule
        ~ & \# of records & \# of items & Threshold & Type\\ \hline
        Binary & - & 10,000 & 500 &\multirow{3}{*}{synthetic}\\
        Zipf & - & 10,000 & 200 & \\ 
        T40I10D100K~\cite{qty} & 100,000 & 942 & 11,850 &\\ \hline
        BMS-POS~\cite{zheng2001real} & 515,597 & 1,657 & 13,600 &\multirow{3}{*}{real-world} \\ 
        Kosarak~\cite{aggarwal2009frequent} & 990,002 & 41,270 & 10,500 &\\
        Adult~\cite{adult} & 48,843 & 123 & 200 & \\
        \bottomrule
    \end{tabular}}
\end{table}

For the real-world datasets and the T40I10D100K dataset, each item's score is based on its frequency across records. Specifically, for each item $I_{i}$ in the dataset, the score $s_i=\sum\limits_{j=1}^{n}\mathbbm{1}(I_j^{i}=1)$, where $I_{j}^{i}$ is an indicator of whether the record $R_j$ contains item $I_{i}$, and $n$ is the total number of records.
For the remaining synthetic datasets, scores are assigned directly. In the Binary dataset, each positive query is assigned a score of 1,000, while negative queries receive a score of 0.
In the Zipf dataset, each item's score is proportional to $\frac{1}{i}$, with the score calculated as $s_i=\frac{1}{i}\times10,000$.
We use $m$ to denote the total number of items.
Additionally, Figure~\ref{fig: dataset-properties} plots the item scores for each dataset for further insights.
\begin{figure}[tbp!]
    \centering
    \includegraphics[width=0.6\linewidth]{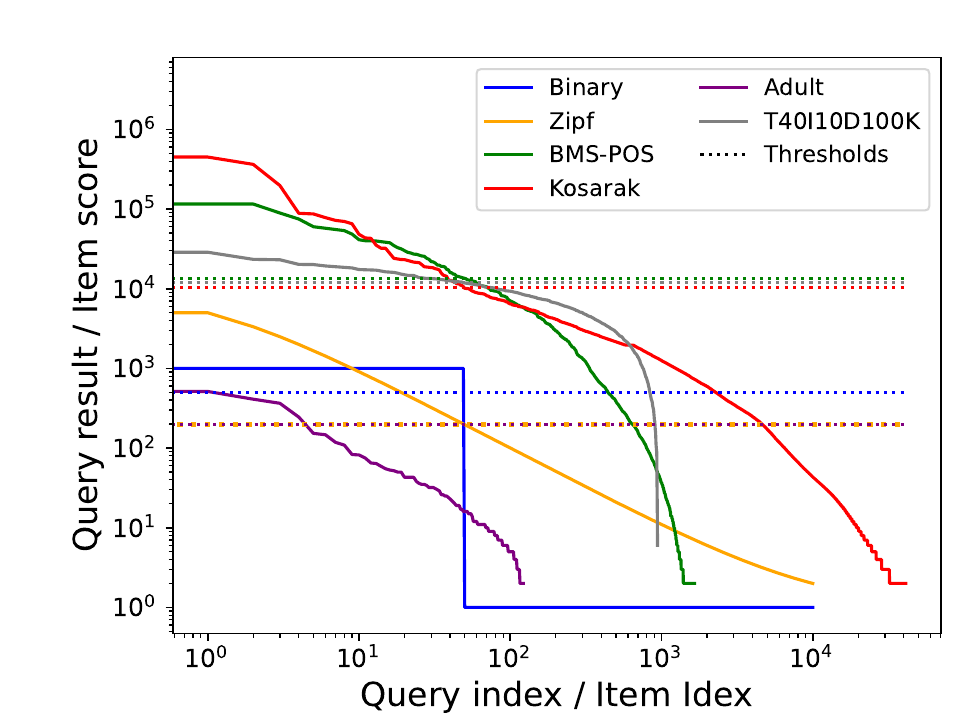}
    \caption{The scores of items of 6 datasets. All the items are in descending order based on their scores, which are plotted in a log-based manner for clarity. Dashed lines in Corresponding colors are the predefined thresholds we adopt in this work.}
    \label{fig: dataset-properties}
\end{figure}

All experiments are conducted on a laptop~(6-core Intel Core i7 CPU at 2.2 GHz with 16-GB RAM).

\subsection{Queries and Utility Metric}\label{subsec: evaluation_utility_metric}
\subsubsection{Queries}
The effectiveness of our proposed method is general but is validated here within the top-$c$ selection problem, a common application for SVT \cite{lyu2017understanding, zhu2020improving}. In this scenario, we use SVT to approximately query items with the top-$c$ highest scores in each dataset. 

Specifically,
we first shuffle the items randomly, then query each item's score~(\ie, $q_i(D)=s_{i}$) and compare it to the threshold $T$.
If $q_i(D)\geq T$, the corresponding query index $i$ is output. The algorithm halts either when $c$ indices is output or the number of queries reaches a maximum limit $k_max$. The choice of $T$ is crucial for query accuracy, but determining an appropriate threshold is beyond this work's scope. Various methods for threshold determination are discussed in the literature~\cite{lee2014top, carvalho2020differentially}.

\subsubsection{Utility Metrics}
We evaluate SVT's performance on the top-$c$ selection problem using two primary metrics: \textbf{F1-score}~\cite{fawcett2006introduction} and normalized cumulative rank (NCR).
The F1-score is computed as:
$$F1=\frac{2\cdot\text{precision} \cdot \text{recall}}{\text{precision} + \text{recall}}=\frac{2TP}{2TP+FP+FN},$$
where $TP$ is the number of true positive queries, $FP$ is the number of false positive queries,and $FN$ is the number of false negative queries.
However, the F1-score is more suited for unordered settings, where missing a top result incurs the same penalty as missing a lower-ranked result~\cite{lyu2017understanding}. To address this, we also use the Normalized Cumulative Rank (NCR)~\cite{lyu2017understanding}. 
In NCR, each query $q_i$ is assigned a rank score defined as follows: the top query~(\ie, top-$1$) receives a score of $c$, the next receives $c-1$, and so on~\cite{lyu2017understanding}. Queries below the threshold receive a score of $0$. The total score for positive outcomes is then normalized to the range $[0,1]$ by dividing by $\frac{c(c+1)}{2}$, the maximum possible score.
Since the results for the F1-score demonstrate similar trends as NCR, detailed F1-score results are provided in Appendix~\ref{sec: eva_f1}.

\subsection{Baselines}\label{subsec: baselines}
Table~\ref{tbl: baselines} lists all the methods compared in this work. We evaluate our method, which uses exponential noise for query perturbation, against \textbf{SVT-Lap}~\cite{lyu2017understanding}, \textbf{SVT-Gau}~\cite{zhu2020improving}, and \textbf{SVT-Gum}. While SVT-Lap and SVT-Gau are two of the most frequently used SVT variants in the literature, SVT-Gum is a new variant proposed in this work. SVT-Gum meets the constraints in Theorem~\ref{theo: privacy-svt-lips} but has a slightly larger variance compared to SVT-Exp. More details about SVT-Gum are provided in Appendix~\ref{sec: privacy-svt-gum}.
Additionally, to provide more insights, we also compare our method with the `upper bound' of query accuracy, which is obtained by directly ranking the noisy query results perturbed with random noise drawn from $\texttt{Exp}\left(\frac{\Delta}{\varepsilon_2}\right)$.
Regarding threshold correction, we compare SVT-Exp with optimal threshold correction to SVT-Exp with no threshold correction and to SVT-Exp with mean correction~(\ie, the na\"ive solution in Section~\ref{subsec: correction_motivation}).

\begin{table}[!ht]
\caption{Baseline methods. \texttt{Lap}, \texttt{Exp}, \texttt{Gau}, and \texttt{Gum} refer to Laplace, exponential, Gaussian, and Gumbel distribution, respectively. 'Optimal' represents our optimal threshold correction method, while 'Mean' represents the na\"ive correction method by subtracting the mean of the noise.}\label{tbl: baselines}
    \centering
    \scalebox{0.9}{
    \begin{tabular}{cccc}
    \toprule
        ~ & \makecell{Noise \\ for query} & \makecell{Noise \\ for threshold} & \makecell{Threshold \\ correction}  \\ \hline
        \makecell{SVT-Exp upper bound} & - & \texttt{Exp} & -  \\ 
        \makecell{SVT-Exp (optimal)~(Alg. \ref{alg: exp-svt})} & \texttt{Lap} & \texttt{Exp} & Optimal  \\ 
        \makecell{SVT-Exp (mean)} & \texttt{Lap} & \texttt{Exp} & Mean  \\ 
        \makecell{SVT-Exp (no)} & \texttt{Lap} & \texttt{Exp} & No  \\ 
        \makecell{SVT-Gumbel} & \texttt{Lap} & \texttt{Gum} & Mean  \\ 
        SVT-Lap & \texttt{Lap} & \texttt{Lap} & No  \\ 
        SVT-Gau & \texttt{Gau} & \texttt{Gau} & No \\
        \bottomrule
    \end{tabular}}
\end{table}

\subsection{Parameter Selection}\label{subsec: parameter_selection}
This work involves several privacy parameters: the failure rate $\delta$, the overall privacy budget $\varepsilon$, the privacy budget for threshold perturbation~(\ie, $\varepsilon_1$), and the privacy budget for query perturbation~(\ie, $\varepsilon_2$).
For SVT-Gau, $\delta$ is set to $\frac{1}{n}$, following convention~\cite{dwork2014algorithmic,liu2022collecting}, while in other settings, $\delta$ is set to $0$.
As also mentioned in Section~\ref{subsec: exp-svt}, $\varepsilon$ is divided into $\varepsilon_1$ and $\varepsilon_2$ such that $\varepsilon_2=w\varepsilon_1$ and $\varepsilon=\varepsilon_1+\varepsilon_2$. The parameter $w$ is computed by minimizing Equation~\ref{eq: variance}~\cite{lyu2017understanding}. Table~\ref{tbl: privacy-alloation} lists $w$ for different baselines, with its proof provided in Appendix~\ref{sec: privacy-allocation-proof}.
\begin{table}[!htbp]
\caption{Optimal Privacy Allocation. While $\varepsilon$ is the overall privacy budget consumption, $\varepsilon_1$ and $\varepsilon_2$ are the privacy budget for threshold perturbation and query perturbation, respectively. $w$ is an indicator of privacy budget allocation.}\label{tbl: privacy-alloation}
    \centering
    \scalebox{1}{
    \begin{tabular}{ccccc}
    \toprule
        ~ & SVT-Exp & SVT-Gum & SVT-Lap & SVT-Gau\\ \midrule
        $\varepsilon$ & \multicolumn{4}{c}{$\varepsilon = \varepsilon_1+\varepsilon_2,\ \varepsilon_2=w\varepsilon_1$}\\ \midrule
        $w$ & $\left(\sqrt{2}c\right)^{2/3}$ & $(\frac{\pi c}{\sqrt{3}})^{2/3}$ & $\left(2c\right)^{2/3}$ &$(2c)^{2/3}$ \\ \bottomrule
    \end{tabular}}
\end{table}

\subsection{Evaluation Results}\label{subsec: evaluation_results}
\begin{figure*}[tbp!]
    \centering
    \subfigure[Binary dataset]{
    \includegraphics[width=7cm]{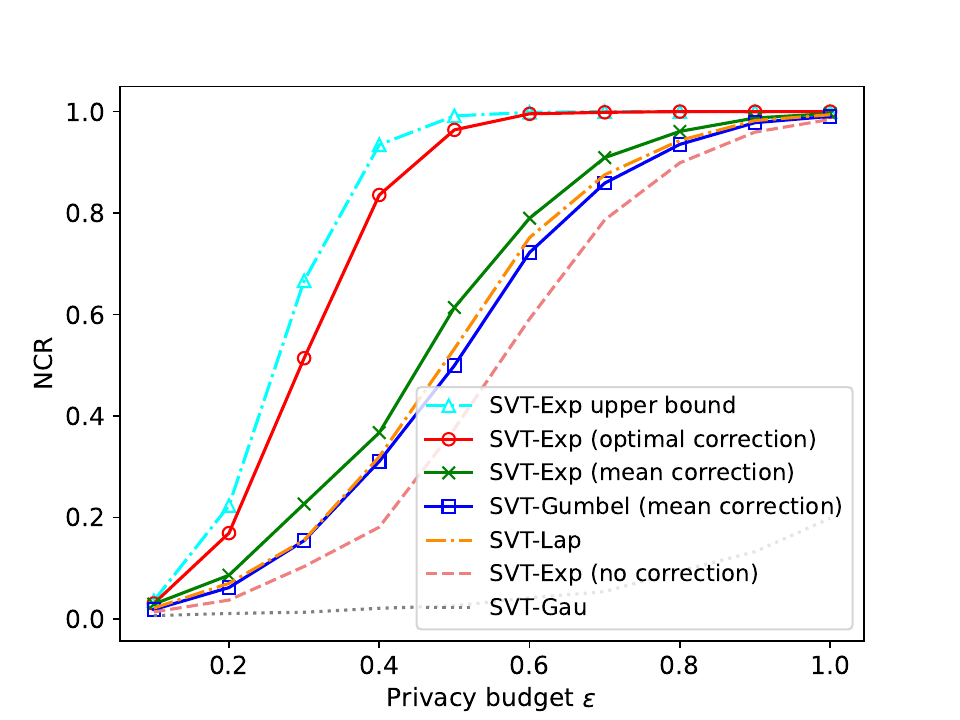}
    }
    \subfigure[Zipf dataset]{
    \includegraphics[width=7cm]{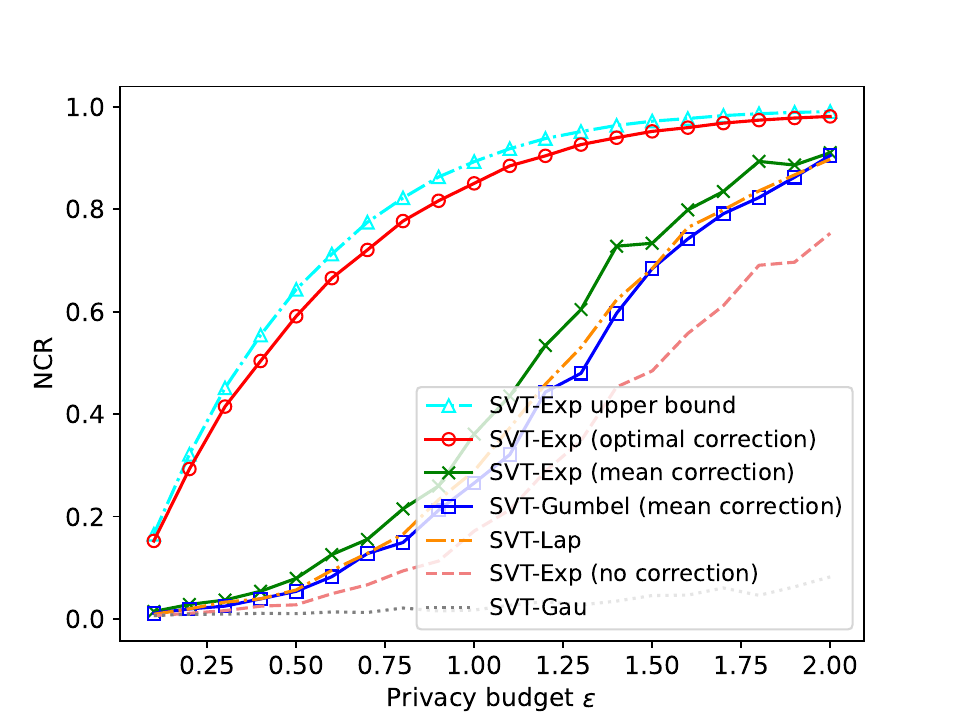}\label{fig: overall-zip}
    }
    \subfigure[BMS-POS dataset]{
    \includegraphics[width=7cm]{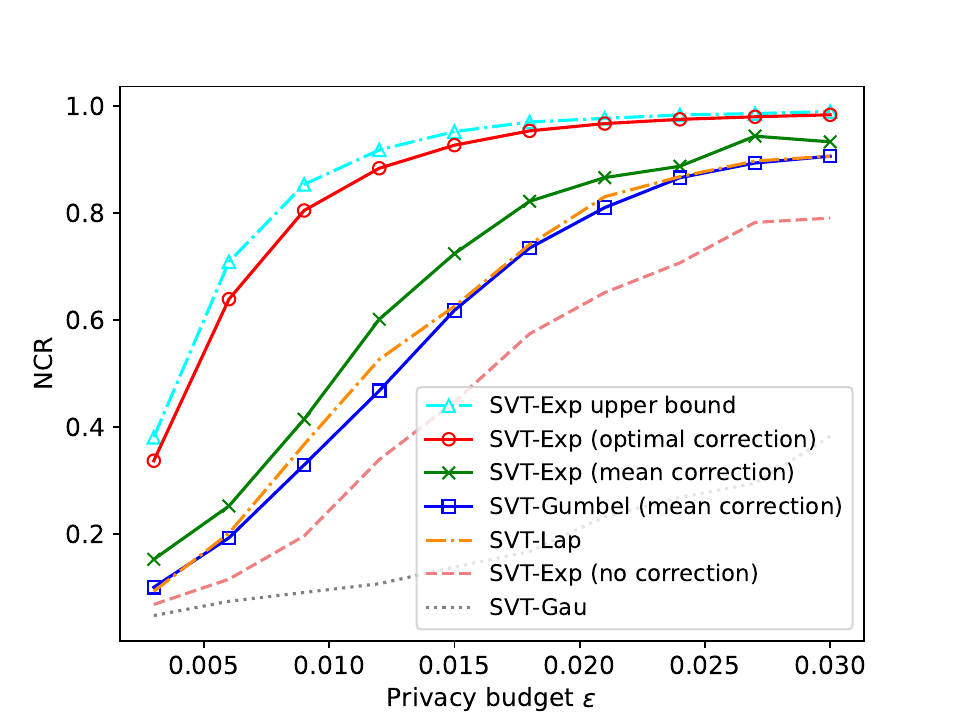}
    }
    \subfigure[Kosarak dataset]{
    \includegraphics[width=7cm]{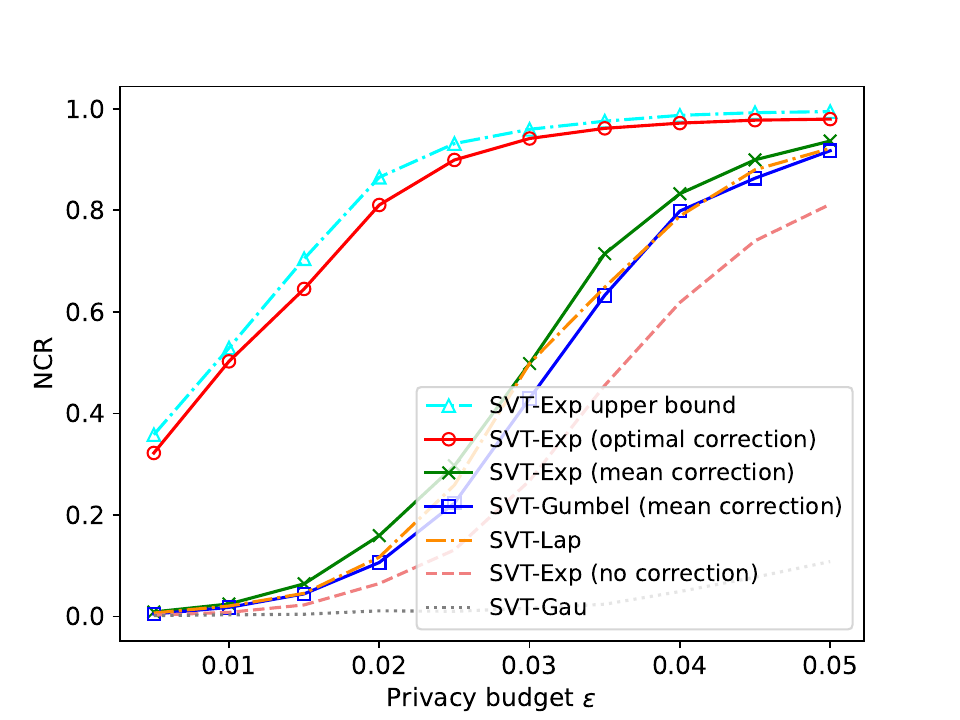}
    }
    \subfigure[Adult dataset]{
    \includegraphics[width=7cm]{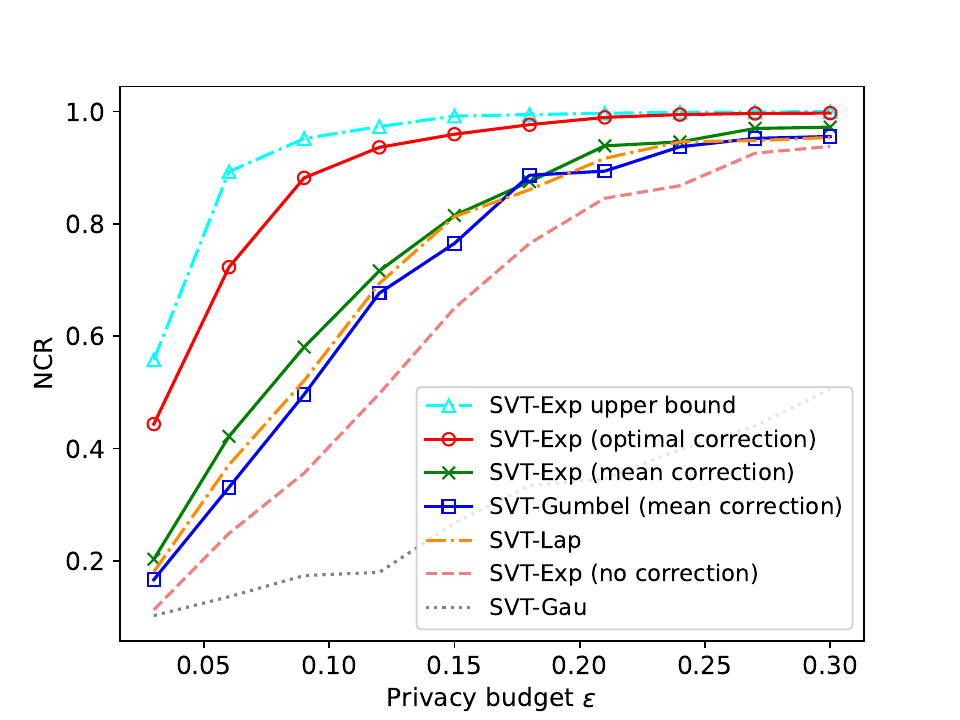}
    }
    \subfigure[T40I10D100K dataset]{
    \includegraphics[width=7cm]{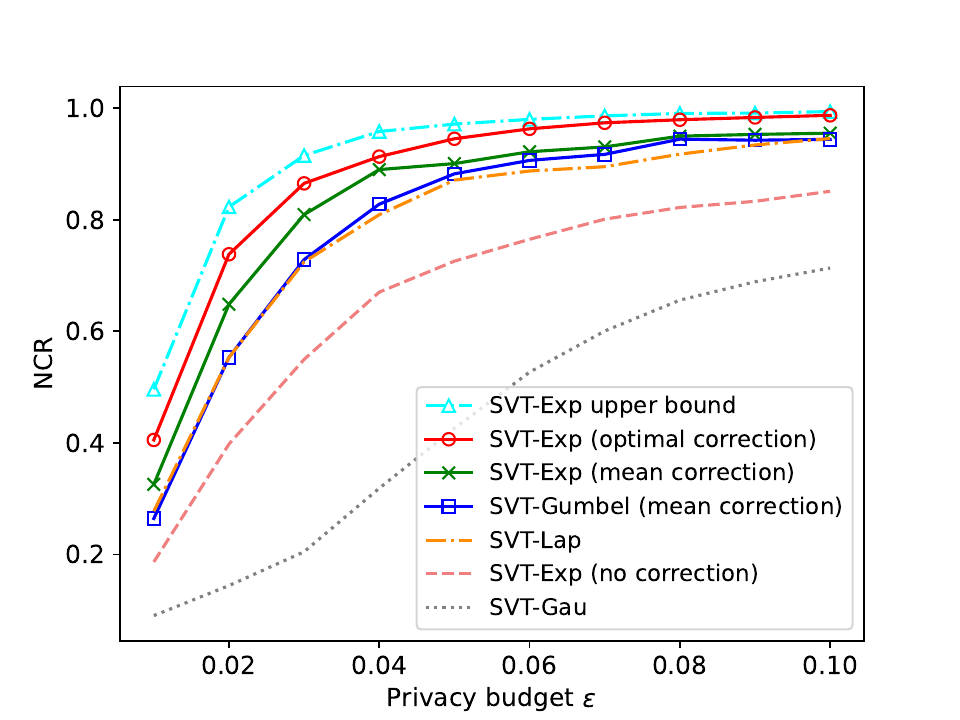}
    }
    \caption{NCR on six datasets with $c=5$ for Adult dataset and $c=50$ for the remaining datasets, $\alpha=0$, $k=\lfloor\frac{m}{c}\rfloor$, \texttt{RESAMPLE}=\texttt{False}, and \texttt{APPEND}=\texttt{True}. Sequential composition theorem is adopted for computing $\varepsilon$.}
    \label{fig: overall-50}
\end{figure*}

Hereinafter, we demonstrate the effectiveness of our proposed methods through experiments on six datasets, with an in-depth analysis of the evaluation results.
Our main results, presented in Figure~\ref{fig: overall-50}, show a significant advantage of Algorithm~\ref{alg: exp-svt} over other baselines. The correctness and effectiveness of our optimal threshold correction method are shown in Table~\ref{tbl: threshold-correction-terms} and Figure~\ref{fig: overall-50}. Finally, the effectiveness of our appending strategy and the trade-off it yields between efficiency and query accuracy is illustrated in Figure~\ref{fig: traverse}.

\subsubsection{Effectiveness of Algorithm~\ref{alg: exp-svt}.}
First, as illustrated in Figure~\ref{fig: overall-50}, our proposed method~(\ie, SVT-Exp (optimal correction), shown with red-circle line) significantly outperforms others baselines by up to $50\%$ on NCR across all tested privacy regions and datasets.
Moreover, compared to other baselines, the performance of our proposed method closely matches the empirical SVT-Exp upper bound, further demonstrating its effectiveness.
Second, our method shows a particular larger advantage when the gaps between predefined threshold and query results are relatively large~(Cf. Figure~\ref{fig: overall-zip} and Figure~\ref{fig: dataset-properties}). This is because our threshold correction better filters out true positive queries far above the threshold, while our appending strategy effectively distinguishes smaller true positives from true negatives far below the threshold.
Third, all compared variants, including ours, demonstrate higher NCR on datasets where the gaps between the threshold and query results are large even under smaller privacy budget~(\eg, $\varepsilon=0.05$ on Kosarak), as these datasets are generally more robust to noise.
Fourth, the NCR of other baselines decreases from SVT-Lap, SVT-Gumbel, to SVT-Gau, which aligns with our theoretical analysis in Figure~\ref{fig: accuracy} and Figure~\ref{fig: variance}.

\subsubsection{Effectiveness of the Optimal Threshold Correction}\label{subsec: evaluation-optimal-threshold-correction}
\begin{figure*}[tbp!]
    \centering
    \subfigure[Binary dataset]{
    \includegraphics[width=7cm]{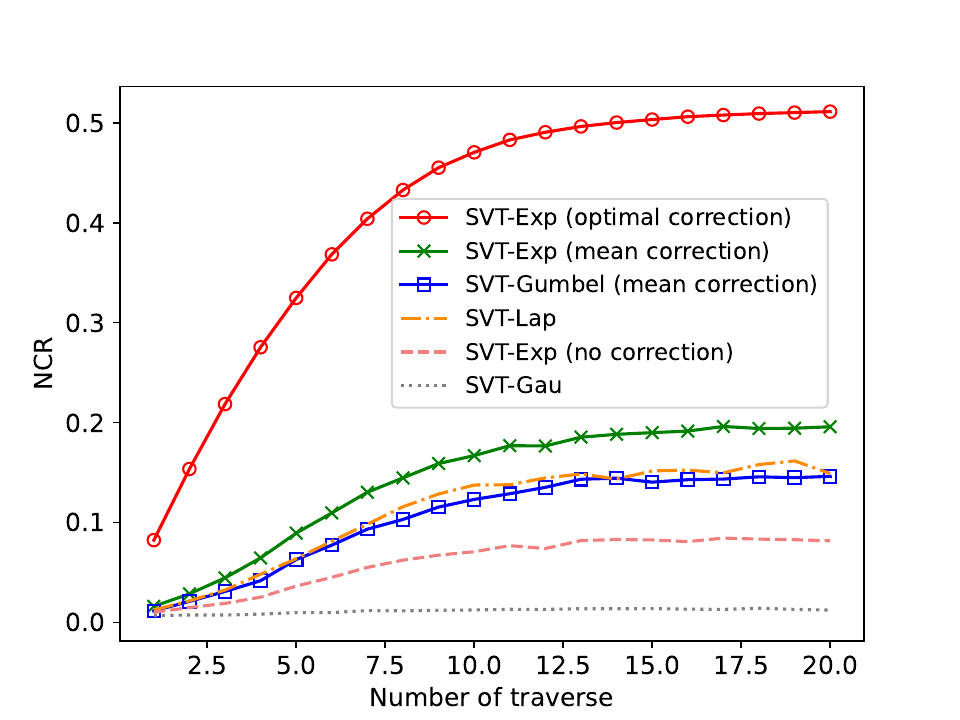}
    }
    \subfigure[Zipf dataset]{
    \includegraphics[width=7cm]{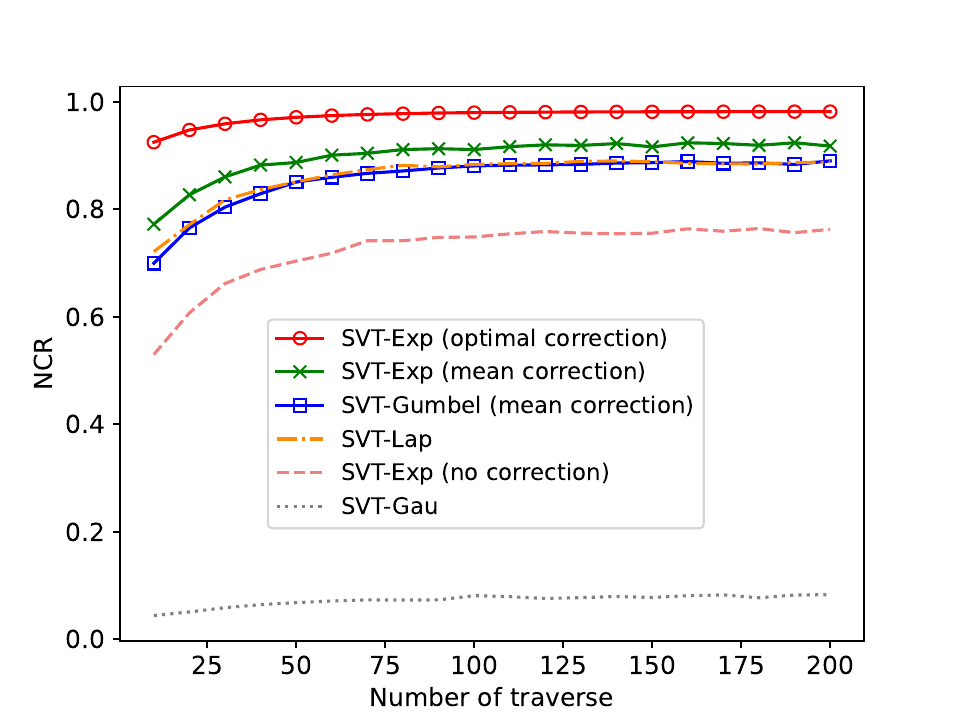}
    }
    \subfigure[BMS-POS dataset]{
    \includegraphics[width=7cm]{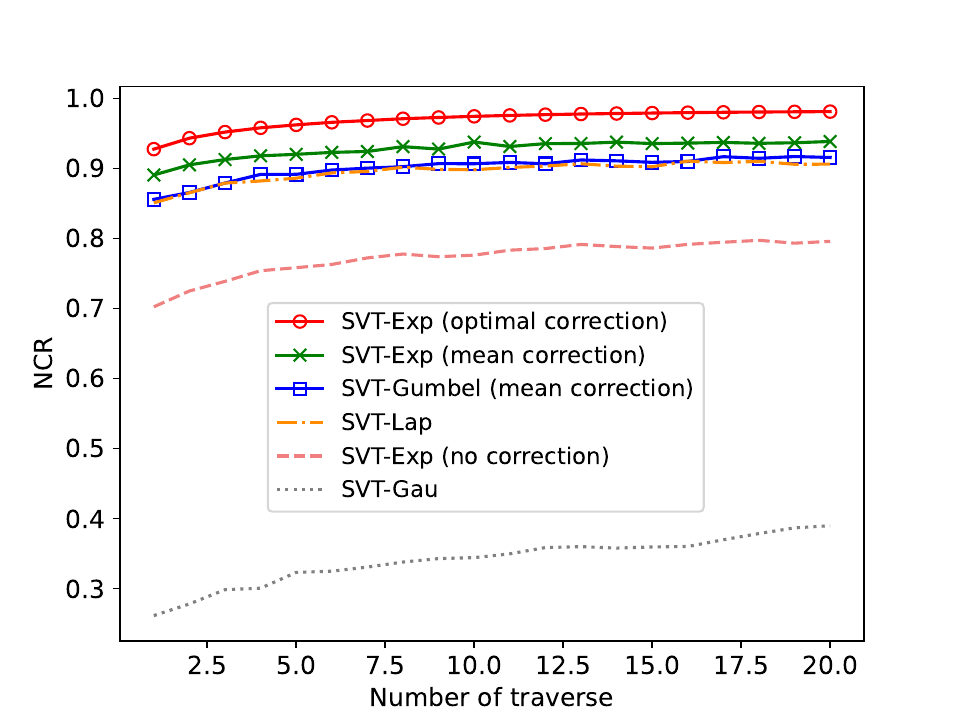}
    }
    \subfigure[Kosarak dataset]{
    \includegraphics[width=7cm]{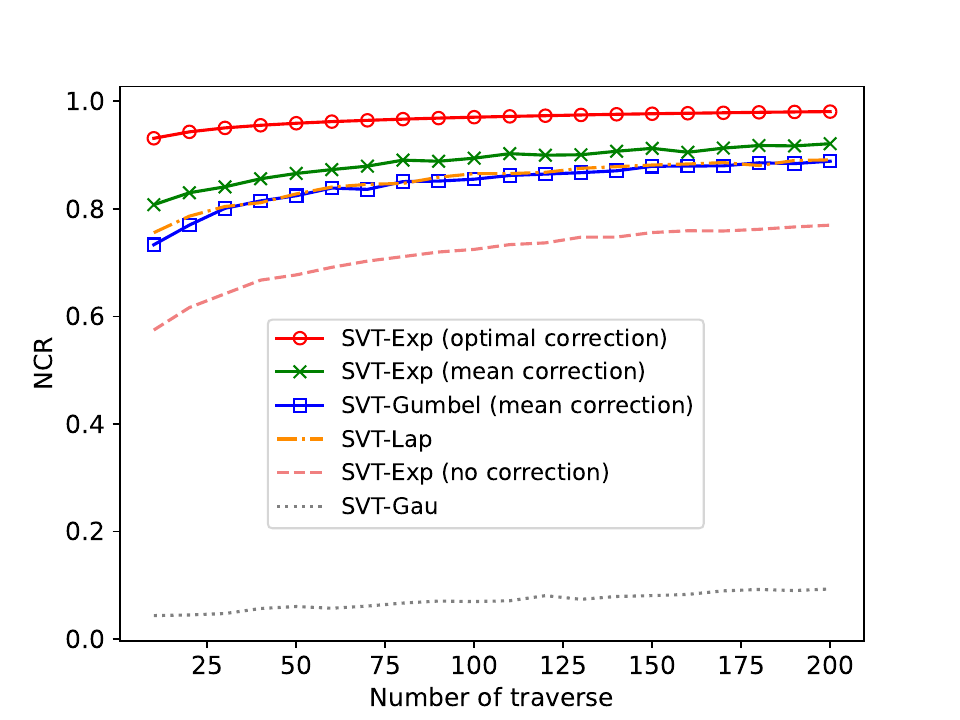}
    }
    \subfigure[Adult dataset]{
    \includegraphics[width=7cm]{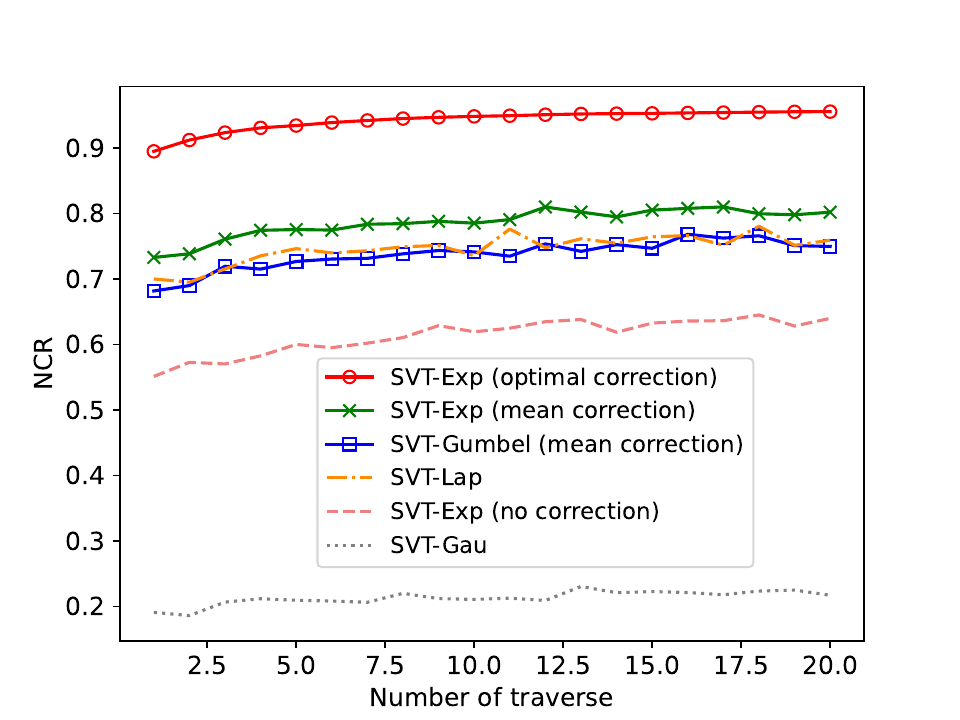}
    }
    \subfigure[T40I10D100K dataset]{
    \includegraphics[width=7cm]{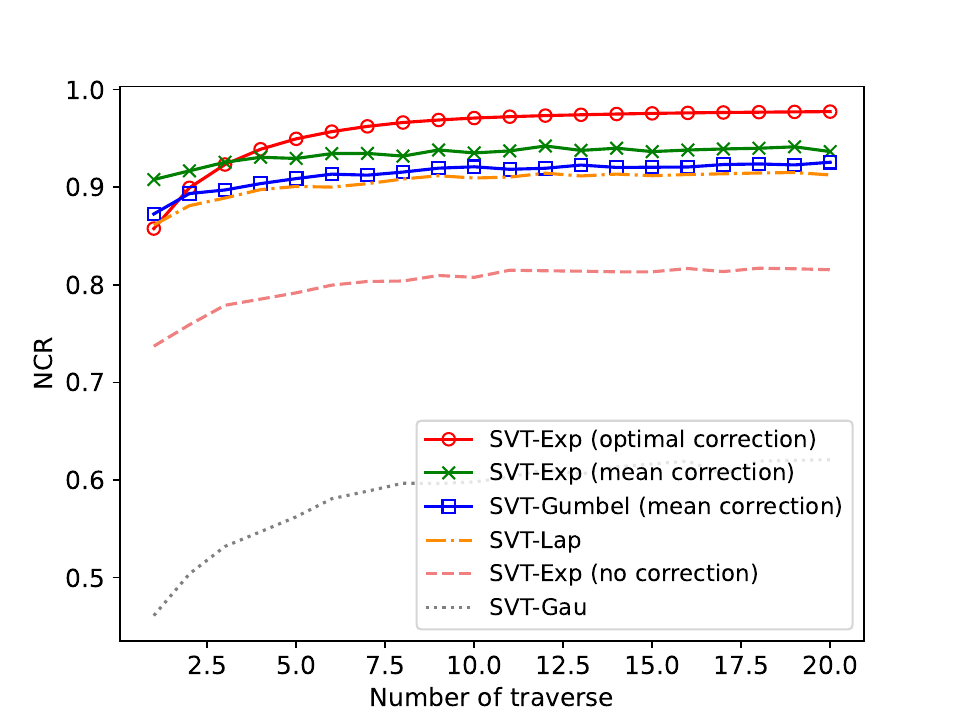}
    }
    \caption{NCR on six datasets with varying number of traverses. Parameter $c$ is set to $5$ for Adult and $50$ for the remaining datasets, $\alpha=0$, $k=\lfloor\frac{m}{c}\rfloor$. \texttt{RESAMPLE}=\texttt{False}, and \texttt{APPEND}=\texttt{True}. Sequential composition theorem is adopted for computing $\varepsilon$.}
    \label{fig: traverse}
\end{figure*}
\begin{table}[!htbp]
    \caption{Comparison of the threshold correction term between the optimal threshold correction method and the mean correction method, where $c=50$ and $\alpha=0$.}
    \centering
    \begin{tabular}{cccccc}
    \toprule
        $\varepsilon$ & 0.01 & 0.05 & 0.1 & 1 & 2  \\ \midrule
        Optimal & 35450.45 & 7147.15 & 2803.30 & 280.78 & 138.89  \\ 
        Mean & 5332.08 & 1046.42 & 523.21 & 52.32 & 26.16 \\ \bottomrule
    \end{tabular}
    \label{tbl: threshold-correction-terms}
\end{table}
We compare our SVT-Exp~(optimal correction) with SVT-Exp~(mean correction) and SVT-Exp~(no correction) in Figure~\ref{fig: overall-50}, while Table~\ref{tbl: threshold-correction-terms} presents the values of optimal correction term $r^{op}$ with different values of $\varepsilon$.
First, as shown in Table~\ref{tbl: threshold-correction-terms}, the optimal term is usually larger than the mean of the injected noise, aligning with our analysis in Section~\ref{subsec: correction_methodology}.
Second, SVT-Exp without correction demonstrates relatively low NCR even compared to, \eg, SVT-Lap, highlighting the need for a threshold correction method when using exponential noise in SVT.
Third, SVT-Exp with our optimal correction terms drastically outperforms SVT-Exp with mean correction, which slightly outperforms the other baselines. This demonstrates the effectiveness of both exponential noise and optimal threshold correction methods.

\subsubsection{Effectiveness of the Appending Strategy.}
To demonstrate the effectiveness of our appending strategy, we compare the performance of all our baselines at the same overall privacy budget consumption with varying numbers of traverse~(\ie, the number of times a query with the noisy negative outcome compared with the threshold)\footnote{For fairness, under each number of traverses, we ensure that each baseline consumes the same amount of the privacy budget before comparing their performance. Hence, the overall privacy budget consumption varies with a different number of traverses.}. As shown in Figure~\ref{fig: traverse}, our proposed method performs better across most datasets under all tested traverse numbers. Notably, our method shows slight inferiority on the T40I10D100K dataset under a very small number of traverses. One possible reason is the relatively small gap between positive query results and the predefined threshold on T40I10D100K~(Cf. Figure~\ref{fig: dataset-properties}): due to our relatively high optimal threshold correction term, more queries are needed to filter out the true positive queries, as explained in Section~\ref{subsec: correction_methodology}.
Even though, it is worth noting that our method outperforms the others on the T40I10D100K dataset with only a few more traverses~(\eg, less than $5$). That is also to say, our proposed method yields better performance compared to other baselines with only a marginal additional computation cost.
\section{Related Work}

Our work focuses on enhancing one of the most fundamental DP algorithms initially introduced by Dwork et al.~\cite{dwork2009complexity,dwork2014algorithmic, roth2010interactive}, namely the \svtfull~(SVT).
Benefiting from its key feature where only positive outcomes consume privacy, SVT and its variants~\cite{kaplan2021sparse} have become crucial components in numerous algorithms across various domains and scenarios~\cite{lee2014top,chen2015differentially,stoddard2014differentially,shokri2015privacy,bun2017make,hasidim2020adversarially,zhang2021wide,kaplan2023differentially}. 
The applications of SVT span shared-parameter selection in deep learning models~\cite{shokri2015privacy}, time-stamp selection for streaming data~\cite{hasidim2020adversarially}, online query answering~\cite{bun2017make}, among others.
By improving the accuracy of SVT in a broad context, our work has the potential to enhance the performance of the aforementioned algorithms.

To gain deeper insights into the \svtfull, Lyu~\etal summarize all prevalent SVT variants in~\cite{lyu2017understanding}. Apart from the thorough privacy analysis under the classic notion of DP, they also propose an enhanced SVT with the optimal privacy budget allocation scheme. Shortly after, Zhu~\etal~\cite{zhu2020improving} revisit the privacy analysis of the SVT algorithm under a relaxed privacy notion, namely R\'enyi differential privacy~(RDP)~\cite{mironov2017renyi}. Their research explores the utility of Gaussian noise for both threshold and query perturbation within the SVT framework. They argue that Gaussian noise may outperform Laplace noise in specific scenarios, such as when query results predominantly fall below predefined thresholds or when the number of queries is limited and not excessively large.
In alignment with Lyu~\etal~\cite{lyu2017understanding}, our work conducts the privacy analysis through the lens of DP, aiming for a generic result. Moreover, our work focuses more on noise selection rather privacy allocation, which distinguishes us from Lyu~\etal. Also, we demonstrate
the advantage of leveraging
the exponential noise in comparison to other alternatives, which further distinguishes us from Zhu~\cite{zhu2020improving}.

Another line of research improves the SVT by exploiting information revealed in the algorithm for free. Concretely, Ding~\etal~\cite{ding2023free} pinpoint that releasing the gap between the noisy thresholds and the noisy query results incurs no extra privacy costs. Hence, they use the free gap as guidance to design a better privacy budget allocation scheme. Kaplan~\etal.~\cite{kaplan2021sparse} improve SVT by iteratively deleting elements that contribute to current positive outcomes, which enables a more refined privacy accountant.
Since both methods can be applied on top of our proposed method, we leave them out for comparison in this work.

Notice that we are not the pioneer effort in utilizing the exponential noise for achieving differential privacy in the literature. Previous studies from Durfee \etal\cite{durfee2019practical} and Shekelyan~\etal\cite{shekelyan2022differentially} demonstrate the use of exponential in achieving the exponential mechanism~(EM), another differentially private algorithm primarily employed for privacy-preserving top-k selection tasks. 
Concretely, they~\cite{durfee2019practical,shekelyan2022differentially} leverage the exponential noise to Oneshot~\cite{qiao2021oneshot} mechanism where items with the top-k highest score are selected by injecting noise into each score ranking all noisy scores in descending order~\cite{wasserman2010statistical,ding2021permute,mckenna2020permute}.
While the exponential mechanism (EM) offers superior accuracy guarantees compared to SVT for the top-k selection problem, its application is limited: EM struggles with queries on streaming data, where the total number of queries might be infinite. 
Therefore, our focus remains on SVT in this work, given its broader applicability as a more generic algorithm.
Ding \etal~\cite{ding2023free} briefly look into applying the exponential noise to SVT. However, their approach, which involves injecting exponential noise into the threshold, has been demonstrated to compromise privacy by our findings.
To the best of our knowledge, we are the first attempt that applies the exponential noise in SVT, accompanied by rigorous privacy and utility guarantee.

\section{Conclusion}
This work aims to enhance the query accuracy of the SVT algorithm by utilizing exponential noise. We begin with revisiting the privacy analysis of SVT algorithms and expanding the range of noise options for query perturbation by considering its less informative nature. Our analysis identifies exponential noise as the most effective, both theoretically and empirically, among the considered noise distributions. Additionally, we develop a generic optimal threshold correction method and an appending strategy to ensure both a high query precision and recall for SVT with marginal computation cost. The effectiveness of these methods is thoroughly validated through comprehensive experiments on both real-world and synthetic datasets.
\appendix
\section{Proof of Theorem~\ref{theo: privacy-svt-lips}}\label{sec: proof-of-lips-privacy}
Before starting our proof, we first define two different query settings for the SVT algorithm: the monotonic query setting and the non-monotonic query setting~(or monotonic setting and non-monotonic setting in short).

For the monotonic setting, when the dataset $D$ changes into its neighboring dataset $D^{\prime}$, all the query results change in the same direction. That is to say, we have either $\forall_{i\in\left[n\right]}q_i(D)\geq q_i(D^{\prime})$ or $\forall_{i\in\left[n\right]}q_{i}(D)\leq q_{i}(D^{\prime})$, where $n$ is number of different queries in total.
By contrast, for the non-monotonic setting, the above statement is not necessary to be true: the query results can change in different directions when $D$ changes into $D^{\prime}$.

In the following, we first show our privacy analysis for the non-monotonic setting. We then elaborate on how to extend it to monotonic setting.

To begin with, we restate Theorem~\ref{theo: privacy-svt-lips} as follows:
\begin{theorem}[Restate of Theorem~\ref{theo: privacy-svt-lips}]\label{theo: restate-privacy-svt-lips}
    Algorithm~\ref{alg: basic-svt} satisfies differential privacy if for any real numbers $b_1$, $b_2$, there are two positive real numbers $k_1$ and $k_2$ such that the following inequalities hold:
\begin{equation}\label{eq: pure_dp}
    \vert \ln(f_1(x))-\ln(f_1(x+b_1))\vert \leq k_1\vert b_1\vert,
\end{equation}
\begin{equation}\label{eq: lispchitz}
    \vert \ln(1-F_2(x))-\ln(1-F_2(x+b_2))\vert \leq k_2\vert b_2\vert,
\end{equation}
where $f_1(\cdot)$ and $F_2(\cdot)$ are the probability density function of $\mathcal{N}_1$ and the cumulative function of the $\mathcal{N}_2$, respectively.
\end{theorem}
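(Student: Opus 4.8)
The plan is to run the conditioning-on-the-threshold-noise argument familiar from the classical SVT proof, but to keep the distributions $\mathcal{N}_1$ and $\mathcal{N}_2$ abstract so that the only facts ever invoked are the two hypotheses \eqref{eq: pure_dp} and \eqref{eq: lispchitz}. First I would fix a realization $\rho$ of the threshold noise. Conditioned on $\rho$ the tests on the individual queries are independent, since each uses a fresh draw $v_i\sim\mathcal{N}_2$; hence for any admissible output string $\mathbf{a}=(a_1,\dots,a_\ell)$ (with $\ell$ the stopping time and at most $c$ entries equal to $\top$), the probability that Algorithm~\ref{alg: basic-svt} on input $D$ outputs $\mathbf{a}$ can be written, for \texttt{RESAMPLE}=\texttt{False}, as $\int f_1(\rho)\prod_{i:a_i=\bot}F_2(\rho+g_i)\prod_{i:a_i=\top}(1-F_2(\rho+g_i))\,d\rho$, where $g_i = T_i - q_i(D)$. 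Write also $g_i' = T_i - q_i(D')$, so that $|g_i-g_i'|\le\Delta$ for neighbouring $D,D'$. I would present the non-monotonic setting first and then specialize.

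The heart of the argument is a change of variables $\rho\mapsto\sigma-\Delta$ in the $D$-integral. After the shift, each $\bot$-factor becomes $F_2(\sigma-\Delta+g_i)$, and since $g_i-\Delta\le g_i'$ we get $F_2(\sigma-\Delta+g_i)\le F_2(\sigma+g_i')$ by monotonicity of $F_2$ alone: these factors contribute a ratio at most $1$, which is precisely why negative answers cost no privacy. Each $\top$-factor's argument moves by $\sigma+g_i'-(\sigma-\Delta+g_i)=\Delta-(g_i-g_i')\in[0,2\Delta]$, so \eqref{eq: lispchitz} (applied with $|b_2|\le2\Delta$) gives $1-F_2(\sigma-\Delta+g_i)\le e^{2k_2\Delta}(1-F_2(\sigma+g_i'))$; with at most $c$ positives this is a factor $e^{2ck_2\Delta}$. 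The threshold density contributes $f_1(\sigma-\Delta)\le e^{k_1\Delta}f_1(\sigma)$ by \eqref{eq: pure_dp} with $b_1=\Delta$. Multiplying, integrating in $\sigma$, and summing over output events yields $\Pr[\mathcal{M}(D)\in S]\le e^{(k_1+2ck_2)\Delta}\,\Pr[\mathcal{M}(D')\in S]$; instantiating $f_1$ as the Laplace density and $F_2$ as the exponential CDF recovers exactly the budget claimed in Theorem~\ref{theo: privacy-guarantee} for \texttt{RESAMPLE}=\texttt{False}, and the \texttt{RESAMPLE}=\texttt{True} case follows by viewing the execution as $c$ concatenated single-positive sub-runs so that the threshold density is touched $c$ times rather than once.

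For the monotonic setting all the differences $g_i-g_i'$ share one sign. If $q_i(D)\le q_i(D')$ for every $i$, the same shift by $\Delta$ now moves each $\top$-argument only within $[0,\Delta]$, saving the factor of two; if $q_i(D)\ge q_i(D')$ for every $i$, no shift is needed at all — the $\bot$-factors are still free, the $\top$-factors move within $[0,\Delta]$, and $f_1$ is untouched — so taking the worse of the two directions gives $\varepsilon=(k_1+ck_2)\Delta$. I expect the main difficulty to be the bookkeeping that makes the shift simultaneously neutralize \emph{every} $\bot$-factor while never pushing a $\top$-factor outside the regime covered by \eqref{eq: lispchitz}, together with the technicalities when $\mathcal{N}_2$ has bounded support (e.g.\ exponential noise, where $1-F_2\equiv1$ below $0$): one must check that $\ln(1-F_2)$ is still globally Lipschitz across the boundary of the support so that \eqref{eq: lispchitz} applies verbatim, and that degenerate outputs such as an infinite all-$\bot$ run when $k_{max}=\infty$ are covered, which they are because it suffices to verify the DP inequality on finite-prefix cylinder events, exactly the events handled above.
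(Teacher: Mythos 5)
Your proposal is correct and follows essentially the same approach as the paper's proof: conditioning on the threshold noise $\rho$ to obtain the product form, a change of variables shifting $\rho$ by $\Delta$, using monotonicity of $F_2$ to make every $\bot$-factor contribute a ratio at most $1$, invoking the log-Lipschitz hypothesis on $1-F_2$ (with a move of at most $2\Delta$) for each of the at most $c$ $\top$-factors, and the log-Lipschitz hypothesis on $f_1$ for the threshold density, yielding $\varepsilon=(k_1+2ck_2)\Delta$ in the non-monotonic case and $\varepsilon=(k_1+ck_2)\Delta$ in the monotonic case. The only cosmetic difference is that the paper first performs an explicit worst-case replacement of $q_i(D)$ by $q_i(D')\pm\Delta$ and then shifts, whereas you combine these two steps via the $|g_i-g_i'|\leq\Delta$ bound; the two bookkeeping styles are equivalent and arrive at the same privacy budget.
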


Now, we provide our proof under the non-monotonic settings.

\begin{proof}[Proof for the non-monotonic setting.]
Suppose there are a pair of neighboring datasets $D$ and $D^{\prime}$ that differ by a single record, a sequence of $n$ queries $Q=\{q_1,\ldots,q_n\}$ that have the largest sensitivity $\Delta$,
and corresponding predefined thresholds $T=\{T_1,\ldots,T_n\}$.
Let $\mathcal{M}$ denotes Algorithm~\ref{alg: basic-svt}, $o\in \mathcal{O}$ is a possible output, and $S_{\bot}$, $S_{\top}$ denote the set of negative and positive outcomes, respectively. We have that

\begin{equation}\label{eq: initial-privacy-analysis}
\begin{aligned}
&\Pr[\mathcal{M}(D)=o]\\
=&\int_{-\infty}^{+\infty}\Pr[\rho=t]\Pr[\max\limits_{i\in S_{\bot}} \tilde {q_i}(D)<\tilde{T_i}]\Pr[\min\limits_{j\in S_{\top}} \tilde {q_j}(D)\geq \tilde{T_j}] dt\\
=&\int_{-\infty}^{+\infty}\Pr[\rho=t]\underbrace{\prod\limits_{i\in S_{\bot}}\Pr[\tilde {q_i}(D)<\tilde{T_i}]\prod\limits_{j\in S_{\top}}\Pr[\tilde {q_j}(D)\geq \tilde{T_j}]}_{(*)} dt,
\end{aligned}
\end{equation}
where $\tilde{q_{i}}(D)=q_{i}(D)+v_i$ and $\tilde{T_{i}}=T_i+\rho$ for each $i\in[n]$ are the perturbed query result and the perturbed threshold, respectively. In particular, $t$ is an instance of the random varible $\rho$.

In the worst case, $0\leq q_{i}(D^{\prime})-q_{i}(D)\leq\Delta$ holds for $i\in S_{\bot}$ and $0\leq q_{j}(D)-q_{j}(D^{\prime})\leq\Delta$ holds for $j\in S_{\top}$. Therefore, it is further derived that
\begin{equation}\label{eq: worst-case-replace}
    (*)\leq \prod\limits_{i\in S_{\bot}}\Pr[\tilde {q_i}(D^{\prime})-\Delta<\tilde{T_i}]\prod\limits_{j\in S_{\top}}\Pr[\tilde {q_j}(D^{\prime})+\Delta\geq \tilde{T_j}].
\end{equation}
By first inserting Inequality~\ref{eq: worst-case-replace} into Equation~\ref{eq: initial-privacy-analysis} and then converting the integral variable $t$ into $u=t+\Delta$, it is obvious that Equation \ref{eq: initial-privacy-analysis} is no larger than
\begin{equation}\label{eq: after-change-integral}
    \begin{aligned}
        &\int_{-\infty}^{+\infty}\Pr[\rho=t^{\prime}]\underbrace{\prod\limits_{i\in S_{\bot}}\Pr[\tilde {q_i}(D^{\prime})<\tilde{T_i}]\prod\limits_{j\in S_{\top}}\Pr[\tilde {q_j}(D^{\prime})\geq \tilde{T_j}-2\Delta]}_{(**)} du,
    \end{aligned}
\end{equation}
where $t^{\prime}=u-\Delta$, $\tilde{T_i}=T_i+u$, and $\tilde{T_j}=T_j+u$. 

If the distribution of the additive noise for threshold perturbation meets the condition in Theorem~\ref{theo: privacy-svt-lips}, by setting $k_1$ in Equation~\ref{eq: pure_dp} to $\frac{\varepsilon_1}{\lvert b_1\rvert}$, we will therefore reach that
\begin{equation}\label{eq: privay-threshold}
    \Pr[\rho=t]\leq e^\varepsilon_1\cdot\Pr[\rho=t-\Delta].
\end{equation}

Additionally, notice that
\begin{equation}\label{eq: **}
    (**)=\prod\limits_{i\in S_{\bot}}F(\tilde{T_i}-q_i(D^{\prime}))\prod\limits_{j\in S_{\top}}(1-F(\tilde{T_j}-2\Delta-q_j(D^{\prime}))),
\end{equation}
where $F(\cdot)$ is the cumulative distribution function of the random noise $v_i$ for $i\in [n]$ that is used for query results perturbation. If $F(\cdot)$ meets the condition stated by Inequality~\ref{eq: lispchitz} in Theorem~\ref{theo: privacy-svt-lips} and $k$ is set to $\frac{\varepsilon_2}{\lvert b\rvert\lvert S_{\top}\rvert}$, we obtain that for each $j\in S_{\top}$,
\begin{equation}
    1-F(\tilde{T_j}-2\Delta-q_j(D^{\prime}))\leq e^{\frac{\varepsilon_2}{c}}\cdot(1-F(\tilde{T_j}-q_j(D^{\prime})),
\end{equation}
where $c$ is one of the inputs in Algorithm~\ref{alg: basic-svt} and $\lvert S_{\top}\rvert\leq c$ holds. By converting the integral variable $u$ back to $t$ and putting all the pieces together, we finally have that
\begin{equation}\label{eq: complete_privay_proof}
    \begin{aligned}
        (\ref{eq: initial-privacy-analysis})\leq e^{\varepsilon_1+\varepsilon_2}\cdot&
        \int_{-\infty}^{+\infty}\Pr[\rho=t]\prod\limits_{i\in S_{\bot}}F(\tilde{T_i}-q_i(D^{\prime}))\\
        &\prod\limits_{j\in S_{\top}}(1-F(\tilde{T_j}-q_j(D^{\prime}))) dt=\Pr[\mathcal{M}(D^{\prime})=o],
    \end{aligned}
\end{equation}
which then completes the proof.
\end{proof}

Now, we show how we can extend our proof to the monotonic setting.
\begin{proof}[Proof for the monotonic setting.]
For the monotonic settings, the only difference lies in Inequality~\ref{eq: worst-case-replace}. As the worst case for this settings is that $0\leq\tilde{q}_i(D^{\prime})-\tilde{q}_{i}(D)\leq\Delta$ holds for $i\in S_{\bot}$, and $q_{j}(D)-q_{j}(D^{\prime})=0$ holds for $j\in S_{\top}$, we have that:
\begin{equation}\label{eq: worst-case-replace-mono}
    (*)\leq \prod\limits_{i\in S_{\bot}}\Pr[\tilde {q_i}(D^{\prime})-\Delta<\tilde{T_i}]\prod\limits_{j\in S_{\top}}\Pr[\tilde {q_j}(D^{\prime})\geq \tilde{T_j}].
\end{equation} 
Then, similar to what we do under the non-monotonic setting, by first inserting Inequality~\ref{eq: worst-case-replace-mono} into Equation~\ref{eq: initial-privacy-analysis} and then converting the integral variable $t$ into $u=t+\Delta$, we have that Equation~\ref{eq: initial-privacy-analysis} is no larger than
\begin{equation}\label{eq: after-change-integral-mono}
    \begin{aligned}
        &\int_{-\infty}^{+\infty}\Pr[\rho=t^{\prime}]\underbrace{\prod\limits_{i\in S_{\bot}}\Pr[\tilde {q_i}(D^{\prime})<\tilde{T_i}]\prod\limits_{j\in S_{\top}}\Pr[\tilde {q_j}(D^{\prime})\geq \tilde{T_j}-\Delta]}_{(**)} du,
    \end{aligned}
\end{equation}
where $t^{\prime}=u-\Delta$, $\tilde{T_i}=T_i+u$, and $\tilde{T_j}=T_j+u$. 
Then, we have that 
\begin{equation}
        (**)=\prod\limits_{i\in S_{\bot}}F(\tilde{T_i}-q_i(D^{\prime}))\prod\limits_{j\in S_{\top}}(1-F(\tilde{T_j}-\Delta-q_j(D^{\prime}))),
\end{equation}
holds. If $F(\cdot)$ meets the condition stated by Inequality~\ref{eq: lispchitz} in Theorem~\ref{theo: privacy-svt-lips}, by setting $k$ to $\frac{\varepsilon_2}{\lvert b\rvert\lvert S_{\top}\rvert}$ and changing the integral variable $u$ back to $t$, we are able to complete our proof as in Inequality~\ref{eq: complete_privay_proof}.
\end{proof}

\section{Proof of Theorem~\ref{theo: privacy-guarantee}}\label{sec: exp-privacy-proof}
Same as in Appendix~\ref{sec: proof-of-lips-privacy}, we provide our proof of Theorem~\ref{theo: privacy-guarantee} first under the non-monotonic setting, then demonstrate how we can adapt Algorithm~\ref{alg: exp-svt} as well as the privacy proof to the monotonic setting.
\begin{theorem}[Restate of Theorem~\ref{theo: privacy-guarantee}]
    Let $c$ denote the number of positive outcomes output by Algorithm~\ref{alg: exp-svt}. Algorithm~\ref{alg: exp-svt} satisfies $(\varepsilon_1+\varepsilon_2)$-differential privacy when \texttt{RESAMPLE} is set to \texttt{False}, and $(c\varepsilon_1,\varepsilon_2)$-differential privacy when \texttt{RESAMPLE} is set to \texttt{True}, where $\varepsilon_1$ and $\varepsilon_2$ are the privacy budgets for threshold and query perturbation, respectively.
\end{theorem}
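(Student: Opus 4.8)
The plan is to reduce Theorem~\ref{theo: privacy-guarantee} to the calculation already performed for Theorem~\ref{theo: privacy-svt-lips} in Appendix~\ref{sec: proof-of-lips-privacy}, since Algorithm~\ref{alg: exp-svt} is an instance of Algorithm~\ref{alg: basic-svt} with exactly three changes: (i) the noise is fixed to $\mathcal{N}_1=\mathrm{Lap}(b)$ with $b=\Delta/\varepsilon_1$ and $\mathcal{N}_2=\mathrm{Exp}(1/\lambda)$ with $\lambda=\varepsilon_2/(2c\Delta)$; (ii) the comparison uses $\tilde T_i+r^{op}$ in place of $\tilde T_i$; and (iii) negative outcomes are re-appended to the query queue. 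I would dispose of (ii) first: inspecting Algorithm~\ref{alg: threshold_correction}, $r^{op}$ depends only on the public parameters $b,\lambda,\alpha,m,e,k$ and never on $D$, so $\tilde T_i+r^{op}$ is merely a deterministic, public re-definition of the threshold; the whole analysis then proceeds with $T_i+r^{op}$ substituted for $T_i$, and privacy is unaffected.

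Next I would verify that the two fixed distributions satisfy the hypotheses of Theorem~\ref{theo: privacy-svt-lips} with the constants forced by the budget split. For Laplace, $f_1(x)=\tfrac{1}{2b}e^{-|x|/b}$ gives $|\ln f_1(x)-\ln f_1(x+b_1)|=\tfrac1b\big||x|-|x+b_1|\big|\le\tfrac1b|b_1|$, so Eq.~\eqref{eq: pure_dp} holds with $k_1=1/b=\varepsilon_1/\Delta$. For the exponential, $1-F_2(x)=e^{-\lambda\max(x,0)}$, so $|\ln(1-F_2(x))-\ln(1-F_2(x+b_2))|=\lambda\big|\max(x,0)-\max(x+b_2,0)\big|\le\lambda|b_2|$, i.e.\ Eq.~\eqref{eq: lispchitz} holds with $k_2=\lambda$ (with equality whenever both arguments are nonnegative, which is the sense in which exponential noise is tight; the boundary case $x<0$ is harmless because $1-F_2$ is then constant). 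Plugging $k_1,k_2$ into the computation of Appendix~\ref{sec: proof-of-lips-privacy}: the $\Delta$-shift of $\rho$ in Eq.~\eqref{eq: worst-case-replace} costs a factor $e^{\Delta k_1}=e^{\varepsilon_1}$, and each of the at most $c$ positive outcomes absorbs the $2\Delta$-shift of the survival function at cost $e^{2\Delta k_2}=e^{2\Delta\lambda}=e^{\varepsilon_2/c}$, for a total query-side factor $e^{\varepsilon_2}$. This reproduces $(\varepsilon_1+\varepsilon_2)$-DP for \texttt{RESAMPLE}=\texttt{False} exactly as in Eq.~\eqref{eq: complete_privay_proof}, and the monotonic query setting is handled by the same $\Delta$-versus-$2\Delta$ dichotomy used there.

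The change (iii) is where I expect the real work, and I would treat it as the main obstacle. With appending, a single query index may be compared against the threshold in several rounds, so a realized output $o$ is a finite sequence over $\{\top,\bot\}$ whose coordinates are indexed by (query, round) pairs rather than by distinct queries (finiteness is guaranteed by the abort on $n_a\ge k_{max}$). The argument would rest on two observations: the query noise $v_i$ is drawn \emph{afresh} in every round (Line~\ref{line: exp-svt-query-noise}), so conditioned on the threshold noise the likelihood of $o$ still factors as $\prod_{i\in S_\bot}\Pr[\tilde q_i(D)<\tilde T_i+r^{op}]\prod_{j\in S_\top}\Pr[\tilde q_j(D)\ge\tilde T_j+r^{op}]$ over the \emph{multiset} of rounds; and the abort once $n_c\ge c$ still forces $|S_\top|\le c$ regardless of how many negative rounds occur. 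Hence the derivation of Appendix~\ref{sec: proof-of-lips-privacy} goes through verbatim with $S_\bot,S_\top$ reread as multisets: the extra $F(\cdot)$ factors contributed by re-queried negative rounds are precisely the terms absorbed ``for free'' by the change of variables $u=t+\Delta$ (the shift only makes the integrand smaller on those terms), so no additional $e^{\varepsilon}$ is incurred. Thus appending does not change the bound.

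Finally, for \texttt{RESAMPLE}=\texttt{True}, $\rho$ is re-drawn immediately after each positive outcome, so a run decomposes into at most $c$ epochs, the $k$-th using an independent threshold sample $\rho_k$ and containing exactly one $\top$ (except possibly the last). The likelihood then factors as a product of $c$ single-epoch integrals of the form analyzed above; each epoch independently needs a $\Delta$-shift of its own $\rho_k$ at cost $e^{\varepsilon_1}$, giving a threshold-side factor $e^{c\varepsilon_1}$, while the query-side factor remains $e^{\varepsilon_2}$ since there are still at most $c$ positive comparisons in total. This yields $(c\varepsilon_1,\varepsilon_2)$-differential privacy, i.e.\ $(c\varepsilon_1+\varepsilon_2)$-DP. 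The two delicate points to pin down are that appending and resampling interleave cleanly --- a re-appended query that later turns positive still lies in some epoch and still counts toward the global cap $c$ through $n_c$ --- and that the multiset bookkeeping never double-counts a noise variable, since each (query, round) pair carries its own independent $v_i$ and each epoch its own independent $\rho_k$.
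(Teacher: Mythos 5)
Your proof takes essentially the same route as the paper's Appendix~\ref{sec: exp-privacy-proof}: verify the Lipschitz constants of Theorem~\ref{theo: privacy-svt-lips} for the Laplace density ($k_1=\varepsilon_1/\Delta$) and the exponential survival function ($k_2=\lambda=\varepsilon_2/(2c\Delta)$), chain the $e^{\varepsilon_1}$ threshold shift with up to $c$ factors of $e^{\varepsilon_2/c}$ from the positive outcomes, and for \texttt{RESAMPLE}=\texttt{True} decompose the run into $c$ sub-sequences (one $\top$ each) and invoke sequential composition to arrive at $e^{c\varepsilon_1+\varepsilon_2}$. You also add two explicit observations the paper leaves implicit --- that the correction term $r^{op}$ is a public, data-independent constant (so shifting the threshold by it is privacy-neutral) and that the \texttt{APPEND} option only contributes extra negative-outcome factors, which are matched exactly (not merely bounded) under the change of variables $u=t+\Delta$ and hence cost nothing --- and both points are sound and tighten the argument.
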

\begin{proof}
If \texttt{RESAMPLE} is set to \texttt{FALSE}, then based on the proof of Theorem~\ref{theo: privacy-svt-lips}, we only need to prove that cumulative distribution function~(CDF) of the exponential distribution satisfies Equation~\ref{eq: lispchitz}. Since the CDF of the exponential distribution with parameter $\lambda=\frac{\varepsilon_2}{2c\Delta}$ is
\begin{equation}
    \begin{aligned}
        F_{exp}=
        \begin{cases}
            1-\exp(-\lambda x) & x\geq 0,\\
            0 & otherwise,
        \end{cases}
    \end{aligned}
\end{equation}
we have that for each $j\in S_{\top}$,
\begin{equation}\label{eq: exp-query-privacy}
\begin{aligned}
        &\left | \ln(1-F_{exp}(\tilde{T_j}-2\Delta-q_j(D^{\prime})))-\ln(1-F_{exp}(\tilde{T_j}-q_j(D^{\prime})))\right |\\
        \leq& \left |\ln\frac{\exp{(-\lambda(\tilde{T_j}-2\Delta-q_j(D^{\prime})))}}{\exp{(-\lambda(\tilde{T_j}-q_j(D^{\prime})))}}\right |
        \leq  2\Delta\lambda=\frac{\varepsilon_2}{c},
\end{aligned}
\end{equation}
where $c$ is the number of positive outcomes in Algorithm~\ref{alg: exp-svt}.
In addition, as the Laplace noise is used for threshold perturbation, according to the Laplace mechanism described in Section~\ref{subsec: background_dp}, we have that
\begin{equation}\label{eq: lap-threshold-privacy}
    \Pr[\rho=t]\leq e^\varepsilon_1\cdot\Pr[\rho=t-\Delta].
\end{equation}
By plugging Equation~\ref{eq: exp-query-privacy} and Equation~\ref{eq: lap-threshold-privacy} into Equation~\ref{eq: **} in Section~\ref{sec: privacy-revisit}, we arrive at the following inequalities:
\begin{equation}
    \Pr[\mathcal{M}(D)=o]\leq e^{\varepsilon_1+\varepsilon_2}\cdot\Pr[\mathcal{M}(D^{\prime})=o],
\end{equation}
which completes the proof when $\texttt{RESAMPLE}=\texttt{False}$.

If \texttt{RESAMPLE} is set to \texttt{True}, we can derive the privacy guarantee of Algorithm~\ref{alg: exp-svt} by considering a subsequence of queries $\{q_1,\ldots,q_k,q_{k+1}\}$ first. In particular, suppose the first $k$ queries output negative outcomes while the $k+1$th query output a positive outcome. Then, we have that
\begin{equation}\label{eq: subprivacy-analysis}
\begin{aligned}
&\Pr[\mathcal{M}_1(D)=o]\\
=&\int_{-\infty}^{+\infty}\Pr[\rho=t]\Pr[\max\limits_{i=1}^{k} \tilde {q_i}(D)<\tilde{T_i}]\Pr[\tilde {q_{k+1}}(D)\geq \tilde{T} _{k+1}] dt\\
=&\int_{-\infty}^{+\infty}\Pr[\rho=t]\prod\limits_{i=1}^{k}\Pr[\tilde {q_i}(D)<\tilde{T_i}]\Pr[\tilde{q}_{k+1}(D)\geq \tilde{T}_{k+1}] dt,
\end{aligned}
\end{equation}
where $\mathcal{M}_1$ is the the Algorithm~\ref{alg: exp-svt} but with the sub-sequence mentioned above as inputs.
Then, follow the same methodology as stated in Section~\ref{sec: privacy-revisit}, Equation~\ref{eq: lap-threshold-privacy}, and Equation~\ref{eq: exp-query-privacy}, it can be proved that
\begin{equation}
    \begin{aligned}
        \Pr[\mathcal{M}_1(D)=o_1]\leq e^{\varepsilon_1+\frac{\varepsilon_2}{c}}\cdot\Pr[\mathcal{M}_1(D)^{\prime}=o_1].
    \end{aligned}
\end{equation}
Notice that the original queries can be considered as a combination of $c$ sub-sequences. By adopting the sequential composition theorem~(\ie, Theorem~\ref{theo: composition}), we show that
\begin{equation}
    \begin{aligned}
        &\Pr[\mathcal{M}(D)=o]=\Pr[(\mathcal{M}_1(D),\ldots,\mathcal{M}_c(D))=o]\\
        \leq &e^{c\cdot(\varepsilon_1+\frac{\varepsilon_2}{c})}\cdot\Pr[(\mathcal{M}_1(D)^{\prime},\ldots,\mathcal{M}_c(D)^{\prime})=o]\\
        =&e^{c\varepsilon_1+\varepsilon_2}\cdot\Pr[\mathcal{M}(D^{\prime})=o],
    \end{aligned}
\end{equation}
which completes our proof.
\end{proof}

 Now, we elaborate on how we can adapt Algorithm~\ref{alg: exp-svt} and its privacy proof to the monotonic setting.

 As we have discussed in Appendix~\ref{sec: proof-of-lips-privacy}, the only difference lies in Inequality~\ref{eq: exp-query-privacy}. For the monotonic setting, for each $j\in S_{\top}$, it holds that:
 \begin{equation}\label{eq: exp-query-privacy-mono}
\begin{aligned}
        &\left | \ln(1-F_{exp}(\tilde{T_j}-\Delta-q_j(D^{\prime})))-\ln(1-F_{exp}(\tilde{T_j}-q_j(D^{\prime})))\right |\\
        \leq& \left |\ln\frac{\exp{(-\lambda(\tilde{T_j}-\Delta-q_j(D^{\prime})))}}{\exp{(-\lambda(\tilde{T_j}-q_j(D^{\prime})))}}\right |
        \leq  \Delta\lambda.
\end{aligned}
\end{equation}
Therefore, by perturbing each query result with random noise drawn from $\texttt{Exp}(\frac{1}{\lambda})$, where $\lambda=\frac{\varepsilon_2}{c\Delta}$, it is sufficient to provide a differential privacy guarantee stated in Theorem \ref{theo: privacy-guarantee}.

\section{Proof of Theorem~\ref{theo: utility-guarantee}}\label{sec: exp-utility-proof}

In this section, we first restate Theorem~\ref{theo: utility-guarantee} in the following, then prove it in three steps. First, we formalize the $(\alpha,\beta)$-accuracy in terms of Algorithm~\ref{alg: exp-svt} in a simple case where only one positive query is allowed to output~(\ie, $c=1$), $\Delta=1$, and $\varepsilon_1=\varepsilon_2=\frac{\varepsilon}{2}$ as in \cite{dwork2014algorithmic}. Then, we provide the utility guarantee for the case where the threshold correction term $r$ is set to $0$. After that, we show that our optimal threshold correction term derived Equation~\ref{eq: optimal-term} yields a accuracy no smaller than $r=0$. Also note that our proof is under the non-monotonic query setting, and is trivial to extend it to the monotonic query setting.
\label{sec: utility-guarantee-proof}

\begin{theorem}[Restate of Theorem~\ref{theo: utility-guarantee}]
    Given any $k$ records such that $\lvert \{i<k:d_i\geq t-\alpha\}\rvert=0$~(\ie, the record above and closest to the threshold is the last one), the Algorithm \ref{alg: exp-svt} at least $(\frac{4(\ln k+\ln \frac{2}{\beta})}{\varepsilon},\beta)$-accurate.
\end{theorem}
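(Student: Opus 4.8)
\textbf{Proof proposal for Theorem~\ref{theo: utility-guarantee}.}

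The plan is to follow the three-step outline the authors have already announced: (i) write down $\beta$ explicitly as a failure probability for Algorithm~\ref{alg: exp-svt} in the special case $c=1$, $\Delta=1$, $\varepsilon_1=\varepsilon_2=\varepsilon/2$; (ii) bound this failure probability when the correction term is $r=0$; and (iii) argue that the optimal correction term can only improve it, so the $(\alpha,\beta)$ bound proved for $r=0$ carries over. The utility claim then reduces to a tail-bound computation for the event that SVT halts too early or misclassifies a query by more than $\alpha$.

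For step (i): under the stated hypothesis all records $d_i$ with $i<k$ satisfy $d_i < t-\alpha$, and the $k$-th record is the unique one with $d_k \ge t-\alpha$; the only way the algorithm fails is either (a) some $i<k$ is output as $\top$ even though $d_i < t-\alpha$ (a false positive causing early halt, since $c=1$), or (b) the $k$-th query is output as $\bot$ even though $d_k \ge t-\alpha$. Writing $\tilde q_i = d_i + \mathrm{Exp}(1/\lambda)$ and $\tilde T_i = t + \mathrm{Lap}(b)$ with $\lambda = \varepsilon_2/(2\Delta) = \varepsilon/4$ and $b = \Delta/\varepsilon_1 = 2/\varepsilon$, a union bound over the two failure modes gives
\begin{equation*}
\beta \le \sum_{i=1}^{k-1} \Pr\!\left[\tilde q_i \ge \tilde T_i\right] + \Pr\!\left[\tilde q_k < \tilde T_k\right].
\end{equation*}
Condition on the Laplace draw $\rho$ with $|\rho| \le \alpha'$ for a threshold $\alpha'$ to be chosen; since the Laplace CDF gives $\Pr[|\rho| > \alpha'] = e^{-\alpha'/b}$, this contributes a term of order $e^{-\alpha'\varepsilon/2}$ to $\beta$. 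On the event $|\rho|\le\alpha'$: a false positive at $i<k$ needs $d_i + \mathrm{Exp} \ge t + \rho$, i.e. the exponential noise exceeds $t - d_i + \rho > \alpha - \alpha'$ (using $d_i < t-\alpha$), which has probability $\le e^{-\lambda(\alpha-\alpha')}$; a false negative at $k$ needs $d_k + \mathrm{Exp} < t+\rho$, but $\mathrm{Exp}\ge 0$ and $d_k \ge t-\alpha$ force $\rho > \alpha$, impossible once $\alpha' \le \alpha$ — so with $\alpha'<\alpha$ mode (b) is handled essentially for free. Collecting terms, $\beta \lesssim k\, e^{-\lambda(\alpha-\alpha')} + e^{-\alpha'/b}$; balancing and taking $\alpha' $ a constant fraction of $\alpha$, the dominant requirement is $k\, e^{-\lambda\alpha/2} \le \beta/2$ and $e^{-\alpha\varepsilon/4} \le \beta/2$, i.e. $\alpha \ge \frac{4(\ln k + \ln(2/\beta))}{\varepsilon}$ after substituting $\lambda = \varepsilon/4$. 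This is precisely the claimed $\alpha$, so step (ii) is done.

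For step (iii): the success probability with correction term $r$ is $p(r)$ from Equation~\ref{eq: success-rate}, and $r^{op} = \argmax p(r)$ by Equation~\ref{eq: optimal-term}; since $r=0$ is a feasible choice, $p(r^{op}) \ge p(0)$, hence $\beta^{op} = 1 - p(r^{op}) \le 1 - p(0) = \beta_0$, so the $(\alpha,\beta)$ bound established for the uncorrected algorithm is inherited by Algorithm~\ref{alg: exp-svt} with the optimal correction. (One subtlety: the $\beta$ in step (ii) was obtained via a union bound, not exactly $1-p(0)$, but $1-p(0)$ is itself upper-bounded by the same union bound, so the chain of inequalities still goes through; I would state this carefully to avoid circularity.)

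The main obstacle I expect is step (ii)'s bookkeeping: one must handle the Laplace threshold noise, which is two-sided and heavy-tailed, simultaneously with the one-sided exponential query noise, and get the constant in front of $1/\varepsilon$ down to exactly $4$ rather than something larger. The delicate point is the interplay between $\alpha'$ (the truncation level for $\rho$) and $\alpha$: choosing $\alpha'$ too large weakens the false-positive bound, too small weakens the false-negative/threshold-tail bound, and only a careful split yields the clean factor $4 = 2/\lambda \cdot (\varepsilon/2) \cdot \tfrac{1}{?}$ matching $\lambda=\varepsilon/4$. I would also double-check the edge case where the exponential noise, being nonnegative, actually makes the false-negative mode at query $k$ vanish entirely (as sketched above) — this asymmetry is exactly why exponential noise beats Laplace here and is what lets the constant drop from $8$ (the Laplace-SVT bound quoted in Definition~\ref{def: svt-acc}'s discussion) to $4$.
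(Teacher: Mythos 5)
Your proposal follows the paper's own proof almost step for step: the same three-stage plan (formalize $\beta=1-p(r)$, bound the failure probability for $r=0$ via tail bounds on the exponential and Laplace noises after splitting the error budget, then observe $p(r^{op})\geq p(0)$ so the optimal correction inherits the bound), and the circularity worry you raise in step (iii) is resolved exactly as the paper resolves it. The one place you are right to be nervous is the constant: with $\lambda=\varepsilon/4$ and $\alpha'=\alpha/2$, your dominant term $k\,e^{-\lambda(\alpha-\alpha')}=k\,e^{-\varepsilon\alpha/8}$ actually yields $\alpha\geq \tfrac{8(\ln k+\ln(2/\beta))}{\varepsilon}$, not $4$; the paper's appendix exhibits the same gap (it writes $k\exp(-\alpha/2)$, dropping the rate $\varepsilon/4$), so your instinct that the factor-of-two improvement over Laplace SVT does not obviously fall out of the stated split is correct, and a clean proof of the advertised constant would need to exploit the one-sidedness of the exponential noise (so the $\bot$ side costs no $\alpha$ budget) rather than the symmetric $\alpha/2$ allocation that both you and the paper use.
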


\begin{proof}
As mentioned above, for ease of computation and comparison, without loss of generality, we assume that $\Delta=1$, $c=1$ and $\varepsilon_1=\varepsilon_2=\frac{\varepsilon}{2}$ in Algorithm~\ref{alg: exp-svt}. 

Based on the definition of $(\alpha,\beta)$-accuracy~(Cf. Definition~\ref{def: svt-acc}), we need to compute $\beta$ given any $\alpha$ based on the following equation:
\begin{equation}\label{eq: acc-svt}
\begin{aligned}
        \beta &= 1-\prod\limits_{i=1}^{k}\Pr\left[\tilde{q}_{i}(D)-\alpha<\tilde{T}_i+r\right]\cdot\Pr\left[\tilde{q}_{j}(D)+\alpha\geq\tilde{T}_j+r\right]\\
        &=1-p(r),
\end{aligned}
\end{equation}
where $r$ is the threshold correction parameter and $p(r)$ is defined in Equation~\ref{eq: optimal_threshold_correction}.

We first show that when $r=0$, it holds that
\begin{equation}\label{eq: acc-r0}
    \beta\leq2\exp{\left(\frac{\ln{k}-\alpha\varepsilon}{4}\right)}
\end{equation}

Then, to prove Equation~\ref{eq: acc-r0}, it is sufficient for us to prove that with at least probability $1-\beta$, the following inequality holds:
\begin{equation}\label{eq: utility_initial}
    \max\limits_{i\in [k]}Exp_i(\frac{4}{\varepsilon})+\left|Lap(\frac{2}{\varepsilon})\right|\leq \alpha,
\end{equation}
where $Exp_i(\frac{1}{\lambda})$ is the random noise drawn from the exponential distribution with parameter $\lambda$ and $Lap(b)$ is the random noise drawn from the Laplace distribution with parameter $b$.

If the above equation holds, for any $i$ that $a_i=\top$, we have that
\begin{equation}
    \begin{aligned}
        q_{i}(D)+Exp_{i}(\frac{4}{\varepsilon})\geq \tilde{T}_i\geq T_i-\left|Lap(\frac{2}{\varepsilon})\right|.
    \end{aligned}
\end{equation}
Or in other words:
\begin{equation}
    \begin{aligned}
        q_{i}(D)\geq T_i-\left|Lap(\frac{2}{\varepsilon})\right|-Exp_{i}(\frac{4}{\varepsilon})\geq T_i -\alpha.
    \end{aligned}
\end{equation}

Similarly, for any $a_i=\bot$, we have that
\begin{equation}
    \begin{aligned}
        q_{i}(D)\leq T_i+\left|Lap(\frac{2}{\varepsilon})\right|+Exp_{i}(\frac{4}{\varepsilon})\leq T_i +\alpha.
    \end{aligned}
\end{equation}

An error may occur when either $Exp_i(\frac{4}{\varepsilon})$ or $\left|Lap(\frac{2}{\varepsilon})\right |$ is too large. Since both errors could happen, $\frac{\alpha}{2}$ is allocate to each type~\cite{dwork2014algorithmic}. Therefore, by letting the following inequalities hold:
\begin{equation}
    \begin{aligned}
        &\Pr[\min\limits_{i\in[k]]}Exp_{i}(\frac{4}{\varepsilon})\geq\frac{\alpha}{2}]\leq k\cdot\exp(-\frac{\alpha}{2})\leq \frac{\beta}{2},\\
        &\Pr[\left|Lap\right|\geq\frac{\alpha}{2}]\leq\frac{\beta}{2},
    \end{aligned}
\end{equation}
together with some calculus, we have that Inequality~\ref{eq: acc-r0} holds.

Now, we show that by taking the optimal threshold correction term $r^{op}$ into consideration, there is a $\beta^{\prime}$, such that
\begin{equation}\label{eq: alpha-reverse}
    \beta^{\prime}\leq\beta\leq2\exp{\left(\frac{\ln{k}-\alpha\varepsilon}{4}\right)}
\end{equation}
holds.

Note that in Equation~\ref{eq: optimal-term}, the optimal threshold correction term $r^{op}$ is obtained by maximizing $p(r)$. That is to say,
\begin{equation}
    r^{op}=\arg\min\left(1-\beta\right).
\end{equation}
Hence, by inserting $r^{op}$ back into Equation~\ref{eq: acc-svt}, we have that
\begin{equation}
    \begin{aligned}
        \beta^{\prime}=1-p\left(r^{op}\right)\leq1-p(r)=\beta
    \end{aligned}
\end{equation}
holds. Our proof of Theorem~\ref{theo: utility-guarantee} is then completed by computing 
$\alpha$ based on Inequality~\ref{eq: alpha-reverse}.
\end{proof}

\section{Analytical Expression of $p(r)$}\label{sec: pr_analytical}
By inserting $r+\alpha$ and $r-\alpha$ into Equation 10 in \cite{Craig}, we obtain the following equations immediately:
\begin{equation}\label{eq: success-rate-analytical}
\begin{aligned}
    &p(r) = \left(\Gamma\left(r+\alpha\right)\right)^{k}\cdot\left(1-\Gamma\left(r-\alpha\right)\right)\\
    = & \begin{cases}
        \left(\frac{b\exp{\left(\frac{r+\alpha}{b}\right)}}{2\left(1/\lambda+b\right)}\right)^{k}\left(1-\frac{b\exp{\left(\frac{r-\alpha}{b}\right)}}{2\left(1/\lambda+b\right)}\right) & r\leq-\alpha,\\

        \left(1+\frac{(1/\lambda)^2\exp{\left(-\frac{r+\alpha}{1/\lambda}\right)}}{b^2-(1/\lambda)^2}-\frac{b\exp{\left(-\frac{r+\alpha}{b}\right)}}{2(b-1/\lambda)}\right)^{k}\left(1-\frac{b\exp{\left(\frac{r-\alpha}{b}\right)}}{2\left(1/\lambda+b\right)}\right) & -\alpha\leq r\leq\alpha,\\

        \left(1+\frac{(1/\lambda)^2\exp{\left(-\frac{r+\alpha}{1/\lambda}\right)}}{b^2-(1/\lambda)^2}-\frac{b\exp{\left(-\frac{r+\alpha}{b}\right)}}{2(b-1/\lambda)}\right)^{k}
        \left(-\frac{(1/\lambda)^2\exp{\left(-\frac{r-\alpha}{1/\lambda}\right)}}{b^2-(1/\lambda)^2}+\frac{b\exp{\left(-\frac{r-\alpha}{b}\right)}}{2(b-1/\lambda)}\right) & r>\alpha,   
    \end{cases}
\end{aligned}
\end{equation}
where $b=\frac{\Delta}{\varepsilon_1}$ and $\lambda=\frac{\varepsilon_2}{2c\Delta}$ for the non-monotonic query setting, $b=\frac{\Delta}{\varepsilon_1}$ and $\lambda=\frac{\varepsilon_2}{c\Delta}$ for the monotonic query setting respectively.
\begin{figure}[htbp]
    \centering
    \includegraphics[width=0.6\linewidth]{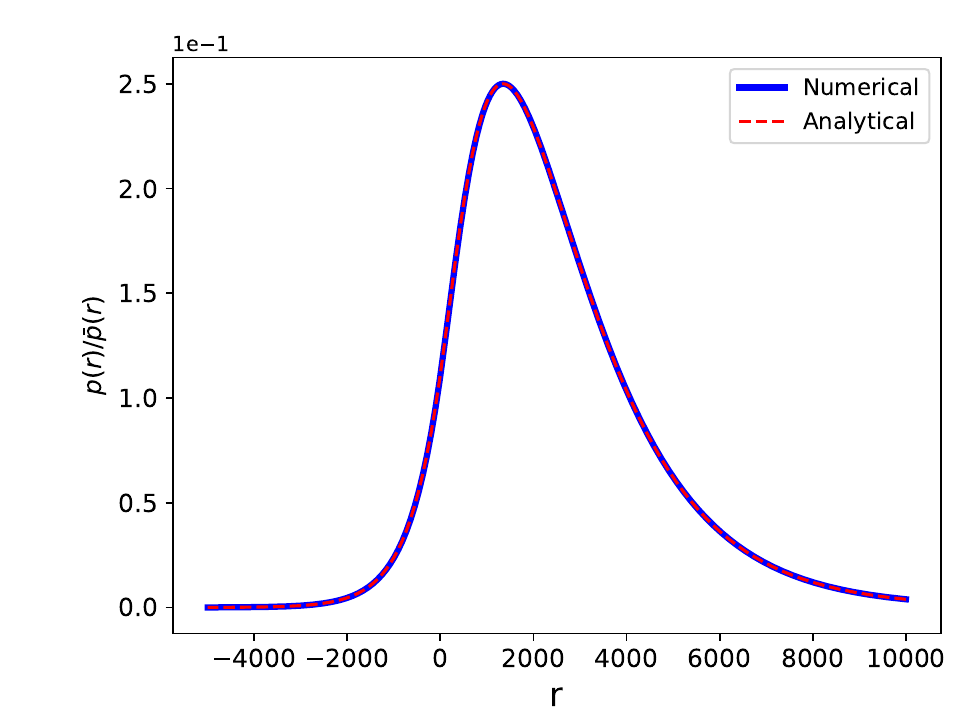}
    \caption{Comparison of $\bar{p}(r)$ computed using Algorithm~\ref{alg: threshold_correction} and Algorithm~\ref{alg: discretizer}. The bucket size $m$ Algorithm~\ref{alg: discretizer} is set to $20,001$. The other involved parameters are set as $\varepsilon=0.03$, $\alpha=0$, $k=10$.}
    \label{fig: threshold_analytical}
\end{figure}

We compare the analytical expression of $p(r)$ presented in Equation~\ref{eq: success-rate-analytical} with our numerical estimation computed using Algorithm~\ref{alg: threshold_correction} and Algorithm~\ref{alg: discretizer} in Figure~\ref{fig: threshold_analytical} to demonstrate the correctness of our numerical framework. The comparison shows that the numerical result align with the analytical expression, validating the correctness of both numerical framework and analytical result.

\section{Proof of the Maximum Value of $p(r)$}\label{sec: proof-of-maximum-pr}
To justify the design of our optimal threshold correction method, we prove in Theorem~\ref{theo: maximum-exists} that for any pair of noise distributions, the maximum value of $p(r)$ exists. Consequently, we can always obtain the optimal threshold correction term $r^{op}$ by maximizing $p(r)$.
\begin{theorem}\label{theo: maximum-exists}
    The maximum value of $p(r)$ exists and is no smaller than $\frac{k^{k}}{(k+1)^{k+1}}$.
\end{theorem}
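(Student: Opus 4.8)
The plan is to treat $p(r)$ as a function of the single real variable $r$ and locate its maximum via elementary calculus, exploiting the special structure
$p(r) = \bigl(\Gamma(r+\alpha)\bigr)^{k}\cdot\bigl(1-\Gamma(r-\alpha)\bigr)$,
where $\Gamma$ is the CDF of $Z = X - Y$ with $X\sim\texttt{Exp}(1/\lambda)$ and $Y\sim\texttt{Lap}(b)$. Since $\Gamma$ is a CDF, it is continuous, nondecreasing, with $\Gamma(-\infty)=0$ and $\Gamma(+\infty)=1$; moreover $Z$ has full support on $\mathbb{R}$ (the Laplace part alone already gives positive density everywhere), so $0<\Gamma(t)<1$ for every finite $t$ and $\Gamma$ is in fact strictly increasing. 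First I would establish existence: $p$ is continuous on $\mathbb{R}$, strictly positive there, and $p(r)\to 0$ as $r\to-\infty$ (the factor $\Gamma(r+\alpha)^k\to 0$) and as $r\to+\infty$ (the factor $1-\Gamma(r-\alpha)\to 0$). Hence $p$ attains a global maximum at some finite $r^{op}$, which also makes the $\argmax$ in Equation~\ref{eq: optimal-term} well-defined. This is the easy half and mirrors the discussion already referenced in Section~\ref{subsec: correction_methodology}.

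For the quantitative lower bound $\frac{k^{k}}{(k+1)^{k+1}}$, the idea is to exhibit a single value of $r$ at which $p(r)$ is at least this large, rather than to compute $r^{op}$ exactly. The natural target is the value where $\Gamma(r+\alpha) = \frac{k}{k+1}$: if in addition $\Gamma(r-\alpha)\le \frac{k}{k+1}$, i.e. $1-\Gamma(r-\alpha)\ge \frac{1}{k+1}$, then
$p(r) \ge \left(\frac{k}{k+1}\right)^{k}\cdot\frac{1}{k+1} = \frac{k^{k}}{(k+1)^{k+1}}$.
So the plan is: let $r^{\star}$ be the (unique, by strict monotonicity) solution of $\Gamma(r^{\star}+\alpha)=\frac{k}{k+1}$, and observe that since $\Gamma$ is nondecreasing and $r^{\star}-\alpha \le r^{\star}+\alpha$ (because $\alpha\ge 0$), we get $\Gamma(r^{\star}-\alpha)\le \Gamma(r^{\star}+\alpha)=\frac{k}{k+1}$, which delivers exactly the bound above, and a fortiori $\max_r p(r)\ge p(r^{\star})\ge \frac{k^{k}}{(k+1)^{k+1}}$. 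The only structural facts used are that $\Gamma$ is a genuine CDF with full support and that $\alpha\ge 0$ (the paper defaults to $\alpha=0$, in which case $r^{\star}-\alpha = r^{\star}+\alpha$ and the inequality is an equality).

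The motivation for picking the level $\frac{k}{k+1}$ is that this is precisely the maximizer of the toy surrogate $g(x) = x^{k}(1-x)$ on $[0,1]$: $g'(x) = x^{k-1}(k - (k+1)x)$ vanishes at $x = \frac{k}{k+1}$, with $g\bigl(\frac{k}{k+1}\bigr) = \frac{k^k}{(k+1)^{k+1}}$. One should note that $p(r)$ is \emph{not} literally $g(\Gamma(r))$ because the two factors are evaluated at the shifted arguments $r+\alpha$ and $r-\alpha$; the argument above sidesteps this by only needing an inequality in the favorable direction, which monotonicity of $\Gamma$ supplies for free. I do not expect a serious obstacle here — the main things to be careful about are (i) confirming $0<\Gamma<1$ everywhere so that no factor is ever zero or one at finite $r$ (needed both for existence via the limiting behavior and to know $r^{\star}$ exists), and (ii) handling the sign of $\alpha$ cleanly so the comparison $\Gamma(r^{\star}-\alpha)\le \Gamma(r^{\star}+\alpha)$ is valid. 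If the paper wants the bound to be tight, one could further remark that equality $\max_r p(r) = \frac{k^k}{(k+1)^{k+1}}$ would require $\Gamma(r-\alpha)=\Gamma(r+\alpha)$ at the optimum, which forces $\alpha=0$; for $\alpha>0$ the maximum is strictly larger, so the stated bound is the correct universal one.
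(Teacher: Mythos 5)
Your proof is correct and follows essentially the same strategy as the paper: both arguments hinge on the monotonicity of $\Gamma$ and $\alpha\ge 0$ to reduce the lower bound to the elementary maximum of $x^k(1-x)$ at $x=\frac{k}{k+1}$. (The paper phrases this as the pointwise inequality $p(r)\ge \Gamma(r)^k(1-\Gamma(r))$ and then maximizes the right side, whereas you evaluate $p$ directly at the $r^{\star}$ where $\Gamma(r^{\star}+\alpha)=\frac{k}{k+1}$; these are the same idea.) Your existence argument via $p$ being continuous, positive, and vanishing at $\pm\infty$ is in fact cleaner and more complete than the paper's, which argues via a sign change of $p'$ and the existence of a critical point — a statement that, as written, only yields a stationary point rather than a global maximum.
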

\begin{proof}
    According to Equation~\ref{eq: success-rate}, we have:
    \begin{equation}
        \begin{aligned}
            p(r) = (\Gamma(r+\alpha))^{k}\cdot(1-\Gamma(r-\alpha)),
        \end{aligned}
    \end{equation}
    the differentiation of which is written as:
    \begin{equation}
        \begin{aligned}
            p^{\prime}(r) = (\Gamma(r+\alpha))^{k-1}\cdot(kf(r+\alpha)(1-\Gamma(r-\alpha))-\Gamma(r+\alpha)f(r-\alpha)),
        \end{aligned}
    \end{equation}
    where $f(\cdot)$ is the probability density function of the $\Gamma(\cdot)$. As $\Gamma(r+\alpha)^{k-1}>0$ holds, when $r\to -\infty$, both $\Gamma(r-\alpha)$ and $\Gamma(r+\alpha)$ grow closer to $0$. In other words:
    \begin{equation}
        \begin{aligned}
            \lim_{r\to -\infty} p^{\prime}(r) = f(r+\alpha)\geq 0.
        \end{aligned}
    \end{equation}
    Similarly, when $r\to +\infty$, we have $\Gamma(r+\alpha)\to 1$ and $\Gamma(r-\alpha)\to1$. Therefore, we have:
    \begin{equation}
        \begin{aligned}
            \lim_{r\to+\infty}p^{\prime}(r)=-f(r-\alpha)\leq0.
        \end{aligned}
    \end{equation}
    Since $p^{\prime}(r)$ is a continuous function for $r\in\mathbb{R}$, there must is a $r$ that has $p^{\prime}(r)=0$. Also, we have the following inequalities hold for any $r\in \mathbb{R}$ and $\alpha\geq 0$:
    \begin{equation}\label{eq: maximum-value}
        \begin{aligned}
             p(r) = &(\Gamma(r+\alpha))^{k}\cdot(1-\Gamma(r-\alpha))\\
             \geq & \underbrace{\Gamma(r)^{k}\cdot(1-\Gamma(r))}_{(***)}.
        \end{aligned}
    \end{equation}
    Since the maximum value of $(***)$ is $\frac{k^k}{(k+1)^{k+1}}$, which can be achieved when $\Gamma(r)=\frac{k}{k+1}$, we have that the maximum value of $p(r)$ is no less than $\frac{k^k}{(k+1)^{k+1}}$ according to Equation~\ref{eq: maximum-value}, which completes our proof.
\end{proof}

\section{Privacy Analysis of SVT-Gum}\label{sec: privacy-svt-gum}
To start with, we present the pseudo-code of the SVT algorithm with Gumbel noise in the Algorithm~\ref{alg: gum-svt}.
\begin{algorithm}
    \caption{SVT with Gumbel noise.}\label{alg: gum-svt}
    \LinesNumbered
    \KwIn{$Q=\{q_1,q_2,\ldots\}$,$\Delta$,$\varepsilon_1$,$\varepsilon_2$,$\lambda$,$c$,$T=\{T_1,T_2,\ldots\}$,$\gamma$, $\varepsilon$, $m$, option \texttt{RESAMPLE}}
    $\rho\sim Lap(0,\frac{\Delta}{\varepsilon_1})$\;
    $n_c=0$;$n_a$=0\;
       $r=\texttt{CorrectionTermOptimization}(b,\lambda,\gamma)$\;
    \For{$i=1,2,3,\ldots$}{
        $n_a=n_a+1$\;
        \underline{$\beta=\frac{2c\Delta}{\varepsilon_2}$;\ 
        $v_i\sim Gum(0,\beta)$}\;
    \eIf{$q_i(D)+v_i\geq \rho+T_i+r$}
        {Output $a_i=\top$\;$n_c=n_c+1$\;
        $S_{\top}=S_{\top}\cup i$\;
        if \texttt{RESAMPLE}, $\rho\sim Lap(0,\frac{\Delta}{\varepsilon_1})$\;
        \textbf{Abort} \text{ if $n_c\geq c$}\;}{Output $a_i=\bot$}}
\end{algorithm}

Notice that the only difference between the Algorithm~\ref{alg: exp-svt} and the Algorithm~\ref{alg: gum-svt} is that we use the random noise drawn from the Gumbel distribution with parameter $\beta=\frac{2c\Delta}{\varepsilon_2}$. In particular, the cumulative function of this distribution is:
\begin{equation}
    \begin{aligned}
        F_{Gum}(x)=\exp(-\exp(-\frac{x}{\beta})).
    \end{aligned}
\end{equation}
The privacy of Algorithm is given in the following theorem:
\begin{theorem}[Privacy guarantee]\label{theo: privacy-guarantee-gum}
    Algorithm~\ref{alg: gum-svt} satisfies $(\varepsilon_1+\varepsilon_2)$-differential privacy when \texttt{RESAMPLE} is set to \texttt{False}, and $(c\varepsilon_1,\varepsilon_2)$-differential privacy when \texttt{RESAMPLE} is set to \texttt{True}.
\end{theorem}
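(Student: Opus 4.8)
The plan is to mirror the proof of Theorem~\ref{theo: privacy-guarantee} almost verbatim, replacing the role played there by the exponential CDF with the Gumbel CDF. By the analysis in Appendix~\ref{sec: proof-of-lips-privacy}, it suffices to show that the Gumbel noise used for query perturbation satisfies the Lipschitz-type condition~\eqref{eq: lispchitz}, since the threshold is still perturbed with Laplace noise (so condition~\eqref{eq: pure_dp} holds with $k_1 = \varepsilon_1/\Delta$ exactly as in the Laplace mechanism). Concretely, first I would reduce the statement, via Theorem~\ref{theo: restate-privacy-svt-lips} and its proof, to bounding, for each $j \in S_{\top}$,
\[
\bigl|\ln(1-F_{Gum}(y-2\Delta)) - \ln(1-F_{Gum}(y))\bigr|
\]
by $\varepsilon_2/c$, where $y = \tilde T_j - q_j(D')$ and $F_{Gum}(x) = \exp(-\exp(-x/\beta))$ with $\beta = 2c\Delta/\varepsilon_2$.

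The key step is the analytic bound on the tail of the Gumbel distribution. Writing $g(x) = 1 - F_{Gum}(x) = 1 - \exp(-e^{-x/\beta})$, I would show that $\bigl|\ln g(x) - \ln g(x+b)\bigr| \le |b|/\beta$ for all real $x$ and all shifts $b$. The clean way is to check that $\frac{d}{dx}\ln g(x)$ is bounded in absolute value by $1/\beta$. Differentiating, $\frac{d}{dx}\ln g(x) = \frac{-g'(x)}{g(x)}$ where $g'(x) = -\frac{1}{\beta} e^{-x/\beta} \exp(-e^{-x/\beta})$, so
\[
\frac{d}{dx}\ln g(x) = \frac{1}{\beta}\cdot \frac{e^{-x/\beta}\exp(-e^{-x/\beta})}{1-\exp(-e^{-x/\beta})}.
\]
Setting $s = e^{-x/\beta} > 0$, the fraction becomes $\frac{s e^{-s}}{1-e^{-s}} = \frac{s}{e^{s}-1}$, which lies in $(0,1)$ for all $s>0$ (since $e^s - 1 > s$). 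Hence $\bigl|\frac{d}{dx}\ln g(x)\bigr| \le \tfrac{1}{\beta}$, and integrating over an interval of length $|b|$ gives the Lipschitz bound with constant $k_2 = 1/\beta = \varepsilon_2/(2c\Delta)$. Plugging $b = 2\Delta$ yields exactly $2\Delta \cdot \varepsilon_2/(2c\Delta) = \varepsilon_2/c$ per positive outcome, as required.

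Having established condition~\eqref{eq: lispchitz}, I would conclude exactly as in Appendix~\ref{sec: exp-privacy-proof}: plug the per-query bound $e^{\varepsilon_2/c}$ and the threshold bound $\Pr[\rho = t] \le e^{\varepsilon_1}\Pr[\rho = t - \Delta]$ into the integral identity~\eqref{eq: **}, which gives $(\varepsilon_1 + \varepsilon_2)$-DP when \texttt{RESAMPLE}${}={}$\texttt{False}; for \texttt{RESAMPLE}${}={}$\texttt{True}, decompose the query stream into $c$ subsequences each containing one positive outcome, obtain $(\varepsilon_1 + \varepsilon_2/c)$-DP per subsequence, and apply sequential composition (Theorem~\ref{theo: composition}) to get $(c\varepsilon_1, \varepsilon_2)$-DP. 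The monotonic case replaces $2\Delta$ by $\Delta$ in the shift, and correspondingly one would use $\beta = c\Delta/\varepsilon_2$. I expect the main obstacle to be purely the tail estimate on the Gumbel CDF — specifically verifying that $s/(e^s-1) \le 1$ uniformly and handling the limiting behavior as $x \to \pm\infty$ (where $g(x)\to 1$ and $g(x)\to 0$ respectively) so that the log-derivative bound is genuinely global; everything downstream is a mechanical transcription of the existing exponential-noise proof.
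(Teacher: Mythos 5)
Your proof is correct, and the overall structure (reduce to condition~\eqref{eq: lispchitz}, then wrap up exactly as in Appendix~\ref{sec: exp-privacy-proof}) matches the paper. But your key technical step is genuinely different from the paper's. You bound the log-derivative of the tail $g(x)=1-F_{Gum}(x)$ pointwise: with $s=e^{-x/\beta}$ the derivative reduces to $\frac{1}{\beta}\cdot\frac{s}{e^{s}-1}$, and $s/(e^{s}-1)<1$ for $s>0$ immediately gives the global Lipschitz constant $1/\beta$. The paper instead substitutes $a=-e^{-x/\beta}$ and $b=e^{-\varepsilon_2/c}$, reduces the ratio bound to $f(b)=1-e^{ab}-b(1-e^{a})\geq 0$ on $[0,1]$, and argues via the sign of $f'$ (in fact the published argument is a bit off: their displayed $f'$ has an extraneous leading $1$, and $f'$ actually changes sign on $(0,1)$; the inequality really follows because $f(0)=f(1)=0$ and $f''=-a^{2}e^{ab}<0$ so $f$ is concave). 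Your derivative-based argument is cleaner, avoids the endpoint/sign bookkeeping, and yields the Lipschitz property in exactly the form required by Theorem~\ref{theo: privacy-svt-lips}, whereas the paper's substitution argument is more ad hoc. Minor nit: you wrote $\frac{d}{dx}\ln g=\frac{-g'}{g}$ (extra sign), but since $g'<0$ and you only use the absolute value, the displayed formula and the final bound are both correct.
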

\begin{proof}
    To prove Theorem~\ref{theo: privacy-guarantee-gum}, we only need to show that $F_{Gum}(x)$ satisfies the condition in Theorem~\ref{theo: privacy-svt-lips}. In other words, we have to prove that
    \begin{equation}\label{eq: privacy-gum-inital}
        \begin{aligned}
            \left| \frac{1-\exp(-\exp(-\frac{x}{\beta}))}{1-\exp(-\exp(-\frac{x+2\Delta}{\beta}))}\right|\leq e^{\frac{\varepsilon_2}{c}}.
        \end{aligned}
    \end{equation}
    For simplicity, we remove the symbol of absolute value here as the $\left|\frac{1-F(x+c)}{1-F(x)}\right|\geq e^{-\frac{\varepsilon_2}{c}}$ follows by symmetry.

    For clarity, we let $a$ denote $-\exp(-\frac{x}{\beta})$ and $b$ denote $\exp(-\varepsilon)$. Therefore, to prove that Equation~\ref{eq: privacy-gum-inital} holds for any $\varepsilon_2\geq 0$, we only need to show that the following equation holds:
    \begin{equation}
        \begin{aligned}
            1-\exp(ab)-b(1-\exp(a))\geq 0.
        \end{aligned}
    \end{equation}
    In particular, we let:
    \begin{equation}
        \begin{aligned}
            f(b) = 1-\exp(ab)-b(1-\exp(a)).
        \end{aligned}
    \end{equation}
    Then, we have
    \begin{equation}
        \begin{aligned}
            f^{\prime}(b) = 1-a\exp(ab)-(1-\exp(a))\geq 0
        \end{aligned}
    \end{equation}
    holds for any $x\in \mathbb{R}$ and any $\varepsilon_2\geq0$. Therefore, we have that when $b=0$, $f(b)$ takes the minimum value, which is $0$. The Equation~\ref{eq: privacy-gum-inital} is then proved. The rest of the proof of Theorem~\ref{theo: privacy-guarantee-gum} follows the same methodology as the proof of Theorem~\ref{theo: privacy-guarantee}. We hence omit it from here for simplicity.
    Also note that our proof here is for the non-monotonic query setting. The privacy analysis for the monotonic query setting can be developed follow the same methodology as in Appendix~\ref{sec: exp-privacy-proof}, we also omit it from here.
\end{proof}

\section{Proof of Privacy Allocation}\label{sec: privacy-allocation-proof}
Overall, the optimal privacy budget allocation scheme follows the methodology proposed by Lyu \etal~\cite{lyu2017understanding}: to make the following comparison as accurate as possible, we need to minimize the variance $V=\texttt{Var}(N_1)+\texttt{Var}(N_2)$.
\begin{equation}
    q(D)+N_2(c,\varepsilon_2)\geq T+N_1(c,\varepsilon_1).
\end{equation}
In this section, we discuss the optimal privacy budget allocation results with different noise distributions for the non-monotonic query setting. We omit the proof for the monotonic query setting, as it shares the same methodology and similar calculations as its non-monotonic counterparts. However, we provide the optimal privacy allocation results for monotonic query settings in Table~\ref{tbl: privacy-alloation-mono} to offer additional insights.
\begin{table}[htbp]
\caption{Optimal Privacy Allocation for the Monotonic Query Setting. While $\varepsilon$ is the overall privacy budget consumption, $\varepsilon_1$ and $\varepsilon_2$ are the privacy budget for threshold perturbation and query perturbation, respectively. $w$ is an indicator for privacy budget allocation.}\label{tbl: privacy-alloation-mono}
    \centering
    \scalebox{1}{
    \begin{tabular}{ccccc}
    \toprule
        ~ & SVT-Exp & SVT-Gum & SVT-Lap & SVT-Gau\\ \midrule
        $\varepsilon$ & \multicolumn{4}{c}{$\varepsilon = \varepsilon_1+\varepsilon_2,\ \varepsilon_2=w\varepsilon_1$}\\ \midrule
        $w$ & $\left(\frac{c}{\sqrt{2}}\right)^{2/3}$ & $\left(\frac{\pi c}{2\sqrt{3}}\right)^{2/3}$ & $c^{2/3}$ &$c^{2/3}$ \\ \bottomrule
    \end{tabular}}
\end{table}
\subsection{SVT-Exp}
In SVT-Exp, $N_1(c,\varepsilon_1)=Lap(\frac{\Delta}{\varepsilon_1})$ and $N_2(c,\varepsilon_2)=Exp(\frac{2c\Delta}{\varepsilon_2})$. Thus, we have that
\begin{equation}
    V=2\left(\frac{\Delta}{\varepsilon_1}\right)^2+\left(\frac{2c\Delta}{\varepsilon_2}\right)^2.
\end{equation}
Let $\varepsilon=\varepsilon_1+\varepsilon_2$ and $\varepsilon_2=k\varepsilon_1$, $V$ can be written as
\begin{equation}
    V=2\left(\frac{(w+1)\Delta}{\varepsilon}\right)^2+\left(\frac{2c(w+1)\Delta}{w\varepsilon}\right)^2.
\end{equation}
After some calculation, it is derived that $V$ is minimized when $w$ is set to $(\sqrt{2}c)^{2/3}$.

\subsection{SVT-Gum}
In SVT-Gum, $N_1(c,\varepsilon_1)=Lap(\frac{\Delta}{\varepsilon_1})$ and $N_2(c,\varepsilon_2)=Gum(\frac{2c\Delta}{\varepsilon_2})$. Thus, we have that
\begin{equation}
    V=2\left(\frac{\Delta}{\varepsilon_1}\right)^2+\frac{\pi^2}{6}\left(\frac{2c\Delta}{\varepsilon_2}\right)^2.
\end{equation}
Let $\varepsilon=\varepsilon_1+\varepsilon_2$ and $\varepsilon_2=w\varepsilon_1$, $V$ can be written as
\begin{equation}
    V=2\left(\frac{(w+1)\Delta}{\varepsilon}\right)^2+\frac{\pi^2}{6}\left(\frac{2c(w+1)\Delta}{w\varepsilon}\right)^2.
\end{equation}
After some calculation, it is derived that $V$ is minimized when $w$ is set to $(\frac{\pi c}{\sqrt{3}})^{2/3}$.

\subsection{SVT-Lap}
In SVT-Lap, $N_1(c,\varepsilon_1)=Lap(\frac{\Delta}{\varepsilon_1})$ and $N_2(c,\varepsilon_2)=Lap(\frac{2c\Delta}{\varepsilon_2})$. Thus, we have that
\begin{equation}
    V=2\left(\frac{\Delta}{\varepsilon_1}\right)^2+2\left(\frac{2c\Delta}{\varepsilon_2}\right)^2.
\end{equation}
Let $\varepsilon=\varepsilon_1+\varepsilon_2$ and $\varepsilon_2=w\varepsilon_1$, $V$ can be written as
\begin{equation}
    V=2\left(\frac{(w+1)\Delta}{\varepsilon}\right)^2+2\left(\frac{2c(w+1)\Delta}{w\varepsilon}\right)^2.
\end{equation}
After some calculation, it is derived that $V$ is minimized when $w$ is set to $(2c)^{2/3}$.

\subsection{SVT-Gau}
In SVT-Gau, we assume that $\sigma=\frac{\alpha\Delta}{\varepsilon}$~\cite{dwork2014algorithmic}. Therefore, we have that the variance of $N_1(c,\varepsilon_1)$ is $\sigma_1^2=\left(\frac{\alpha\Delta}{\varepsilon_1}\right)^2$ and the variance of $N_2(c,\varepsilon_2)$ is $\sigma_2^2=\left(\frac{2\alpha c\Delta}{\varepsilon_2}\right)^2$. In other words, we have that
\begin{equation}
\begin{aligned}
    V=\sigma_1^2+\sigma_2^2=\left(\frac{\alpha\Delta}{\varepsilon_1}\right)^2+\left(\frac{2\alpha c\Delta}{\varepsilon_2}\right)^2.
\end{aligned}
\end{equation}
Let $\varepsilon=\varepsilon_1+\varepsilon_2$ and $\varepsilon_2=w\varepsilon_1$, $V$ can be written as
\begin{equation}
    V=\left(\frac{(w+1)\alpha\Delta}{\varepsilon}\right)^2+\left(\frac{2c(w+1)\alpha\Delta}{w\varepsilon}\right)^2.
\end{equation}
After some calculation, it is derived that $V$ is minimized when $w$ is set to $(2c)^{2/3}$.

\section{Evaluation Results over F1-score}\label{sec: eva_f1}
In this section, we first present the overall evaluation results for the F1-score across six datasets in Figure~\ref{fig: overall-50-f1}, which exhibits similar trends to those observed for NCR in Figure~\ref{fig: overall-50}. We then analyze how the F1-score varies with the increasing number of traverses for queries with negative noisy outcomes in Figure~\ref{fig: traverse-f1}. This analysis also reflects similar trends and provides comparable insights to those observed in Figure~\ref{fig: traverse}.
\begin{figure*}[tbp!]
    \centering
    \subfigure[Binary dataset]{
    \includegraphics[width=7cm]{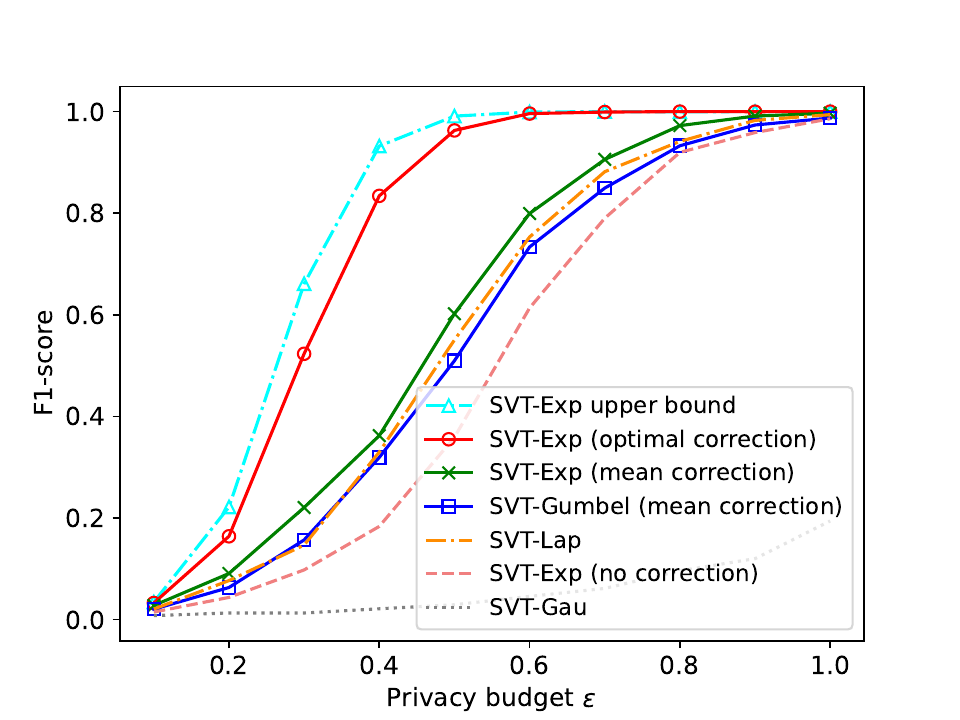}
    }
    \subfigure[Zipf dataset]{
    \includegraphics[width=7cm]{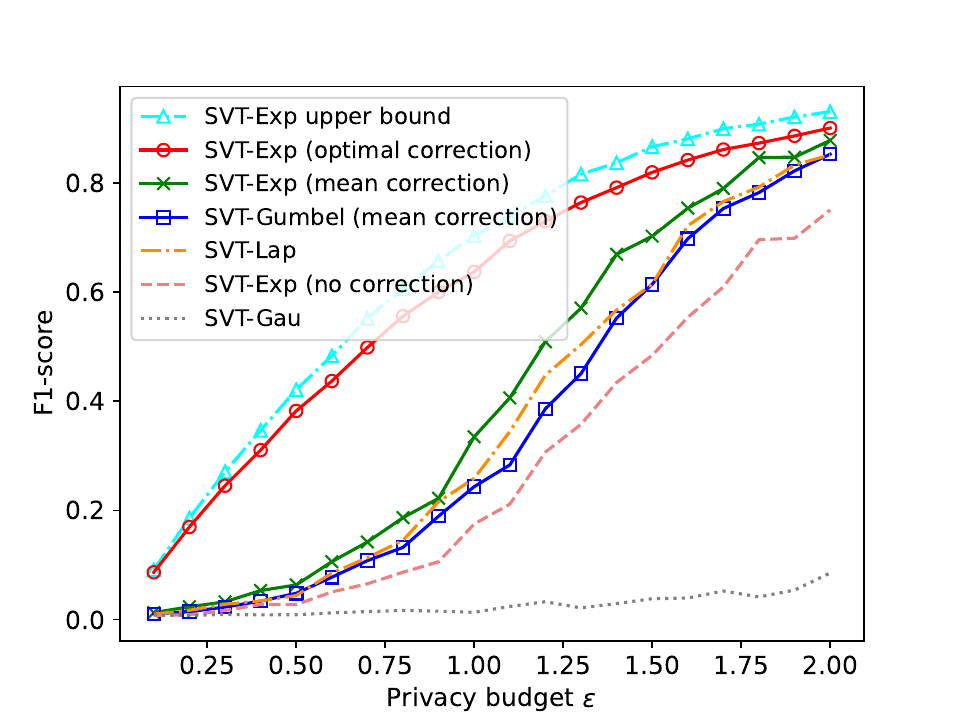}
    }
    \subfigure[BMS-POS dataset]{
    \includegraphics[width=7cm]{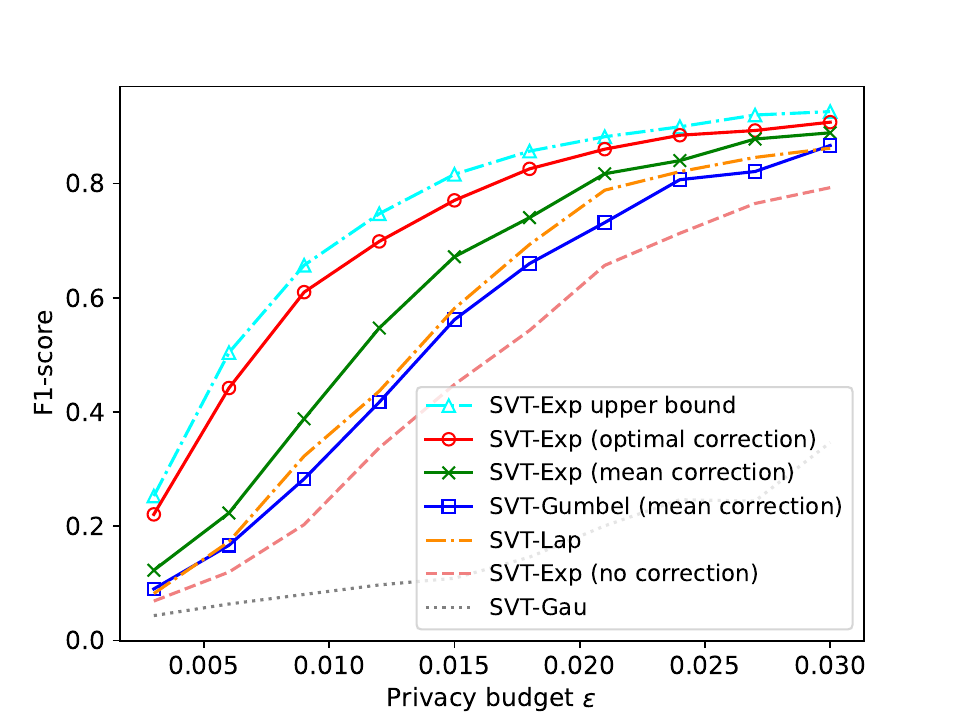}
    }
    \subfigure[Kosarak dataset]{
    \includegraphics[width=7cm]{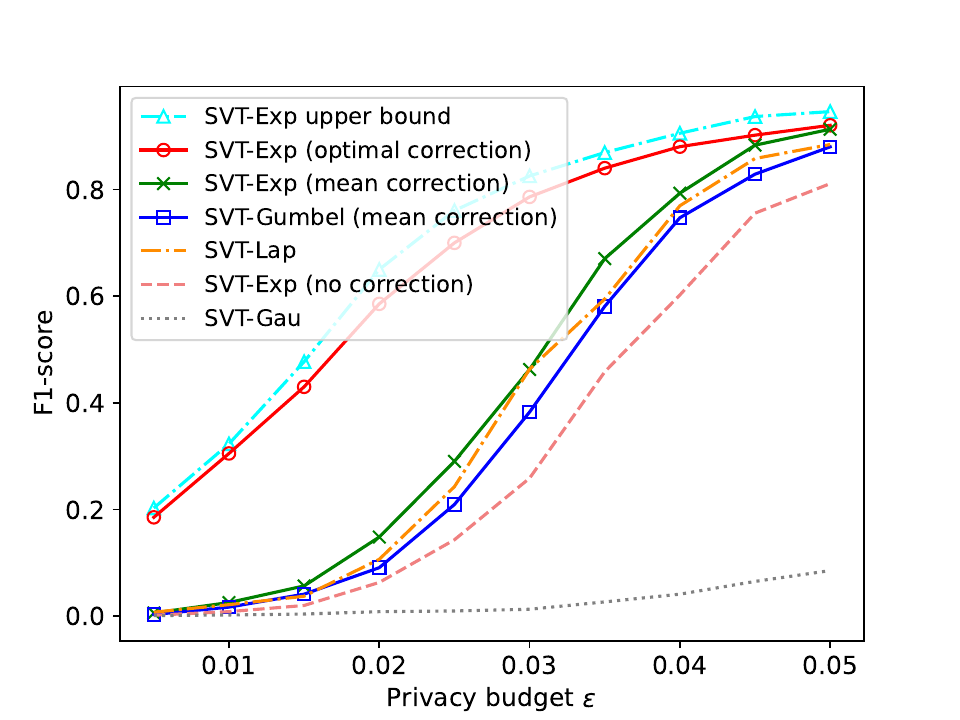}
    }
    \subfigure[Adult dataset]{
    \includegraphics[width=7cm]{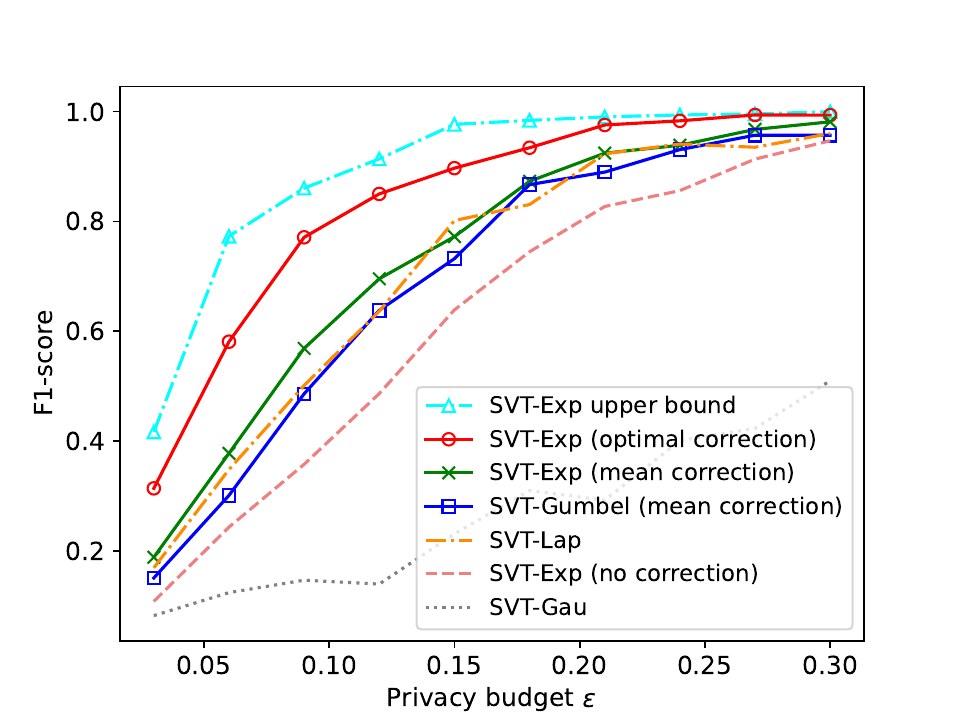}
    }
    \subfigure[T40I10D100K dataset]{
    \includegraphics[width=7cm]{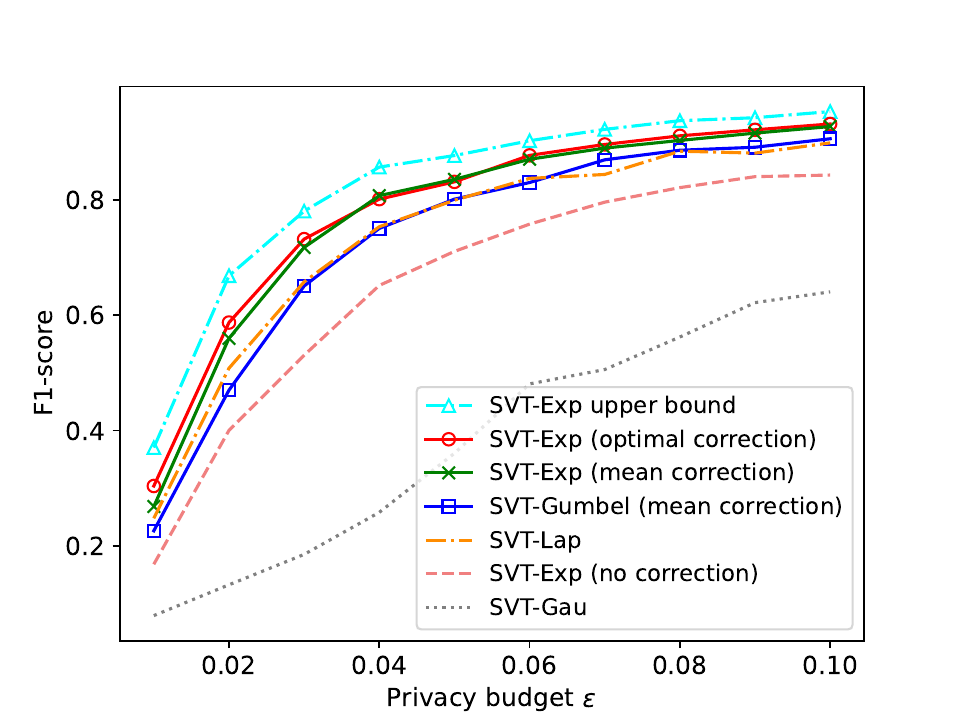}
    }
    \caption{F1-score on six datasets with $c=5$ for Adult and $c=50$ for the remaining datasets, $\alpha=0$, $k=\lfloor\frac{m}{c}\rfloor$, \texttt{RESAMPLE}=\texttt{False}, and \texttt{APPEND}=\texttt{True}. The sequential composition theorem is adopted for the overall privacy budget~(\ie, $\varepsilon$) computation.}
    \label{fig: overall-50-f1}
\end{figure*}

\section{Evaluation Results under RDP Composition}\label{sec: eva_rdp}
In this section, we provide the empirical evaluation results under RDP composition as mentioned in Section~\ref{subsec: background_dp}. We first specify the experiment settings, and then present the evaluation results in Figure~\ref{fig: overall-50-ncr-rdp}, Figure~\ref{fig: overall-50-f1-rdp}, Figure~\ref{fig: traverse-ncr-rdp}, and Figure~\ref{fig: traverse-f1-rdp}, respectively.

\textbf{Parameter settings.}
As stated by Zhu \etal~\cite{zhu2020improving}, using RDP composition in SVT results in a failure rate in Definition~\ref{def: DP} that is greater than $0$. In this section, we set the failure rate $\delta$ to $\frac{1}{n}$ for all compared baselines, following convention~\cite{dwork2014algorithmic}. For the overall privacy budget allocation, we adhere to the rules in Section~\ref{subsec: parameter_selection}. Specifically, for a fixed overall privacy budget $\varepsilon$, we use binary search to determine $\varepsilon_1$ and $\varepsilon_2$ such that $\varepsilon_1+\varepsilon_2=\varepsilon$, $\varepsilon_2=w\varepsilon_1$, and accumulated privacy budget for $c$ noisy positive outcomes equals to $\varepsilon_2$ after applying the RDP composition theorem when each query is assigned a privacy budget $\varepsilon_2^{\prime}$.
Additionally, we provide the evaluation results on only three datasets: Binary, BMS-POS, and Kosarak, as similar trends exhibit on other datasets. This selective presentation ensures clarity while maintaining the comprehensiveness of our analysis.

\textbf{Evaluation results.}
The results for NCR and F1-score of Algorithm~\ref{alg: exp-svt} under RDP composition are presented in Figure~\ref{fig: overall-50-ncr-rdp} and Figure~\ref{fig: overall-50-f1-rdp}, respectively. First, our proposed method consistently outperforms the other baselines across all datasets and privacy regions, similar to the trends observed in Figure~\ref{fig: overall-50} and Figure~\ref{fig: overall-50-f1}, where the sequential composition theorem is adopted. This further demonstrates the effectiveness of our proposed method as well as the generality of our method to other relevant techniques in the literature.
Second, the performance for both the F1-score and NCR of all compared baselines improves compared to their sequential composition counterparts. Notably, SVT-Gau shows a larger performance increase relative to the other baselines. This finding aligns with the analysis provided by Zhu~\etal~\cite{zhu2020improving}, which highlights that Gaussian noise achieves tighter composition results with RDP composition compared to other noise types. However, contrary to their results where SVT-Gau outperforms SVT-Lap, our results show that SVT-Lap still maintains a significant advantage. This discrepancy can be attributed to the different tested scenarios. As discussed by Zhu \etal, SVT-Gau is more suitable for scenarios where the gap between query results and predefined thresholds is relatively large, and relatively few queries are made. In our settings, however, many query results are quite close to the predefined threshold, and a large number of queries are conducted. This context reduces the relative effectiveness of SVT-Gau compared to SVT-Lap.

Figure~\ref{fig: traverse-ncr-rdp} and Figure~\ref{fig: traverse-f1-rdp} illustrate the trade-off between query accuracy and efficiency for our method under RDP composition. Consistent with the results shown in Figure~\ref{fig: traverse} and Figure~\ref{fig: traverse-f1}, our proposed method outperforms the other baselines across most datasets for all tested numbers of traversals. For the remaining datasets, our method demonstrates significant improvements in query accuracy and surpasses other baselines after a few additional rounds of traversals. This highlights the effectiveness of our appending strategy and confirms that our proposed method achieves superior query accuracy with only a marginal additional computational cost.
\begin{figure*}[htbp!]
    \centering
    \subfigure[Binary dataset]{
    \includegraphics[width=7cm]{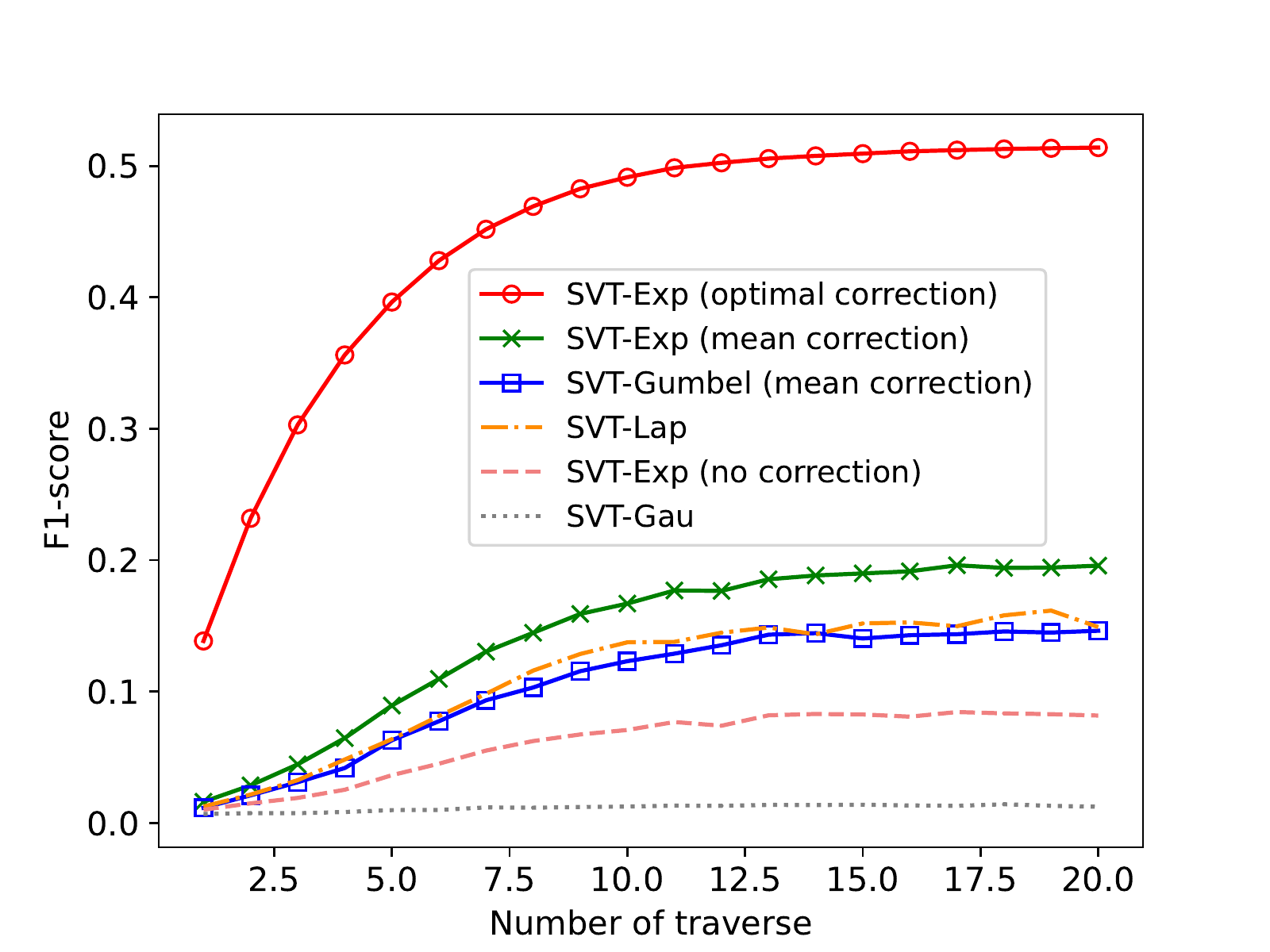}
    }
    \subfigure[Zipf dataset]{
    \includegraphics[width=7cm]{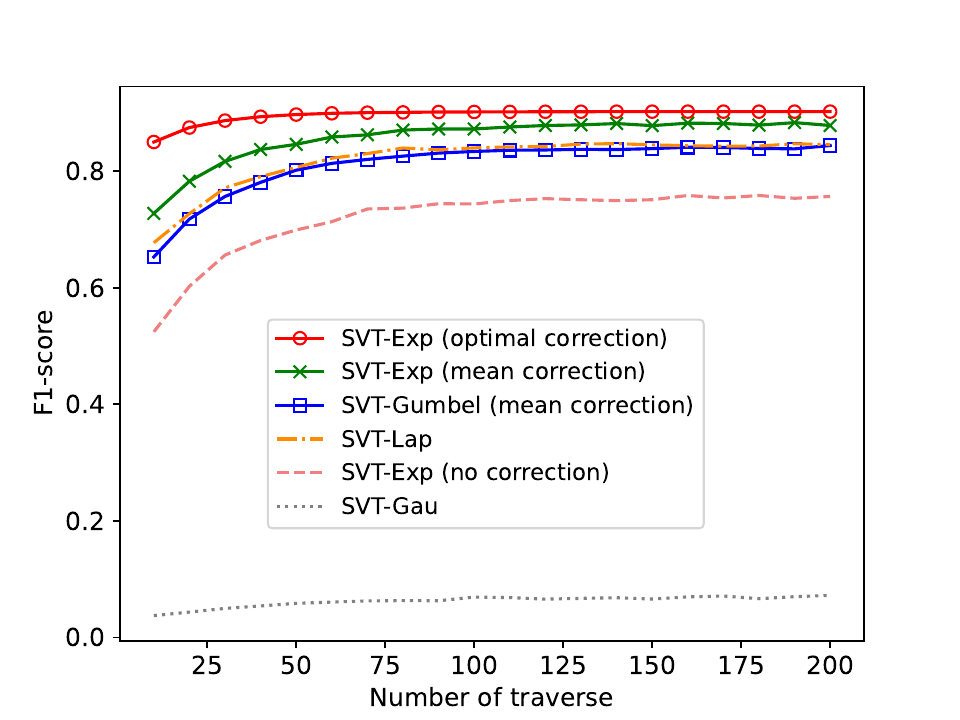}
    }
    \subfigure[BMS-POS dataset]{
    \includegraphics[width=7cm]{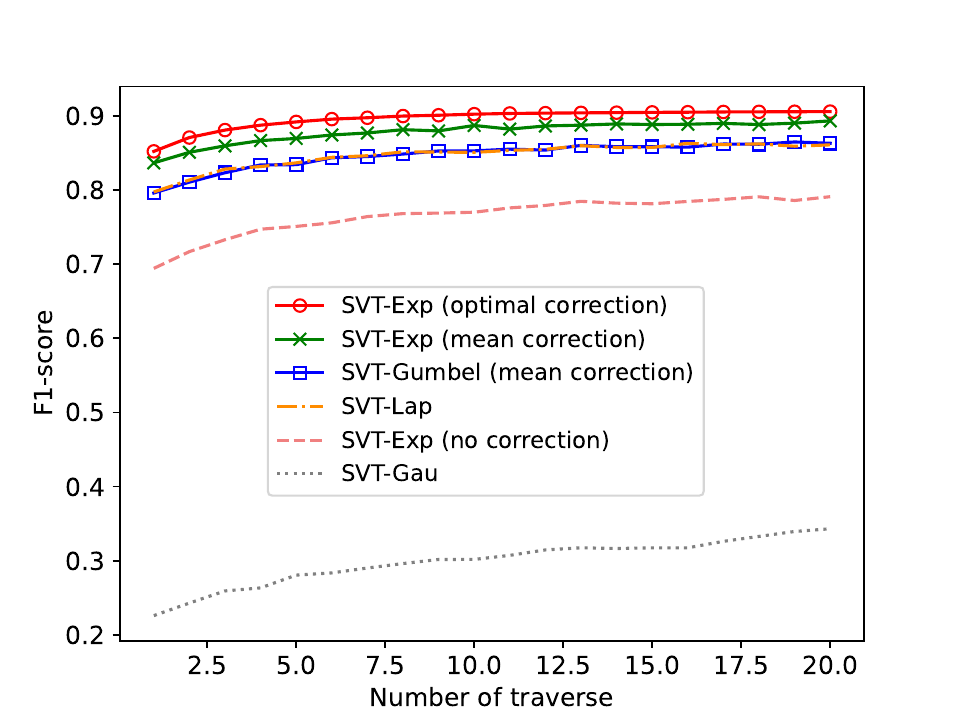}
    }
    \subfigure[Kosarak dataset]{
    \includegraphics[width=7cm]{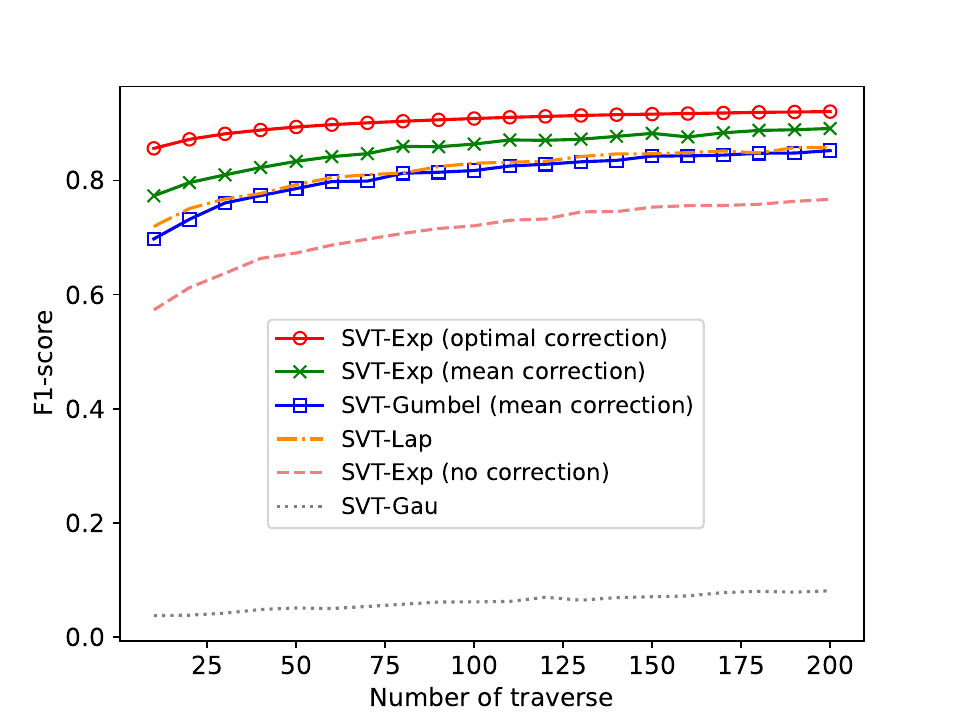}
    }
    \subfigure[Adult dataset]{
    \includegraphics[width=7cm]{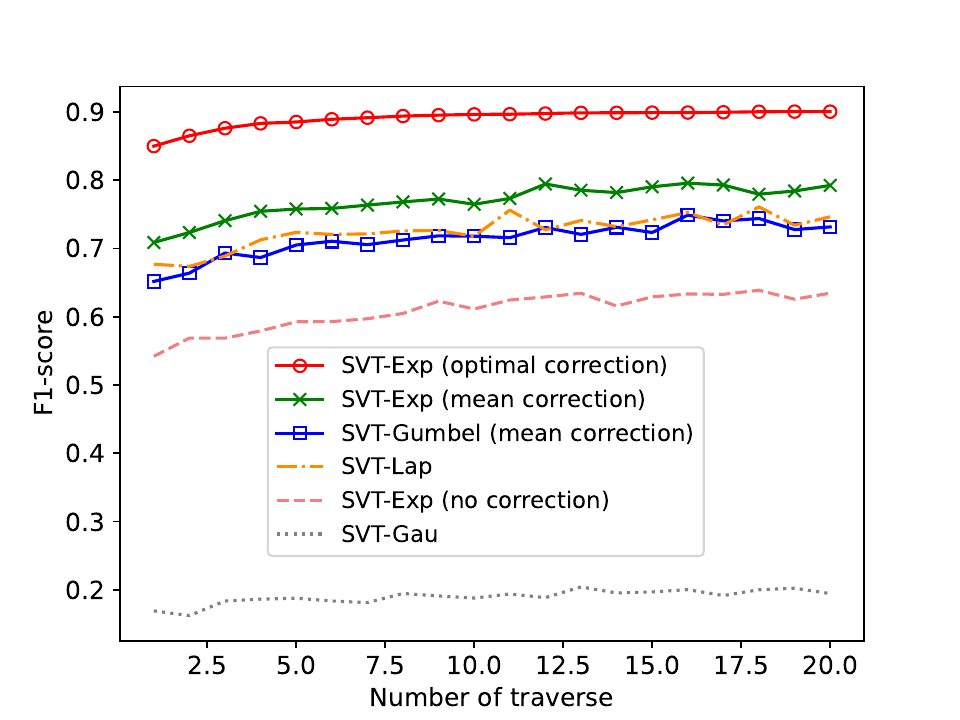}
    }
    \subfigure[T40I10D100K dataset]{
    \includegraphics[width=7cm]{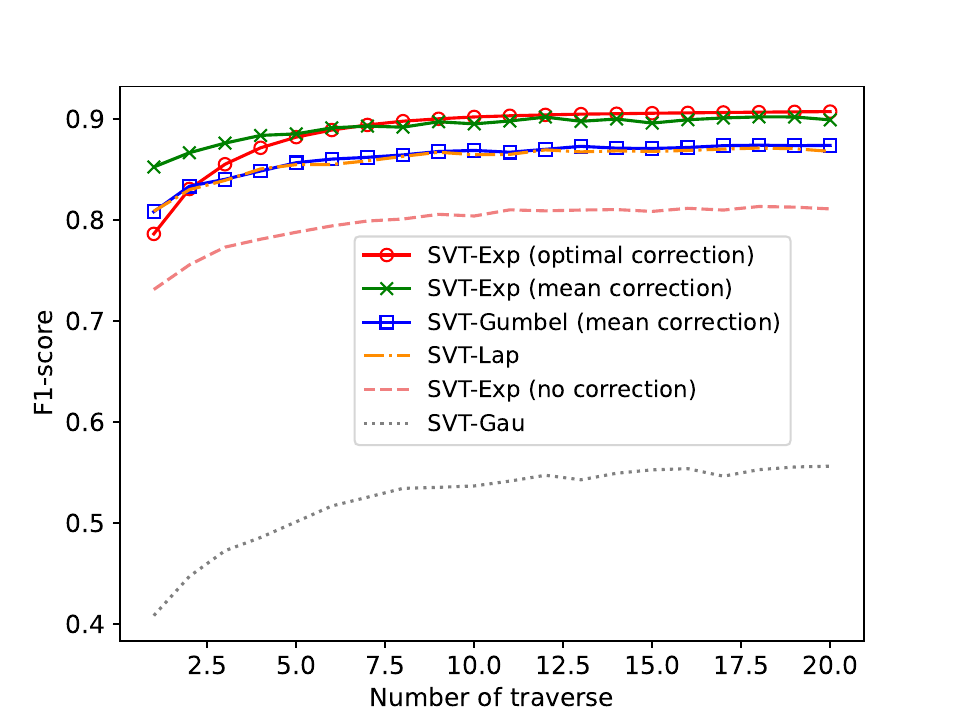}
    }
    \caption{F1-score on six datasets under varying number of traverses with $c=5$ for Adult and $c=50$ for the remaining datasets, $\alpha=0$, $k=\lfloor\frac{m}{c}\rfloor$. \texttt{RESAMPLE}=\texttt{False}, and \texttt{APPEND}=\texttt{True}.The sequential composition theorem is adopted for the overall privacy budget~(\ie, $\varepsilon$) computation.}
    \label{fig: traverse-f1}
\end{figure*}

\begin{figure*}[htbp!]
    \centering
    \subfigure[Binary dataset]{
    \includegraphics[width=5cm]{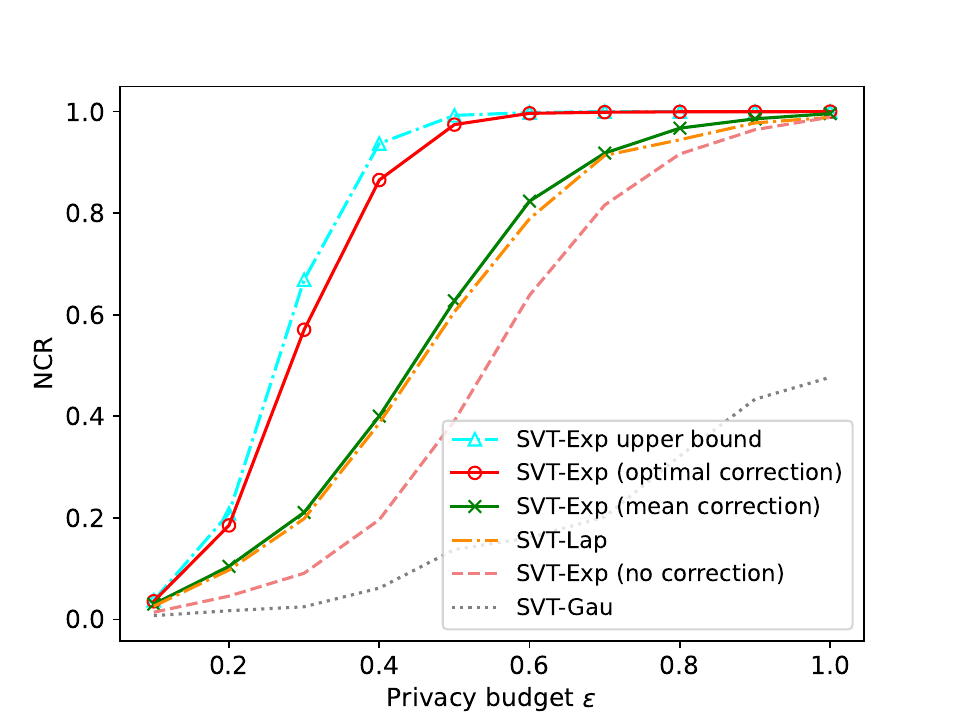}
    }
    \subfigure[BMS-POS dataset]{
    \includegraphics[width=5cm]{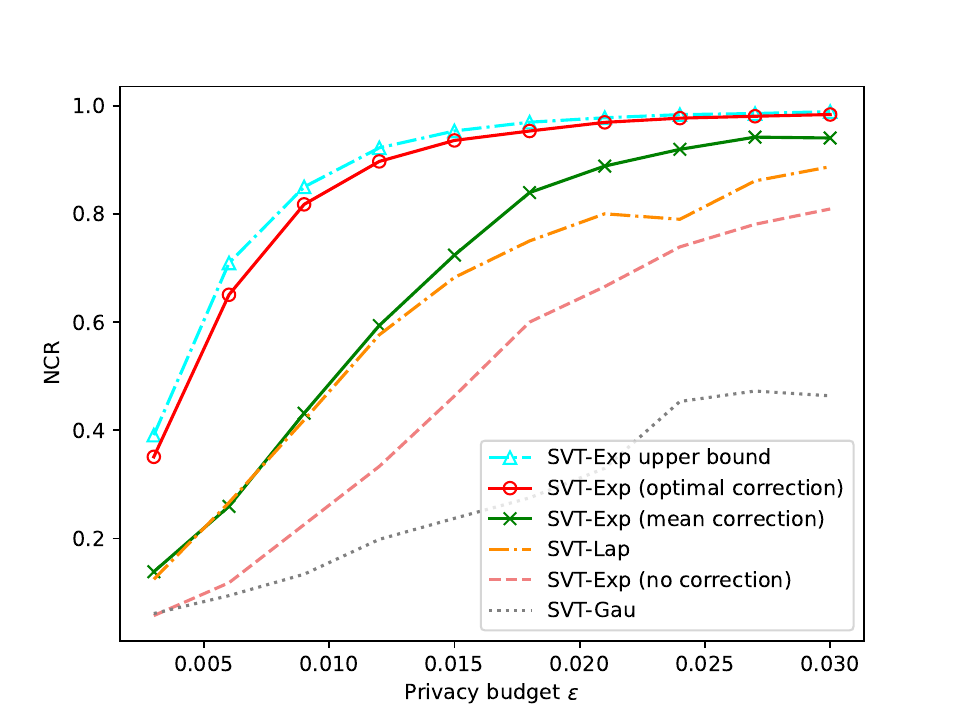}
    }
    \subfigure[Kosarak dataset]{
    \includegraphics[width=5cm]{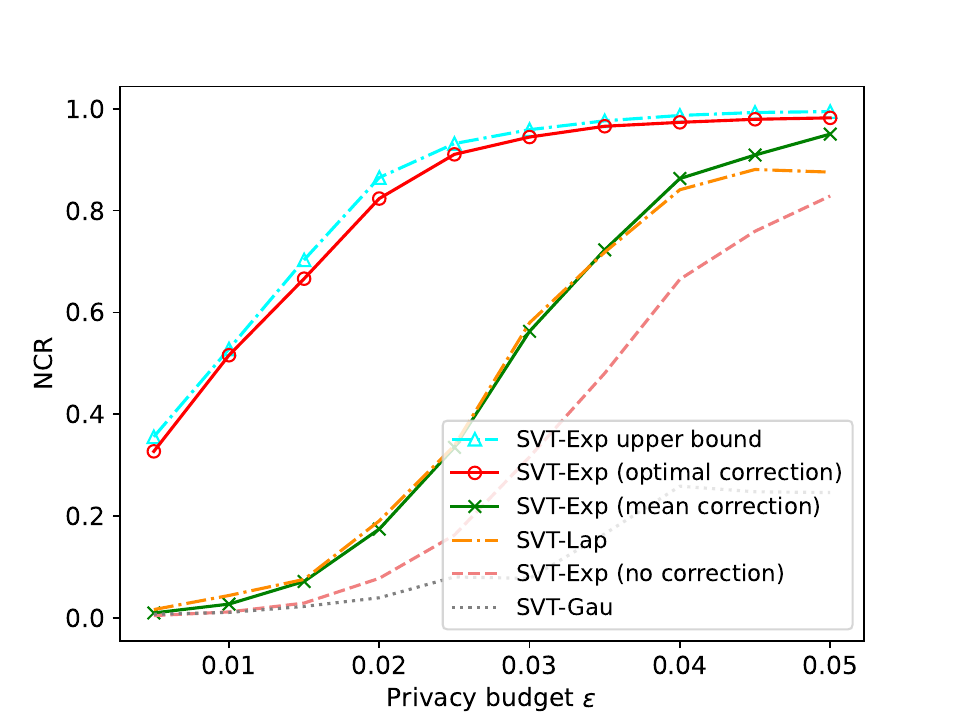}
    }
    \caption{NCR on three datasets with $c=50$, $\alpha=0$, $k=\lfloor\frac{m}{c}\rfloor$, \texttt{RESAMPLE}=\texttt{False}, and \texttt{APPEND}=\texttt{True}. The RDP composition theorem is adopted for the overall privacy budget~(\ie, $\varepsilon$) computation.}
    \label{fig: overall-50-ncr-rdp}
\end{figure*}

\begin{figure*}[htbp!]
    \centering
    \subfigure[Binary dataset]{
    \includegraphics[width=5cm]{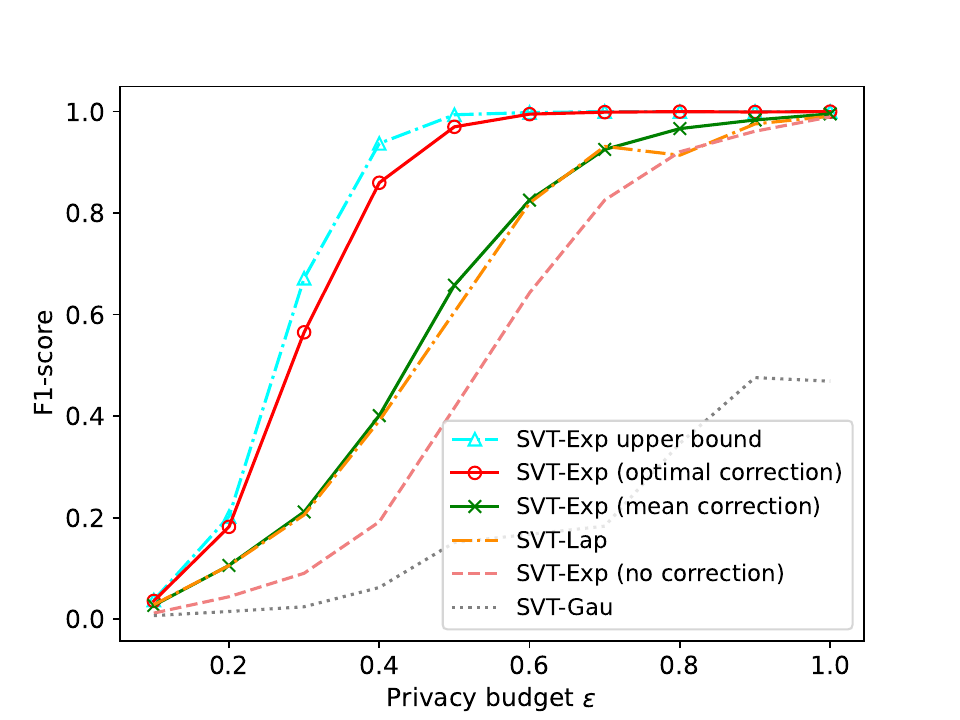}
    }
    \subfigure[BMS-POS dataset]{
    \includegraphics[width=5cm]{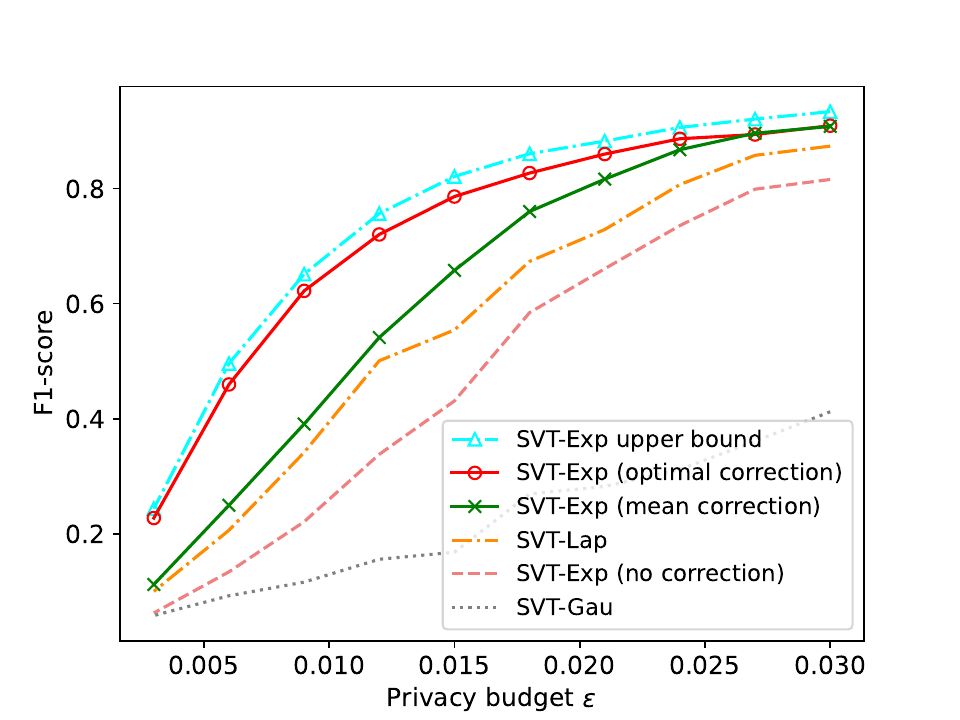}
    }
    \subfigure[Kosarak dataset]{
    \includegraphics[width=5cm]{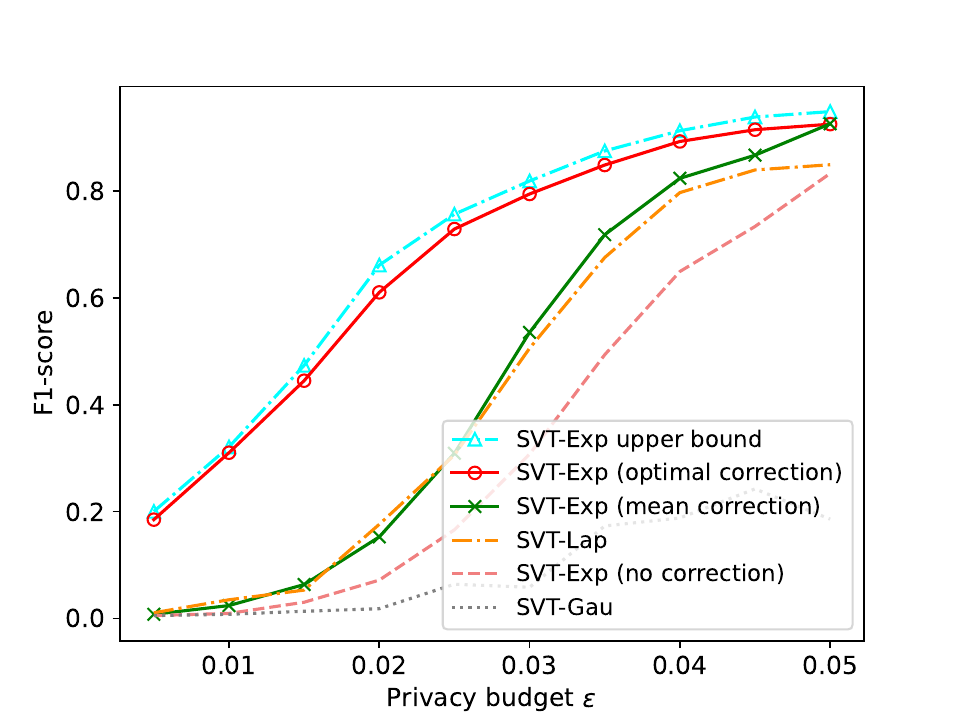}
    }
    \caption{F1-score on three datasets with $c=50$, $\alpha=0$, $k=\lfloor\frac{m}{c}\rfloor$, \texttt{RESAMPLE}=\texttt{False}, and \texttt{APPEND}=\texttt{True}. The RDP composition theorem is adopted for the overall privacy budget~(\ie, $\varepsilon$) computation.}
    \label{fig: overall-50-f1-rdp}
\end{figure*}

\begin{figure*}[htbp!]
    \centering
    \subfigure[Binary dataset]{
    \includegraphics[width=5cm]{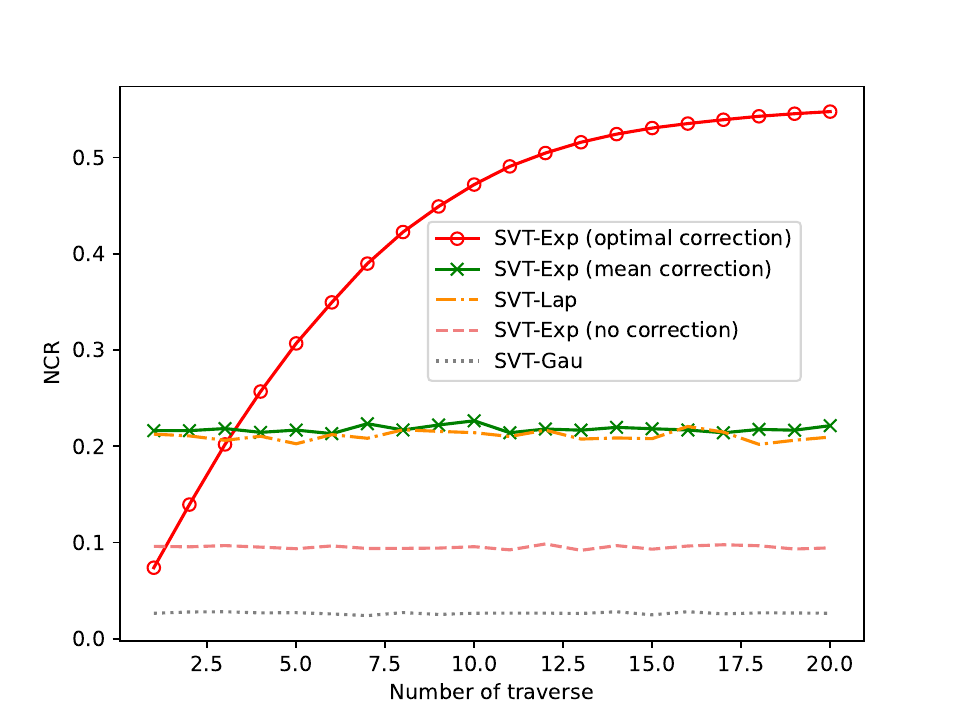}
    }
    \subfigure[BMS-POS dataset]{
    \includegraphics[width=5cm]{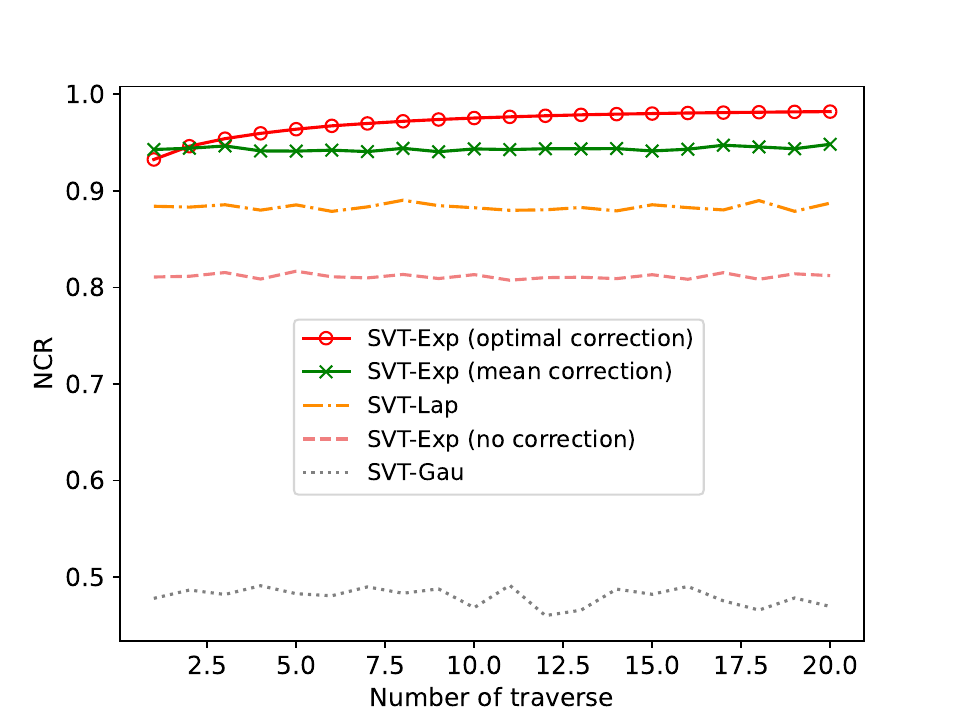}
    }
    \subfigure[Kosarak dataset]{
    \includegraphics[width=5cm]{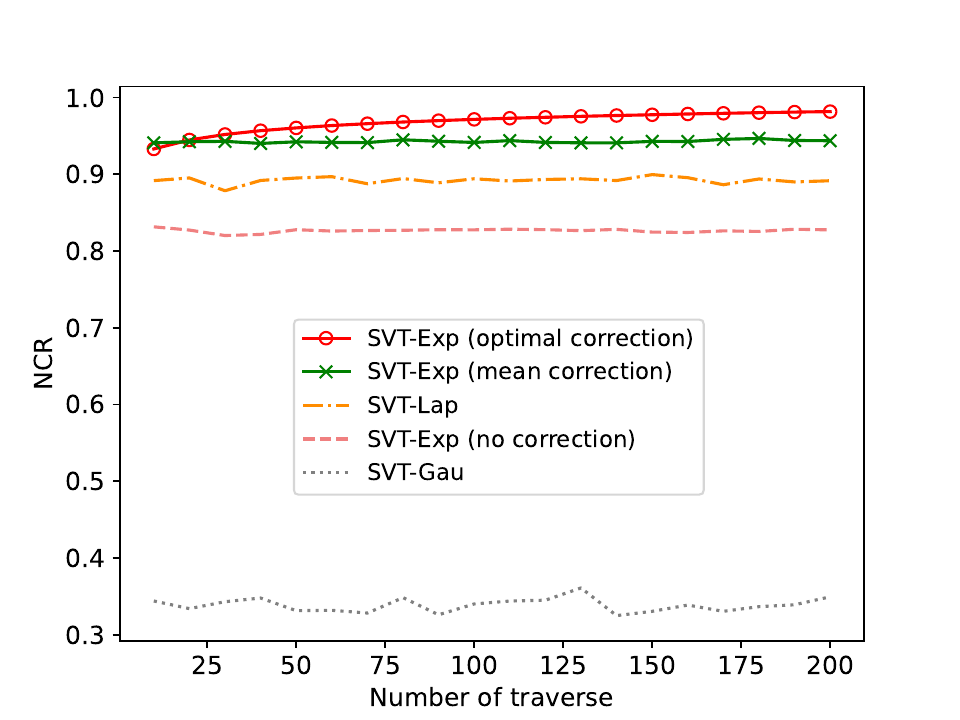}
    }
    \caption{NCR on three datasets under varying number of traverse with $c=50$, $\alpha=0$, $k=\lfloor\frac{m}{c}\rfloor$. The RDP composition theorem is adopted for the overall privacy budget~(\ie, $\varepsilon$) computation.}
    \label{fig: traverse-ncr-rdp}
\end{figure*}

\begin{figure*}[htbp!]
    \centering
    \subfigure[Binary dataset]{
    \includegraphics[width=5cm]{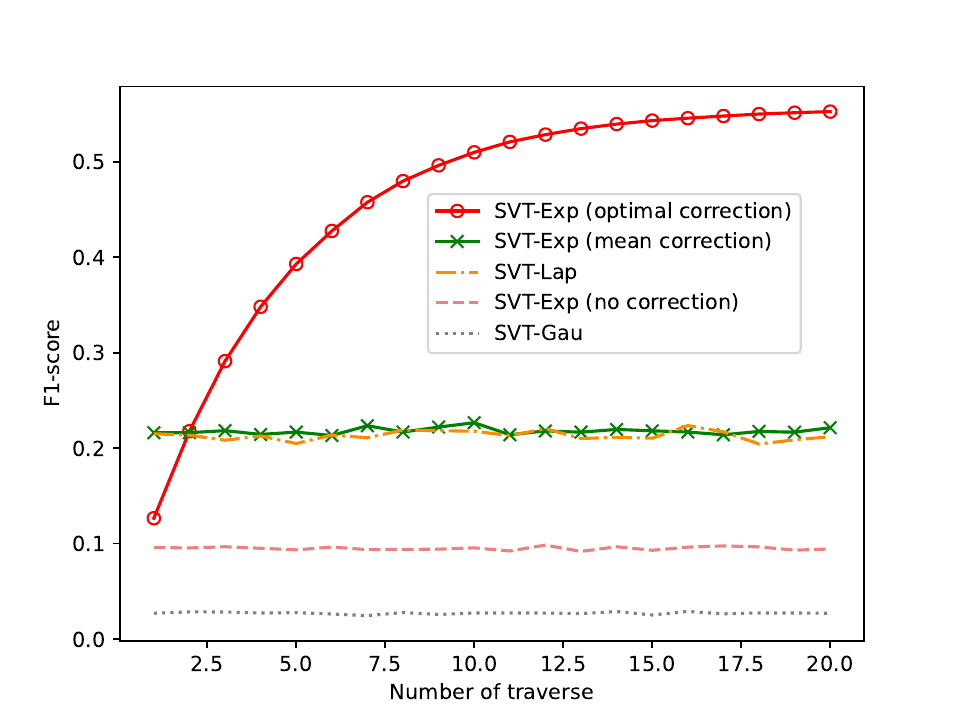}
    }
    \subfigure[BMS-POS dataset]{
    \includegraphics[width=5cm]{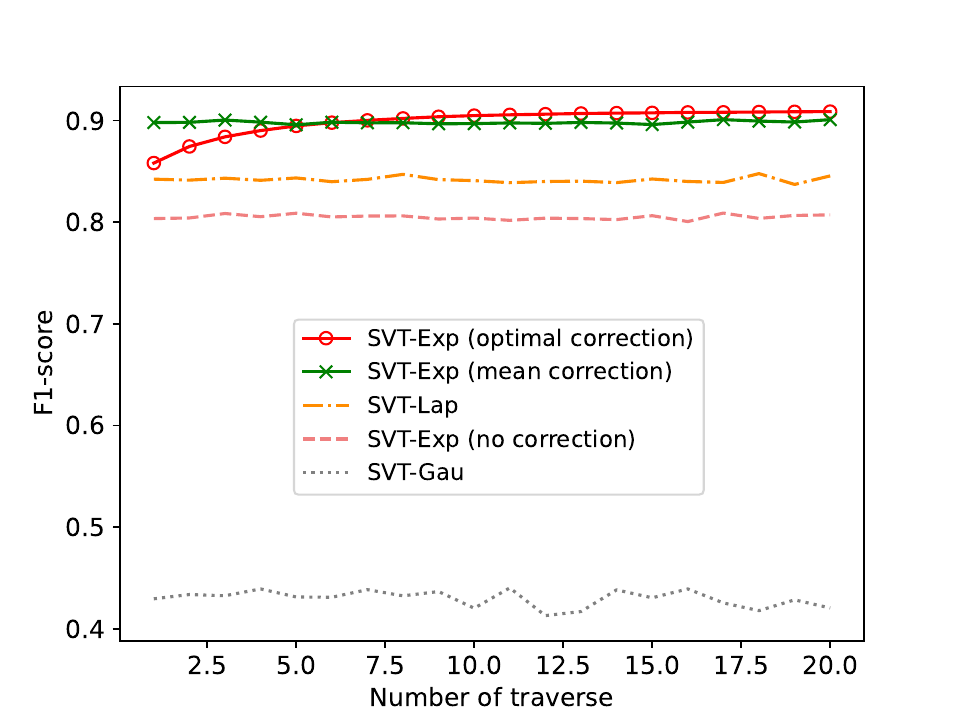}
    }
    \subfigure[Kosarak dataset]{
    \includegraphics[width=5cm]{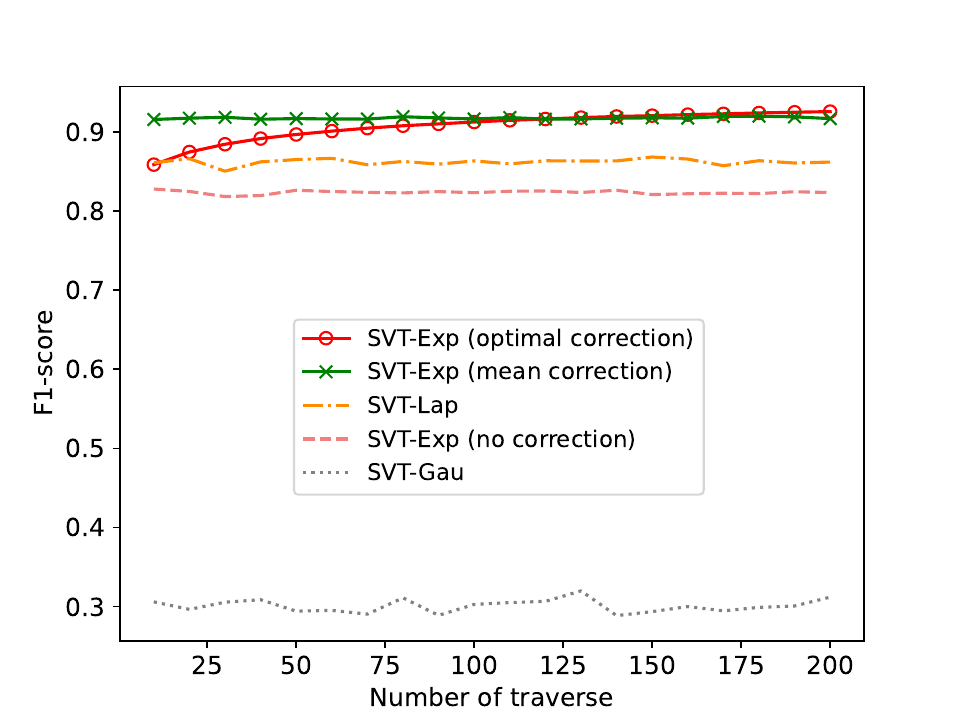}
    }
    \caption{F1-score on three datasets under varying number of traverse with $c=50$, $\alpha=0$, $k=\lfloor\frac{m}{c}\rfloor$. The RDP composition theorem is adopted for the overall privacy budget~(\ie, $\varepsilon$) computation.}
    \label{fig: traverse-f1-rdp}
\end{figure*}

\section{Impact of the Error Tolerance Parameter}
In this section, we analyze how the error tolerance parameter $\alpha$ in Equation~\ref{eq: success-rate} affects the query accuracy of our proposed method. Specifically, we present the NCR of our method with varying values of $\alpha$ on the Binary dataset, using the sequential composition theorem, as shown in Figure~\ref{fig: error_tolerance}.
Figure~\ref{fig: error_tolerance} demonstrates that our method remains robust to different values of $\alpha$ due to the effectiveness of the appending strategy. This robustness implies that setting $\alpha$ to $0$ across various data distributions generally yields satisfactory performance. However, it is important to note that a higher success probability in Figure~\ref{fig: correction-with-alpha} does not always correspond to a higher NCR. As discussed in Section~\ref{subsec: correction_methodology}, a larger $\alpha$ increases the success probability by allowing more queries to be mistakenly classified, which can lead to a decrease in NCR.
For the Binary dataset, setting $\alpha$ to $500$, which is equal to the gap between query results and the predefined threshold, results in slightly better performance compared to other settings. The rationale behind this is straightforward: increasing $\alpha$ within the range $[0,\lvert q(D)-T\rvert]$ does not negatively impact accuracy, as no query result falls within this interval. Thus, this choice of $\alpha$ optimizes the trade-off between accuracy and error tolerance in this particular context.
\begin{figure}[htbp]
    \centering
    \includegraphics[width=0.5\linewidth]{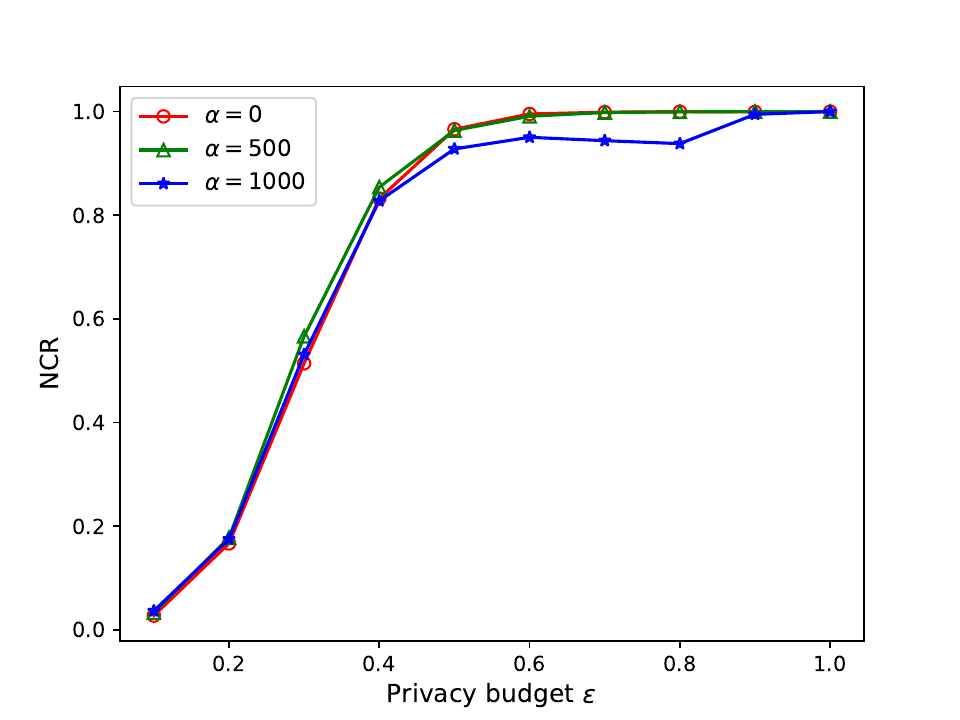}
    \caption{NCR of Algorithm~\ref{alg: exp-svt} with varying $\alpha$ on Binary dataset. The parameter $c$ is set to $50$, and the overall privacy budget $\varepsilon$ ranges from $0.1$ to $1$.}
    \label{fig: error_tolerance}
\end{figure}
\begin{figure}[htbp]
    \centering
    \includegraphics[width=0.5\linewidth]{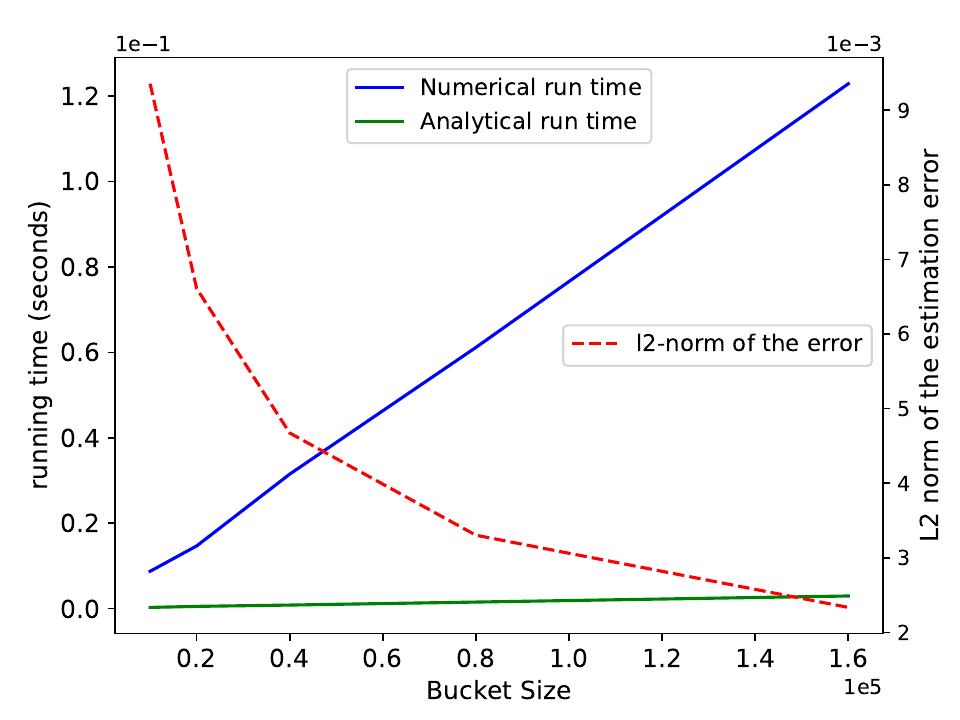}
    \caption{The trade-off of the numerical threshold correction method between estimation accuracy of $\bar{p}(r)$ in Algorithm \ref{alg: threshold_correction} and the algorithm running time.}
    \label{fig: threshold_efficiency}
\end{figure}
\section{Computation Efficiency of the Optimal Threshold Correction Term}\label{sec: efficiency_numerical}
In this section, we analyze the trade-off between the running time of our numerical optimal threshold correction framework and the accuracy of the estimated optimal threshold correction term $r^{op}$, which is derived from the success probability $p(r)$ in Equation~\ref{eq: success-rate}. The primary computational cost of our numerical framework arises from the convolution between two discretized distributions. The time complexity for this convolution is $o(m\log(m))$, where $m$ represents the bucket size in Algorithm~\ref{alg: threshold_correction} and Algorithm~\ref{alg: discretizer}.
Figure~\ref{fig: threshold_efficiency} illustrates the running time of our numerical framework as the bucket size increases, compared to the running time of the analytical computation using Equation~\ref{eq: success-rate-analytical}. The figure shows that while the analytical computation time remains constant regardless of bucket size, the running time of our numerical framework increases linearly with the bucket size.
Meanwhile, the $l_2$ norm of the estimation error of $\bar{p}(r)$ in Algorithm~\ref{alg: threshold_correction} decreases drastically as the bucket size increases.
However, it is also worth noting that even with the smallest bucket size tested in Figure~\ref{fig: threshold_efficiency} that incurs only less than $0.01$ seconds of additional computation cost, our numerical framework achieves a $l_2$ norm error in the estimation of $\bar{p}(r)$ of less than $0.01$. This estimation error scale on $\bar{p}(r)$ leads to almost no estimation error on $\bar{r}^{op}$. This is also to say, that our numerical framework achieves a very high estimation accuracy on $\bar{r}^{op}$ with only a marginal additional computation cost.

\bibliographystyle{unsrtnat}


\begin{thebibliography}{42}
\providecommand{\natexlab}[1]{#1}
\providecommand{\url}[1]{\texttt{#1}}
\expandafter\ifx\csname urlstyle\endcsname\relax
  \providecommand{\doi}[1]{doi: #1}\else
  \providecommand{\doi}{doi: \begingroup \urlstyle{rm}\Url}\fi

\bibitem[Dwork et~al.(2009)Dwork, Naor, Reingold, Rothblum, and Vadhan]{dwork2009complexity}
Cynthia Dwork, Moni Naor, Omer Reingold, Guy~N Rothblum, and Salil Vadhan.
\newblock On the complexity of differentially private data release: efficient algorithms and hardness results.
\newblock In \emph{Proceedings of the forty-first annual ACM symposium on Theory of computing}, pages 381--390, 2009.

\bibitem[Dwork et~al.(2014)Dwork, Roth, et~al.]{dwork2014algorithmic}
Cynthia Dwork, Aaron Roth, et~al.
\newblock The algorithmic foundations of differential privacy.
\newblock \emph{Foundations and Trends{\textregistered} in Theoretical Computer Science}, 9\penalty0 (3--4):\penalty0 211--407, 2014.

\bibitem[Bassily et~al.(2018)Bassily, Thakkar, and Guha~Thakurta]{bassily2018model}
Raef Bassily, Om~Thakkar, and Abhradeep Guha~Thakurta.
\newblock Model-agnostic private learning.
\newblock \emph{Advances in Neural Information Processing Systems}, 31, 2018.

\bibitem[Hasidim et~al.(2020)Hasidim, Kaplan, Mansour, Matias, and Stemmer]{hasidim2020adversarially}
Avinatan Hasidim, Haim Kaplan, Yishay Mansour, Yossi Matias, and Uri Stemmer.
\newblock Adversarially robust streaming algorithms via differential privacy.
\newblock \emph{Advances in Neural Information Processing Systems}, 33:\penalty0 147--158, 2020.

\bibitem[Lyu et~al.(2017)Lyu, Su, and Li]{lyu2017understanding}
Min Lyu, Dong Su, and Ninghui Li.
\newblock Understanding the sparse vector technique for differential privacy.
\newblock \emph{Proceedings of the VLDB Endowment}, 10\penalty0 (6), 2017.

\bibitem[Zhu and Wang(2020)]{zhu2020improving}
Yuqing Zhu and Yu-Xiang Wang.
\newblock Improving sparse vector technique with renyi differential privacy.
\newblock \emph{Advances in Neural Information Processing Systems}, 33:\penalty0 20249--20258, 2020.

\bibitem[Erlingsson et~al.(2014)Erlingsson, Pihur, and Korolova]{erlingsson2014rappor}
{\'U}lfar Erlingsson, Vasyl Pihur, and Aleksandra Korolova.
\newblock Rappor: Randomized aggregatable privacy-preserving ordinal response.
\newblock In \emph{Proceedings of the 2014 ACM SIGSAC conference on computer and communications security}, pages 1054--1067, 2014.

\bibitem[Dwork(2006)]{dwork2006differential}
Cynthia Dwork.
\newblock Differential privacy.
\newblock In \emph{International colloquium on automata, languages, and programming}, pages 1--12. Springer, 2006.

\bibitem[Geng et~al.(2019)Geng, Ding, Guo, and Kumar]{geng2019optimal}
Quan Geng, Wei Ding, Ruiqi Guo, and Sanjiv Kumar.
\newblock Optimal noise-adding mechanism in additive differential privacy.
\newblock In \emph{The 22nd International Conference on Artificial Intelligence and Statistics}, pages 11--20. PMLR, 2019.

\bibitem[Dwork et~al.(2010)Dwork, Rothblum, and Vadhan]{dwork2010boosting}
Cynthia Dwork, Guy~N Rothblum, and Salil Vadhan.
\newblock Boosting and differential privacy.
\newblock In \emph{2010 IEEE 51st annual symposium on foundations of computer science}, pages 51--60. IEEE, 2010.

\bibitem[Kairouz et~al.(2015)Kairouz, Oh, and Viswanath]{kairouz2015composition}
Peter Kairouz, Sewoong Oh, and Pramod Viswanath.
\newblock The composition theorem for differential privacy.
\newblock In \emph{International conference on machine learning}, pages 1376--1385. PMLR, 2015.

\bibitem[Meiser and Mohammadi(2018)]{meiser2018tight}
Sebastian Meiser and Esfandiar Mohammadi.
\newblock Tight on budget? tight bounds for r-fold approximate differential privacy.
\newblock In \emph{Proceedings of the 2018 ACM SIGSAC Conference on Computer and Communications Security}, pages 247--264, 2018.

\bibitem[Gopi et~al.(2021)Gopi, Lee, and Wutschitz]{gopi2021numerical}
Sivakanth Gopi, Yin~Tat Lee, and Lukas Wutschitz.
\newblock Numerical composition of differential privacy.
\newblock \emph{Advances in Neural Information Processing Systems}, 34:\penalty0 11631--11642, 2021.

\bibitem[Mironov(2017)]{mironov2017renyi}
Ilya Mironov.
\newblock R{\'e}nyi differential privacy.
\newblock In \emph{2017 IEEE 30th computer security foundations symposium (CSF)}, pages 263--275. IEEE, 2017.

\bibitem[Wang et~al.(2017)Wang, Blocki, Li, and Jha]{wang2017locally}
Tianhao Wang, Jeremiah Blocki, Ninghui Li, and Somesh Jha.
\newblock Locally differentially private protocols for frequency estimation.
\newblock In \emph{26th USENIX Security Symposium (USENIX Security 17)}, pages 729--745, 2017.

\bibitem[Ye et~al.(2019)Ye, Hu, Meng, and Zheng]{ye2019privkv}
Qingqing Ye, Haibo Hu, Xiaofeng Meng, and Huadi Zheng.
\newblock Privkv: Key-value data collection with local differential privacy.
\newblock In \emph{2019 IEEE Symposium on Security and Privacy (SP)}, pages 317--331. IEEE, 2019.

\bibitem[Liu et~al.(2022)Liu, Zhao, Liu, Zhao, Chen, and Li]{liu2022collecting}
Yuhan Liu, Suyun Zhao, Yixuan Liu, Dan Zhao, Hong Chen, and Cuiping Li.
\newblock Collecting triangle counts with edge relationship local differential privacy.
\newblock In \emph{2022 IEEE 38th International Conference on Data Engineering (ICDE)}, pages 2008--2020. IEEE, 2022.

\bibitem[Liu et~al.(2023)Liu, Zhao, Xiong, Liu, and Chen]{liu2023echo}
Yixuan Liu, Suyun Zhao, Li~Xiong, Yuhan Liu, and Hong Chen.
\newblock Echo of neighbors: Privacy amplification for personalized private federated learning with shuffle model.
\newblock \emph{arXiv preprint arXiv:2304.05516}, 2023.

\bibitem[Nussbaumer and Nussbaumer(1982)]{nussbaumer1982fast}
Henri~J Nussbaumer and Henri~J Nussbaumer.
\newblock \emph{The fast Fourier transform}.
\newblock Springer, 1982.

\bibitem[Shakeri and Pedram(2012)]{qty}
Omid Shakeri and Mir Pedram.
\newblock {QtyT40I10D100K}.
\newblock UCI Machine Learning Repository, 2012.
\newblock {DOI}: https://doi.org/10.24432/C5360W.

\bibitem[Zheng et~al.(2001)Zheng, Kohavi, and Mason]{zheng2001real}
Zijian Zheng, Ron Kohavi, and Llew Mason.
\newblock Real world performance of association rule algorithms.
\newblock In \emph{Proceedings of the seventh ACM SIGKDD international conference on Knowledge discovery and data mining}, pages 401--406, 2001.

\bibitem[Aggarwal et~al.(2009)Aggarwal, Li, Wang, and Wang]{aggarwal2009frequent}
Charu~C Aggarwal, Yan Li, Jianyong Wang, and Jing Wang.
\newblock Frequent pattern mining with uncertain data.
\newblock In \emph{Proceedings of the 15th ACM SIGKDD international conference on Knowledge discovery and data mining}, pages 29--38, 2009.

\bibitem[Becker and Kohavi(1996)]{adult}
Barry Becker and Ronny Kohavi.
\newblock {Adult}.
\newblock UCI Machine Learning Repository, 1996.
\newblock {DOI}: https://doi.org/10.24432/C5XW20.

\bibitem[Lee and Clifton(2014)]{lee2014top}
Jaewoo Lee and Christopher~W Clifton.
\newblock Top-k frequent itemsets via differentially private fp-trees.
\newblock In \emph{Proceedings of the 20th ACM SIGKDD international conference on Knowledge discovery and data mining}, pages 931--940, 2014.

\bibitem[Carvalho et~al.(2020)Carvalho, Wang, Gondara, and Miao]{carvalho2020differentially}
Ricardo~Silva Carvalho, Ke~Wang, Lovedeep Gondara, and Chunyan Miao.
\newblock Differentially private top-k selection via stability on unknown domain.
\newblock In \emph{Conference on Uncertainty in Artificial Intelligence}, pages 1109--1118. PMLR, 2020.

\bibitem[Fawcett(2006)]{fawcett2006introduction}
Tom Fawcett.
\newblock An introduction to roc analysis.
\newblock \emph{Pattern recognition letters}, 27\penalty0 (8):\penalty0 861--874, 2006.

\bibitem[Roth and Roughgarden(2010)]{roth2010interactive}
Aaron Roth and Tim Roughgarden.
\newblock Interactive privacy via the median mechanism.
\newblock In \emph{Proceedings of the forty-second ACM symposium on Theory of computing}, pages 765--774, 2010.

\bibitem[Kaplan et~al.(2021)Kaplan, Mansour, and Stemmer]{kaplan2021sparse}
Haim Kaplan, Yishay Mansour, and Uri Stemmer.
\newblock The sparse vector technique, revisited.
\newblock In \emph{Conference on Learning Theory}, pages 2747--2776. PMLR, 2021.

\bibitem[Chen et~al.(2015)Chen, Xiao, Zhang, and Xu]{chen2015differentially}
Rui Chen, Qian Xiao, Yu~Zhang, and Jianliang Xu.
\newblock Differentially private high-dimensional data publication via sampling-based inference.
\newblock In \emph{Proceedings of the 21th ACM SIGKDD international conference on knowledge discovery and data mining}, pages 129--138, 2015.

\bibitem[Stoddard et~al.(2014)Stoddard, Chen, and Machanavajjhala]{stoddard2014differentially}
Ben Stoddard, Yan Chen, and Ashwin Machanavajjhala.
\newblock Differentially private algorithms for empirical machine learning.
\newblock \emph{arXiv preprint arXiv:1411.5428}, 2014.

\bibitem[Shokri and Shmatikov(2015)]{shokri2015privacy}
Reza Shokri and Vitaly Shmatikov.
\newblock Privacy-preserving deep learning.
\newblock In \emph{Proceedings of the 22nd ACM SIGSAC conference on computer and communications security}, pages 1310--1321, 2015.

\bibitem[Bun et~al.(2017)Bun, Steinke, and Ullman]{bun2017make}
Mark Bun, Thomas Steinke, and Jonathan Ullman.
\newblock Make up your mind: The price of online queries in differential privacy.
\newblock In \emph{Proceedings of the twenty-eighth annual ACM-SIAM symposium on discrete algorithms}, pages 1306--1325. SIAM, 2017.

\bibitem[Zhang et~al.(2021)Zhang, Mironov, and Hejazinia]{zhang2021wide}
Huanyu Zhang, Ilya Mironov, and Meisam Hejazinia.
\newblock Wide network learning with differential privacy.
\newblock \emph{arXiv preprint arXiv:2103.01294}, 2021.

\bibitem[Kaplan et~al.(2023)Kaplan, Mansour, Moran, Nissim, and Stemmer]{kaplan2023differentially}
Haim Kaplan, Yishay Mansour, Shay Moran, Kobbi Nissim, and Uri Stemmer.
\newblock On differentially private online predictions.
\newblock \emph{arXiv preprint arXiv:2302.14099}, 2023.

\bibitem[Ding et~al.(2023)Ding, Wang, Xiao, Wang, Zhang, and Kifer]{ding2023free}
Zeyu Ding, Yuxin Wang, Yingtai Xiao, Guanhong Wang, Danfeng Zhang, and Daniel Kifer.
\newblock Free gap estimates from the exponential mechanism, sparse vector, noisy max and related algorithms.
\newblock \emph{The VLDB Journal}, 32\penalty0 (1):\penalty0 23--48, 2023.

\bibitem[Durfee and Rogers(2019)]{durfee2019practical}
David Durfee and Ryan~M Rogers.
\newblock Practical differentially private top-k selection with pay-what-you-get composition.
\newblock \emph{Advances in Neural Information Processing Systems}, 32, 2019.

\bibitem[Shekelyan and Loukides(2022)]{shekelyan2022differentially}
Michael Shekelyan and Grigorios Loukides.
\newblock Differentially private top-k selection via canonical lipschitz mechanism.
\newblock \emph{arXiv preprint arXiv:2201.13376}, 2022.

\bibitem[Qiao et~al.(2021)Qiao, Su, and Zhang]{qiao2021oneshot}
Gang Qiao, Weijie Su, and Li~Zhang.
\newblock Oneshot differentially private top-k selection.
\newblock In \emph{International Conference on Machine Learning}, pages 8672--8681. PMLR, 2021.

\bibitem[Wasserman and Zhou(2010)]{wasserman2010statistical}
Larry Wasserman and Shuheng Zhou.
\newblock A statistical framework for differential privacy.
\newblock \emph{Journal of the American Statistical Association}, 105\penalty0 (489):\penalty0 375--389, 2010.

\bibitem[Ding et~al.(2021)Ding, Kifer, Steinke, Wang, Xiao, Zhang, et~al.]{ding2021permute}
Zeyu Ding, Daniel Kifer, Thomas Steinke, Yuxin Wang, Yingtai Xiao, Danfeng Zhang, et~al.
\newblock The permute-and-flip mechanism is identical to report-noisy-max with exponential noise.
\newblock \emph{arXiv preprint arXiv:2105.07260}, 2021.

\bibitem[McKenna and Sheldon(2020)]{mckenna2020permute}
Ryan McKenna and Daniel~R Sheldon.
\newblock Permute-and-flip: A new mechanism for differentially private selection.
\newblock \emph{Advances in Neural Information Processing Systems}, 33:\penalty0 193--203, 2020.

\bibitem[Cahillane(2024)]{Craig}
Craig Cahillane.
\newblock Sum of exponential and laplace distributions, 2024.
\newblock URL \url{https://ccahilla.github.io/sum_exponential_and_laplace_distributions.pdf}.

\end{thebibliography}





\end{document}